\documentclass[10pt]{elsarticle}

\usepackage[T1]{fontenc}
\usepackage[utf8]{inputenc}
\usepackage{amsmath,amsthm,amsfonts,amssymb,amscd}
\usepackage[numbers]{natbib}
\usepackage[colorlinks,citecolor=blue,urlcolor=blue]{hyperref}
\usepackage{xcolor,graphicx,subcaption,booktabs,multicol,siunitx,threeparttable,multirow,latexsym,array,hhline,float,appendix,caption,epstopdf,tkz-tab}
\usepackage{circuitikz,tikz}

\everymath{\displaystyle}

\newcommand{\cA}{{\cal A}}  
\newcommand{\cF}{{\cal F}}  
\newcommand{\cB}{{\cal B}} \newcommand{\cH}{{\cal H}}

\newcommand{\cP}{{\cal P}}

\newcommand{\R}{\mathbb{R}} \newcommand{\N}{\mathbb{N}}

\DeclareMathOperator{\essinf}{ess\,inf}
\DeclareMathOperator{\esssup}{ess\,sup}
\DeclareMathOperator{\supp}{supp}

\newtheorem{theo}{Theorem}[section]
\newtheorem{prop}[theo]{Proposition}
\newtheorem{lemm}[theo]{Lemma}
\newtheorem{defi}[theo]{Definition}

\newtheorem{coro}[theo]{Corollary}
\newtheorem{rem}[theo]{Remark}
\newtheorem{Proof}[theo]{Proof}
\newcommand{\fdem}{\hfill$\Box$} 

\newcommand{\beq}{\begin{equation}}
\newcommand{\eeq}{\end{equation}}
\newcommand{\beqa}{\begin{equation*}}
\newcommand{\eeqa}{\end{equation*}}
\newcommand{\bea}{\begin{eqnarray}}
\newcommand{\eea}{\end{eqnarray}}
\newcommand{\bean}{\begin{eqnarray*}}
\newcommand{\eean}{\end{eqnarray*}}

\pagestyle{plain}

\begin{document}


\begin{frontmatter}

\title{Explicit Recursive Construction of Super-Replication Prices under Proportional Transaction Costs}

\author[A1]{Emmanuel LEPINETTE}
\author[A1]{Amal OMRANI}

\address[A1]{CEREMADE, UMR CNRS 7534, Paris-Dauphine University, PSL,\\
Place du Maréchal De Lattre De Tassigny, 75775 Paris Cedex 16, France.\\
Emails: emmanuel.lepinette@ceremade.dauphine.fr, omrani@ceremade.dauphine.fr}

\begin{abstract}
We propose a constructive framework for the super-hedging problem of a European contingent claim under proportional transaction costs in discrete time. 
Our main contribution is an explicit recursive scheme that computes both the super-hedging price and the corresponding optimal strategy without relying on martingale arguments. 
The method is based on convex duality and a distorted Legendre--Fenchel transform, ensuring both tractability and convexity of the value functions. 
A numerical implementation on real market data illustrates the practical
relevance of the proposed approach.
\end{abstract}

\begin{keyword}
Super-hedging problem \sep
Proportional transaction costs \sep
AIP condition \sep
Distorted Legendre--Fenchel transform \sep
Dynamic programming principle
\end{keyword}

\end{frontmatter}
\section{Introduction}
Let $(\Omega,(\mathcal{F}_t)_{t=0,\dots,T},\mathbb{P})$ be a complete filtered probability space in discrete time, with maturity date $T$ for a European contingent claim. We consider a financial market consisting of a risk-free bond $S^0 = 1$ (w.l.o.g.) and a risky asset described by a positive adapted stochastic process $S=(S_t)_{t=0,\dots,T}$. Trading strategies are given by adapted processes $\phi=(\phi_t)_{t=0,\dots,T}$ representing the number of units held in each asset. For any stochastic process $X=(X_t)_{t=0,\dots,T}$, we denote $\Delta X_t := X_t - X_{t-1}$, $t\ge 1$, and random variables are defined up to negligible sets.

We consider a deterministic sequence of proportional transaction costs $(\kappa_t)_{t=0}^{T-1}$, with $\kappa_t \in [0,1)$, representing the transaction costs rate for trades between times $t$ and $t+1$. A self-financing portfolio $(V_t)_{t=0,\dots,T}$ satisfies by definition
\begin{equation} \label{Dyn}
\Delta V_t = \phi_{t-1}\Delta S_t - \kappa_{t-1}|\Delta \phi_{t-1}| S_{t-1}, \quad t=1,\dots,T,
\end{equation}
where the term $\kappa_{t-1}|\Delta \phi_{t-1}| S_{t-1}$ represents the transaction costs for adjusting the strategy from $\phi_{t-2}$ to $\phi_{t-1}$. The portfolio value $V_t$ is interpreted as the super-hedging price of a payoff $\xi_T$ at time $t$ if $V_T \ge \xi_T$ almost surely. 

The model we consider is defined by deterministic coefficients $\alpha_{t-1}$ and $\beta_{t-1}$ such that $0 \le \alpha_{t-1} < \beta_{t-1}$ and the conditional support of $\frac{S_t}{S_{t-1}}$ is given by
\bea \label{MS}
{\rm supp}_{\mathcal{F}_{t-1}} \frac{S_t}{S_{t-1}} = [\alpha_{t-1}, \beta_{t-1}],\quad t=1,\cdots,T.
\eea

The computation of super-hedging prices under proportional transaction costs is well known to be challenging.
Continuous-time asymptotic approaches, beginning with Leland (1985) \cite{Leland85} and followed by Kabanov and Safarian (1997) \cite{KS}, provide approximate hedging strategies that account for transaction costs.
Kabanov and Lepinette \cite{KL}, \cite{Lep0} further analyzed the mean-square error of Leland-type strategies for convex payoffs.
Permenshchikov \cite{Perg} established limit theorems for Leland’s strategy, and later, together with Thai  \cite{PT1,PT2}, investigated approximate hedging in markets with jumps and stochastic volatility. Also, some asymptotics are given in \cite{Albanese06}, in the setting of Leland's framework for small transaction costs while the case of fixed costs is solved in \cite{Lep1}. A recent study by Biagini et al. \cite{Biagini23} proposes the use of neural networks to approximate super-hedging prices in incomplete markets, within the classical framework based on martingale probability measures.The effect of transaction costs can also be modeled through price impacts, following the approach in \cite{WG}.

In discrete time, only a few numerical schemes have been proposed to compute super-hedging prices and corresponding optimal strategies under transaction costs.
For instance, Föllmer and Schweizer  \cite{FK97}, Bouchard et al.  \cite{BSV03}, Rouge and El Karoui \cite{Rouge}, and Gobet and Miri  \cite{GM03} develop methods within proportional transaction costs settings.
These approaches typically rely on the existence of a risk-neutral measure, on restricted asset dynamics (e.g., binomial or specific diffusion models), or on asymptotic approximations as the number of trading dates becomes large.
Consequently, the resulting prices are often approximate, asymptotic, or model-specific, which limits their applicability for a fixed, moderate number of trading dates or for general discrete-time markets.

In the framework of the Kabanov model with proportional transaction costs, dual formulations of the super-hedging prices have been established under strong no-arbitrage conditions ensuring the existence of dual elements, called consistent price systems; see, for instance, \cite{KS,GLR,DKL,CS,Koehl01}.
However, these results remain mostly theoretical and cannot be implemented in practice.
 
In contrast, our method provides a computational scheme that can estimate exact super-hedging prices for a fixed number of trading dates, without assuming a risk-neutral measure or full no-arbitrage conditions. Only a weak assumption is required, ensuring that the infimum super-hedging price of any non-negative payoff is itself non-negative. The method delivers prices valid  for any proportional transaction costs rate and simultaneously yields the corresponding optimal trading strategy at each date.

The main idea is to propagate backward the pricing method. For a convex payoff $g_T(S_T)$ at time $T$, we show that at time $T-1$ there exists a minimal super-hedging price $P_{T-1}^*(\phi_{T-2})$ depending on the strategy $\phi_{T-2}$ chosen at time $T-2$. Iterating this procedure defines a sequence of convex functions $g_t(\phi_{t-1}, S_t)$, $t = T-1, \dots, 0$, characterizing both the super-hedging prices and the optimal strategies at each date.

Technically, the approach relies on distorted Legendre-Fenchel conjugates, extending the classical conjugate method used in frictionless markets \cite{CL}, \cite{ElMansour22}. This allows the explicit computation of $P_t^*(\phi_{t-1})$ as a function of the previous strategy, a novelty in the presence of transaction costs. Furthermore, the method can be extended to non-convex payoffs as well as to Asian and American options, although the computations become more involved. It is important to note that our method is highly innovative and relies on optimization techniques. For instance, \cite{VZ} also adopt an original approach, rather than the classical martingale-based framework.

The paper is organized as follows. Section \ref{MR} presents the main results. Section \ref{NI} shows an example of implementation on real data from the US index S\&P 500. In Section \ref{section propagation}, the general one step method is developed. In Section \ref{A}, some proofs are postponed.

\section{Main results}\label{MR}
This section provides explicit recursive representations of the minimal superhedging prices in the presence of proportional transaction costs. Two distinct regimes are considered, depending on whether transaction costs are "large" or "small". The  computation of superhedging prices  over the time interval $[0, T]$ is based on a backward induction scheme: starting from the known terminal payoff at time $T$, the problem is solved step by step backward in time. At each intermediate date, this construction yields both the minimal superhedging price and the corresponding optimal trading strategy, which depend on the transaction costs rate and the earlier strategy.

Subsection~\ref{GeneralStepProc1} presents the solution of the general one-step problem, providing explicit formulas and conditions under which the infimum is attained. Subsection~\ref{MSP} extends this construction to the multi-step problem, leading to a complete characterization of the superhedging prices and associated optimal strategies over the entire time horizon. 

\subsection{The one step infimum superhedging price for a European claim}\label{GeneralStepProc1}
\bigskip

Suppose that, at time $t$, the payoff function depends on $\phi_{t-1} \in L^0(\R,\cF_{t-1})$, and, for some $N_t \in \mathbb{N}$, it admits the following form:
\bea\label{GeneralPayoffForm1} 
g_t(\phi_{t-1}, x) = \max_{i=1,\dots,N_t} g_t^i(\phi_{t-1}, x),
\eea
where, for each $i=1,\dots,N_t$, the mapping $x \mapsto g_t^i(\phi_{t-1}, x)$ is convex and satisfies
\bea \label{formGi1} 
g_t^i(\phi_{t-1}, x) = \hat{g}_t^i(x) - \hat{\mu}_t^i \phi_{t-1} x,
\eea
where $(\hat{\mu}_t^i)_{i=1}^{N_t}$ is a deterministic and strictly increasing sequence such that 
\bea \label{formGi-11}  
1 + \hat{\mu}_t^i > 0, \quad i=1,\dots,N_t,
\eea  
while the functions $\hat{g}_t^i$ are independent of $\phi_{t-1}$.\footnote{For $t=T$, we have $N_T=1$, $\hat\mu_T^1=0$, and $\hat g_T=g_T$. If $\hat \mu_t^i=\hat \mu_t^{i+1}$, both $g_t^i$ and $g_t^{i+1}$ can be replaced by $(\phi_{t-1},x)\mapsto\max(\hat g_t^i(x), \hat g_t^{i+1}(x))-\hat \mu_t^i \phi_{t-1}x$.}
\smallskip

We introduce the scaled parameters (see \ref{MS}) and auxiliary functions:
\bea
&& \alpha_{t-1}^i = \alpha_{t-1}(1+\hat\mu_t^i), \quad i=1,\dots,N_t,\\
&& \beta_{t-1}^i = \beta_{t-1}(1+\hat\mu_t^i), \quad i=1,\dots,N_t,\\
&& \tilde{g}_{t-1}^i(x) =
\begin{cases}
\hat{g}_t^i(\alpha_{t-1} x), & i=1,\dots,N_t,\\[1mm]
\hat{g}_t^{i-N_t}(\beta_{t-1} x), & i=N_t+1,\dots,2N_t.
\end{cases}
 \label{tilde-g}
\eea

The following definition provides the minimal no-arbitrage condition under which our results are formulated.

\begin{defi}\label{AIPrel} 
An \emph{immediate profit} at time $t-1$ \emph{relatively to} the payoff function $g_t$  is an infimum price $p_{t-1}(g_t)$  satisfying $\mathbb{P}(p_{t-1}(g_t)=-\infty)>0$ and with $\theta_{t-2}=0$. 
We say that the relative AIP condition holds for $g_t$ at time $t-1$ if there exists no immediate profit at time $t-1$ relative to $g_t$.
\end{defi}

The following result is a consequence of Theorem \ref{optimization}.

\begin{prop}\label{AIP-tminus1}  
 Let $g_t$ be a convex payoff function of the form \eqref{GeneralPayoffForm1}.
The AIP condition holds at time $t-1$ relatively to $g_t$ if and only if
\begin{equation}\label{AIPC}
  \alpha_{t-1}^1  \leq 1+\kappa_{t-1} \quad \text{and} \quad \beta^{N_t}_{t-1}  \geq 1- \kappa_{t-1}.  
\end{equation}

\end{prop}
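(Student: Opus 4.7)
My plan is to reformulate $p_{t-1}(g_t)$ as the $\omega$-wise infimum of an explicit convex function of $\phi_{t-1}$, and then identify \eqref{AIPC} as the condition guaranteeing that this infimum is finite, namely the correct sign of the two one-sided recession slopes. Setting $\phi_{t-2}=0$ in \eqref{Dyn} gives $V_t=p_{t-1}+\phi_{t-1}(S_t-S_{t-1})-\kappa_{t-1}|\phi_{t-1}|S_{t-1}$; imposing $V_t\ge g_t(\phi_{t-1},S_t)$ on the conditional support \eqref{MS} and minimising over admissible initial wealths yields $p_{t-1}(g_t)=\essinf_{\phi\in L^0(\R,\mathcal{F}_{t-1})} F(\phi)$ with
\begin{equation*}
F(\phi):=\sup_{S_t\in[\alpha_{t-1}S_{t-1},\beta_{t-1}S_{t-1}]}\bigl\{g_t(\phi,S_t)-\phi S_t\bigr\}+\phi S_{t-1}+\kappa_{t-1}|\phi|S_{t-1}.
\end{equation*}
Using \eqref{GeneralPayoffForm1}--\eqref{formGi1}, the bracket equals $\max_i\bigl[\hat g_t^i(S_t)-(1+\hat\mu_t^i)\phi S_t\bigr]$, a max of convex functions of $S_t$, so the inner sup is attained at an endpoint; substituting \eqref{tilde-g} and the scaled parameters gives
\begin{equation*}
F(\phi)=\max_{j=1,\dots,2N_t}\bigl\{\tilde g_{t-1}^j(S_{t-1})-\gamma_{t-1}^j\phi S_{t-1}\bigr\}+\phi S_{t-1}+\kappa_{t-1}|\phi|S_{t-1},
\end{equation*}
with $\gamma_{t-1}^j=\alpha_{t-1}^j$ for $j\le N_t$ and $\gamma_{t-1}^{N_t+i}=\beta_{t-1}^i$ otherwise.

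Since $F$ is convex in $\phi\in\R$, a recession-slope analysis tells me that $\essinf F>-\infty$ a.s.\ iff the right (resp. left) recession slope is $\ge 0$ (resp. $\le 0$). Because $\hat\mu_t^i$ is strictly increasing and $\alpha_{t-1}<\beta_{t-1}$, one has $\min_j\gamma_{t-1}^j=\alpha_{t-1}^1$ and $\max_j\gamma_{t-1}^j=\beta_{t-1}^{N_t}$. As $\phi\to+\infty$ the $\max$ in $F$ is controlled by $j=1$, producing right slope $(1+\kappa_{t-1}-\alpha_{t-1}^1)S_{t-1}$; as $\phi\to-\infty$ it is controlled by $j=2N_t$, producing $\lim_{\phi\to-\infty}F(\phi)/|\phi|=(\beta_{t-1}^{N_t}+\kappa_{t-1}-1)S_{t-1}$. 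Since $S_{t-1}>0$ a.s.\ and the coefficients are deterministic, $F$ is bounded below a.s.\ iff both inequalities in \eqref{AIPC} hold; conversely, if either fails, $F$ is unbounded below a.s.\ and one obtains an immediate profit in the sense of Definition \ref{AIPrel}.

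The main bookkeeping difficulty I anticipate is folding the double maximum (over $i$ and over the two endpoints of $S_t$) into a single $\max$ over $j=1,\dots,2N_t$ and identifying the dominant index at each infinity; the rest is routine convex analysis. Alternatively, the same conditions can be read off directly from the explicit formula for $p_{t-1}(g_t)$ provided by Theorem \ref{optimization}, of which this proposition is a corollary.
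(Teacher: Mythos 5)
Your argument is correct, and it reaches the equivalence by a genuinely more direct route than the paper. The paper obtains Proposition \ref{AIP-tminus1} as a corollary of Theorem \ref{optimization}: there, after passing through the distorted conjugate machinery ($f_{t-1}^*$, $\Phi_{\phi_{t-2}}$, $\bar h_{t-1}$, the map $a_{t-1}$ and Lemma \ref{KeyTool}), the price is identified as $\bar\varphi_{t-1}(S_{t-1})$, and the blow-up region $x<m_{t-1}^{1+}$ or $x>M_{t-1}^{N-}$ translates, via $m_{t-1}^{1+}=\alpha_{t-1}^1S_{t-1}/(1+\kappa_{t-1})$ and $M_{t-1}^{N-}=\beta_{t-1}^{N_t}S_{t-1}/(1-\kappa_{t-1})$, exactly into the negation of \eqref{AIPC}. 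You instead stay on the primal side: your $F(\phi)$ is precisely the paper's $P_{t-1}(\phi_{t-1})$ with $\phi_{t-2}=0$ (compare \eqref{replication1} and the formula preceding \eqref{setsuperhedgingprices}), and you test boundedness below in $\phi$ by recession slopes. This is shorter and self-contained for the purposes of this proposition, at the cost of not producing the explicit price formula that the paper's machinery delivers and reuses elsewhere. Two points should be made explicit to close the argument. First, the identification $p_{t-1}(g_t)=\inf_{\phi\in\R}F(\phi)$ $\omega$-wise requires the measurable-selection result (Proposition \ref{PropEssinfNI}, via normal integrands), as in the proof of Theorem \ref{infPrice}; the essential infimum over $L^0(\R,\cF_{t-1})$ does not reduce to a pointwise infimum for free. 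Second, your criterion ``bounded below iff the two extreme recession slopes have the right sign, with weak inequalities'' is false for general convex functions (a zero recession slope is compatible with $\inf F=-\infty$); it is valid here only because $F$ is convex and piecewise affine in $\phi$ with finitely many pieces, so that the infimum is attained whenever the extreme slopes satisfy $s_-\le 0\le s_+$. You should state this piecewise affineness explicitly, since it is exactly what upgrades the necessary condition to an equivalence. With these two clarifications, your proof is complete and consistent with the paper's statement.
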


\begin{rem}
This result is already known for $\kappa_{t-1}=0$, see \cite{CL}. As expected, the AIP condition holds as soon as the transaction costs rate is large enough. 
\end{rem}

The proof of the following theorem is given in Proof \ref{Pr-Prop-Conv}. It shows that the infimum super-hedging price at time $t-1$ is still a payoff function of $S_{t-1}$ in the same structural form as $g_t$. 

\begin{theo}\label{theo:payoff}
 Let $g_t$ be a convex payoff function of the form \eqref{GeneralPayoffForm1}. Suppose that the AIP condition of Definition~\ref{AIPrel} holds. 
Then the infimum superhedging price at time $t-1$ satisfies
\[
p_{t-1}(g_t) = g_{t-1}(\phi_{t-2}, S_{t-1}),
\]
where $g_{t-1}$ is a payoff function of the same structural form, satisfying~\eqref{formGi1}–\eqref{formGi-11} with $t$ replaced by $t-1$.
\end{theo}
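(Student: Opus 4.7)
My plan is to turn the super-hedging inequality at time $t-1$ into a deterministic piecewise-affine min-max problem in $\phi_{t-1}$, then read off both the minimal value and its structural form from Theorem~\ref{optimization}. First I would write $V_{t-1}$ super-hedges $g_t$ iff there is an $\mathcal{F}_{t-1}$-measurable $\phi_{t-1}$ with
\[
V_{t-1} \geq g_t(\phi_{t-1}, S_t) - \phi_{t-1}(S_t - S_{t-1}) + \kappa_{t-1}|\phi_{t-1} - \phi_{t-2}|S_{t-1}
\]
almost surely. Using the conditional support assumption~\eqref{MS}, this continuum constraint in $S_t$ reduces to the same inequality for every $y\in[\alpha_{t-1},\beta_{t-1}]$. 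For each fixed $i$, the map $y\mapsto \hat g_t^i(yS_{t-1}) - (1+\hat\mu_t^i)\phi_{t-1}yS_{t-1}$ is convex, hence its supremum on the interval is attained at an endpoint. Combining with the outer max over $i$ yields $2N_t$ affine constraints in $\phi_{t-1}$ whose intercepts and slopes are exactly the data $\tilde g_{t-1}^j$ and $\gamma^j\in\{\alpha_{t-1}^i,\beta_{t-1}^i\}$.

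Next I would rewrite $\kappa_{t-1}|\phi_{t-1}-\phi_{t-2}|S_{t-1} = \max_{s\in\{-1,+1\}}s\kappa_{t-1}(\phi_{t-1}-\phi_{t-2})S_{t-1}$ and fold this into the previous max, so that $p_{t-1}(g_t)$ becomes the $\omega$-wise infimum over $\phi_{t-1}$ of the convex piecewise-affine function
\[
\phi_{t-1}\mapsto \max_{(j,s)}\bigl[\tilde g_{t-1}^j(S_{t-1}) + (1-\gamma^j+s\kappa_{t-1})\phi_{t-1}S_{t-1} - s\kappa_{t-1}\phi_{t-2}S_{t-1}\bigr].
\]
Under the AIP condition~\eqref{AIPC}, the inequalities $\alpha_{t-1}^1\le 1+\kappa_{t-1}$ and $\beta_{t-1}^{N_t}\ge 1-\kappa_{t-1}$ ensure the set of slopes contains both nonnegative and nonpositive values, so the infimum is finite and attained at an intersection of some positive-slope line and some negative-slope line, the situation resolved by Theorem~\ref{optimization}. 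Using the elementary identity $\inf_{\phi}\max(A_1+B_1\phi,A_2+B_2\phi) = (B_1 A_2 - B_2 A_1)/(B_1-B_2)$ when $B_1>0>B_2$, each intersection value takes the form
\[
\frac{c^+\tilde g_{t-1}^{j^-}(S_{t-1})-c^-\tilde g_{t-1}^{j^+}(S_{t-1})}{c^+-c^-} - \frac{c^+s^--c^-s^+}{c^+-c^-}\kappa_{t-1}\phi_{t-2}S_{t-1},
\]
where $c^\pm = 1-\gamma^{j^\pm}+s^\pm\kappa_{t-1}$ with $c^+>0>c^-$. Taking the maximum over all such pairs and re-indexing them in strictly increasing order of $\hat\mu$ (merging ties as indicated in the footnote) produces $g_{t-1}(\phi_{t-2},S_{t-1})$ in the claimed form.

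The main obstacle will be verifying the structural constraints~\eqref{formGi1}--\eqref{formGi-11} on each new summand. Convexity in $x$ is preserved because the coefficients $c^+/(c^+-c^-)$ and $-c^-/(c^+-c^-)$ are strictly positive weights summing to one, so each new $\hat g_{t-1}^k$ is a convex combination of the convex functions $\tilde g_{t-1}^{j^\pm}$. Positivity $1+\hat\mu_{t-1}^k>0$ follows from the identity $1+\hat\mu_{t-1}^k = [c^+(1+s^-\kappa_{t-1})-c^-(1+s^+\kappa_{t-1})]/(c^+-c^-)$, whose numerator and denominator are both strictly positive, since $c^+>0>c^-$ and $1\pm\kappa_{t-1}>0$ for $\kappa_{t-1}\in[0,1)$. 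Once these are checked, the representation is finite (the number of pieces is bounded by the number of up--down pairs), and the induction step is complete.
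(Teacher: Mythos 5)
Your argument is correct, and it reaches the conclusion by a genuinely different and more elementary route than the paper. You minimize directly in the primal variable $\phi_{t-1}$: evaluating the conditional supremum at the endpoints of $\mathrm{supp}_{\mathcal{F}_{t-1}}(S_t/S_{t-1})$ and unfolding $|\phi_{t-1}-\phi_{t-2}|$ into a max over signs turns $P_{t-1}(\phi_{t-1})$ into a maximum of $4N_t$ affine functions of $\phi_{t-1}$, and the infimum of such a function is the maximum of the pairwise intersection values over up--down slope pairs; your verification that the resulting weights are a convex combination and that $1+\hat\mu_{t-1}^k>0$ is exactly what is needed for the structural form \eqref{GeneralPayoffForm1}--\eqref{formGi-11}, and your reading of the AIP condition as ``both signs of slopes occur'' matches Proposition~\ref{AIP-tminus1}. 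The paper instead goes through the distorted Legendre--Fenchel conjugate $\left(f_{t-1}^*\circ\Phi_{\phi_{t-2}}^{-1}\right)^*$, the auxiliary function $\bar h_{t-1}$, a constrained two-variable optimization in $(\alpha,a)$ with $a\ge a_{t-1}(\alpha)$ (Theorem~\ref{optimization}), and the convexity result of Theorem~\ref{ConvexityFundamental}; this heavier machinery is not needed for the present statement, but it is what produces the objects ($a_{t-1}$, the optimal $\alpha^*$, the partition of slopes, and the two regimes of large versus small transaction costs) that are reused later to write the explicit coefficients of Propositions~\ref{Prop-Conv11} and~\ref{Prop-Conv21} and the optimal strategies of Propositions~\ref{OptStrat11} and~\ref{OptStrat21}. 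Three points in your write-up deserve an explicit word: the passage from the essential infimum over $\phi_{t-1}\in L^0(\R,\mathcal{F}_{t-1})$ to the $\omega$-wise infimum over $\phi_{t-1}\in\R$ requires the normal-integrand argument of Proposition~\ref{PropEssinfNI}; the reduction of the almost-sure constraint in $S_t$ to a constraint at every point of the conditional support uses Proposition~\ref{esssup-sup}; and the degenerate boundary case in which the extremal slope vanishes (e.g.\ $\alpha_{t-1}^1=1+\kappa_{t-1}$) is not covered by the strict inequality $c^+>0>c^-$ and must be handled separately, as in the last statement of Lemma~\ref{KeyTool}. None of these is a substantive obstacle.
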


\bigskip

The following propositions are established under the AIP condition introduced in Definition~\ref{AIPrel}.
The equivalent AIP condition  of Proposition~\ref{AIP-tminus1} naturally separates into two distinct subcases that must be analyzed individually.
These results are derived from the proof of Theorem~\ref{theo:payoff}, see Proof \ref{Pr-Prop-Conv}, and provide an explicit representation of the payoff function at time $t-1$ in terms of the payoff function at time $t$, the transaction costs rate, and the coefficients describing the conditional support of $S_t/S_{t-1}$, see (\ref{MS}).

We introduce the following random function $a_{t-1}$, which plays a key role in characterizing the optimal strategy corresponding to the minimal superhedging price presented below:
\begin{equation}\label{Deltahedging}
a_{t-1}(v) := \max_{j = 2, \dots, 2N_t} \big( b_{t-1}^j - p_{t-1}^j v \big),
\end{equation}
where
\[
\begin{aligned}
p_{t-1}^j &= 
\begin{cases}
\dfrac{1}{(\alpha_{t-1}^j - \alpha_{t-1}^1) S_{t-1}}, & j = 2, \dots, N_t,\\[2mm]
\dfrac{1}{(\beta_{t-1}^{j - N_t} - \alpha_{t-1}^1) S_{t-1}}, & j = N_t + 1, \dots, 2N_t,
\end{cases} \\
b_{t-1}^j &= 
\begin{cases}
\dfrac{\tilde g_{t-1}^j(S_{t-1}) - \tilde g_{t-1}^1(S_{t-1})}{(\alpha_{t-1}^j - \alpha_{t-1}^1) S_{t-1}}, & j = 2, \dots, N_t,\\[2mm]
\dfrac{\tilde g_{t-1}^j(S_{t-1}) - \tilde g_{t-1}^1(S_{t-1})}{(\beta_{t-1}^{j - N_t} - \alpha_{t-1}^1) S_{t-1}}, & j = N_t + 1, \dots, 2N_t.
\end{cases}
\end{aligned}
\]
We suppose without loss of generality that the sequence $(p_{t-1}^j)_{j=2,\cdots,2N_t}$ is non decreasing.

\begin{rem}
Note that the quantities $\bar b_{t-1}^j$ may be interpreted as  Delta-hedging strategies, which are adjusted in the expressions of $a_{t-1}$ by $p_{t-1}^j$ to take into account  the transaction costs.\
\end{rem}

Under the AIP condition \eqref{AIPC}, we distinguish two subcases (Cases 1 and 2):

   \begin{enumerate}
     \item  \label{case:large} Large transaction costs : \(\alpha_{t-1}^1 \ge 1-\kappa_{t-1} \) (i.e. $\kappa_{t-1}\ge 1+\alpha_{t-1}^1$). 

     \item  \label{case:mild} Small transaction costs: \(\alpha_{t-1}^1 \le 1-\kappa_{t-1} \). 
\end{enumerate}

\subsubsection{Minimal superhedging price and optimal strategy in Case \ref{case:large}}
\smallskip

Proposition \ref{Prop-Conv11} below provides an explicit expression for the minimal superhedging price in the large transaction costs regime.  
To do so, we  define the following parameters:
\begin{align*}
w_{t-1}^i =
\begin{cases}
     1, & i = 1, \\[1mm]
     1 - \dfrac{\rho_{t-1}}{\alpha_{t-1}^i - \alpha_{t-1}^1}, & i = 2, \dots, N_t,\\[2mm]
     1 - \dfrac{\rho_{t-1}}{\beta_{t-1}^{i-N_t} - \alpha_{t-1}^1}, & i = N_t + 1, \dots, 2N_t,
\end{cases}
\end{align*}
where $\rho_{t-1} = (1 + \kappa_{t-1}) - \alpha_{t-1}^1 \ge 0.$ We deduce the following sets:
\[
 \mathcal{I}_{11} = \{ j : w_{t-1}^j > 0 \}, \qquad 
 \mathcal{I}_{12} = \{ (i,j) : w_{t-1}^i \le 0,\, w_{t-1}^j > 0 \}.
\]
At last, let us define
\bea\label{mu111}\notag
\lambda_{t-1}^{i,j} &=& \frac{|w_{t-1}^j|(1 - w_{t-1}^i)}{|w_{t-1}^j - w_{t-1}^i|} \in [0,1], \quad (i,j) \in \mathcal{I}_{12}, \\
\hat\mu_{t-1}^{i,j} &=& \kappa_{t-1} + \frac{w_{t-1}^i \rho_{t-1}}{1 - w_{t-1}^i} \mathbf{1}_{\{ j = 1 \}} > -1, 
\quad (i,j) \in \mathcal{I}_{12},\\
\hat\mu_{t-1}^i &=& (\rho_{t-1} - \kappa_{t-1}) \mathbf{1}_{\{ i = 1 \}} + \kappa_{t-1} \mathbf{1}_{\{ i \ne 1 \}} > -1,
\quad i \in \mathcal{I}_{11}.
\label{mu121}
\eea

\begin{prop}\label{Prop-Conv11}
Let \(g_t\) be a convex payoff function of the form \eqref{GeneralPayoffForm1}, and suppose that the AIP condition holds with 
\(\alpha_{t-1}^1 \geq 1-\kappa_{t-1} \). 
Then, the minimal super-hedging price at time \(t-1\) is given by
\footnote{We adopt the convention $\max(\emptyset) = -\infty$.}
\[
p_{t-1}(g_t) = g_{t-1}(\phi_{t-2}, S_{t-1}) 
= \max_{(i,j) \in \mathcal{I}_{12}} g_{t-1}^{i,j}(\phi_{t-2}, S_{t-1})
\;\vee\; \max_{i \in \mathcal{I}_{11}} g_{t-1}^i(\phi_{t-2}, S_{t-1}),
\]
where
\[
\begin{aligned}
g_{t-1}^{i,j}(\phi_{t-2}, x) &= \hat g_{t-1}^{i,j}(x) - \hat\mu_{t-1}^{i,j} \phi_{t-2} x, 
&\quad 
\hat g_{t-1}^{i,j}(x) &= \lambda_{t-1}^{i,j} \tilde g_{t-1}^i(x) + (1 - \lambda_{t-1}^{i,j}) \tilde g_{t-1}^j(x),\\[1mm]
g_{t-1}^i(\phi_{t-2}, x) &= \hat g_{t-1}^i(x) - \hat\mu_{t-1}^i \phi_{t-2} x, 
&\quad 
\hat g_{t-1}^i(x) &= w_{t-1}^i \tilde g_{t-1}^1(x) + (1 - w_{t-1}^i) \tilde g_{t-1}^i(x).
\end{aligned}
\]
\end{prop}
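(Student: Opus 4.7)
The plan is to cast the one-step super-hedging problem as a finite linear program and derive the formula by LP duality, identifying the extreme points of the resulting dual polytope with the index sets $\mathcal{I}_{11}$ and $\mathcal{I}_{12}$.

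First I would reduce the family of super-hedging constraints to $2N_t$ linear inequalities. By definition $p_{t-1}(g_t)$ is the infimum of $v$ over pairs $(v, \phi_{t-1})$ satisfying
\[
v + \phi_{t-1}(S_t - S_{t-1}) - \kappa_{t-1}|\phi_{t-1} - \phi_{t-2}|S_{t-1} \geq g_t(\phi_{t-1}, S_t)
\]
for all $S_t \in [\alpha_{t-1}S_{t-1}, \beta_{t-1}S_{t-1}]$. For each $i$, the map $S_t \mapsto g_t^i(\phi_{t-1}, S_t) - \phi_{t-1}(S_t - S_{t-1})$ is convex in $S_t$ (convex $\hat g_t^i$ plus a linear term), so its supremum on the interval is attained at an endpoint. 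Evaluating at $\alpha_{t-1}S_{t-1}$ and $\beta_{t-1}S_{t-1}$ and writing $c^j \in \{\alpha_{t-1}^j\}\cup\{\beta_{t-1}^{j-N_t}\}$ for the associated scaled factor (so that $\tilde g_{t-1}^j$ is the corresponding value) yields the $2N_t$ constraints
\[
v + \phi_{t-1}S_{t-1}(c^j - 1) - \kappa_{t-1}S_{t-1}|\phi_{t-1} - \phi_{t-2}| \geq \tilde g_{t-1}^j(S_{t-1}), \qquad j = 1, \dots, 2N_t.
\]

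Next I would split $\phi_{t-1} - \phi_{t-2} = z^+ - z^-$ with $z^{\pm} \geq 0$ and take the dual of the resulting LP. The dual reads
\[
p_{t-1}(g_t) = \sup_{\lambda \in \Lambda}\Bigl\{\textstyle\sum_j \lambda^j \tilde g_{t-1}^j(S_{t-1}) - \phi_{t-2}S_{t-1}\bigl(\sum_j \lambda^j c^j - 1\bigr)\Bigr\},
\]
with $\Lambda = \{\lambda \geq 0 : \sum_j \lambda^j = 1,\ \sum_j \lambda^j c^j \in [1 - \kappa_{t-1}, 1 + \kappa_{t-1}]\}$. The AIP condition (Proposition~\ref{AIP-tminus1}) is exactly the non-emptiness of $\Lambda$, which together with the boundedness it imposes on the primal gives strong LP duality.

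Under Case~1 ($\alpha_{t-1}^1 \geq 1 - \kappa_{t-1}$), every $c^j \geq c^1 \geq 1 - \kappa_{t-1}$, so the lower bound in $\Lambda$ is automatic and $\Lambda$ reduces to $\{\lambda \geq 0 : \sum \lambda^j = 1,\ \sum \lambda^j c^j \leq 1+\kappa_{t-1}\}$. Its extreme points split into two families: (i) singleton vertices $\lambda^i = 1$ for each $i$ with $c^i \leq 1+\kappa_{t-1}$, contributing $\tilde g_{t-1}^i(S_{t-1}) - \phi_{t-2}S_{t-1}(c^i - 1)$ to the objective; (ii) two-component vertices on the face $\sum \lambda^j c^j = 1+\kappa_{t-1}$, parametrized by pairs $(i, j)$ with $c^i < 1+\kappa_{t-1} < c^j$ and weights $\lambda^i = (c^j - 1 - \kappa_{t-1})/(c^j - c^i)$, contributing $\lambda^i\tilde g_{t-1}^i + (1-\lambda^i)\tilde g_{t-1}^j - \kappa_{t-1}\phi_{t-2}S_{t-1}$. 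The super-hedging price is the maximum of these objective values.

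Finally I would match this enumeration to the paper's parametrization. Since $w_{t-1}^i = (c^i - 1 - \kappa_{t-1})/(c^i - c^1)$ encodes the position of $c^i$ relative to $1 + \kappa_{t-1}$ (with $w_{t-1}^1 = 1$), the set $\mathcal{I}_{11}$ collects the extreme points that involve the index $1$: the singleton $\lambda^1 = 1$ and the two-component vertices $(1, i)$ with $c^i > 1 + \kappa_{t-1}$. The set $\mathcal{I}_{12}$ collects the remaining extreme points, with the degenerate subfamily $j = 1$ in $\mathcal{I}_{12}$ reparametrizing singletons $\lambda^i = 1$ for $i \geq 2$ with $c^i \leq 1+\kappa_{t-1}$. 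A direct computation then turns the LP weights into $\lambda_{t-1}^{i,j}$ and the slope coefficients into $\hat\mu_{t-1}^{i,j}$ and $\hat\mu_{t-1}^i$. The main obstacle is precisely this algebraic bookkeeping: ensuring that the asymmetric parametrization of $\mathcal{I}_{12}$, in particular the degenerate case $j = 1$, exhausts all LP extreme points without omission or duplication.
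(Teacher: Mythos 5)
Your proposal is correct in substance but follows a genuinely different route from the paper. The paper never writes down a finite linear program: it represents the one-step price as $-\bigl(f_{t-1}^{*}\circ\Phi_{\phi_{t-2}}^{-1}\bigr)^{*}(S_{t-1})$ via the distorted Legendre--Fenchel transform (Theorem~\ref{infPrice}), reduces it to $\inf_{\alpha\ge 0}\varphi_{t-1}(\alpha,a_{t-1}(\alpha)\vee\phi_{t-2},S_{t-1})$ on the regime $S_{t-1}\in[m_{t-1}^{1+},m_{t-1}^{1-}]$ (Theorem~\ref{optimization}, Corollary~\ref{coro-convex-infprice}), and then evaluates this infimum of a maximum of affine functions of $\alpha$ with Lemma~\ref{KeyTool}; the index sets $\mathcal{I}_{11}$ and $\mathcal{I}_{12}$ arise there as the boundary evaluations $\psi^j_{t-1}(0)$ and the intersection points $\psi^i_{t-1}(I_{ij})$ of affine pieces with slopes of opposite signs. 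Your LP dual polytope $\Lambda=\{\lambda\ge 0:\sum_j\lambda^j=1,\ \sum_j\lambda^jc^j\in[1-\kappa_{t-1},1+\kappa_{t-1}]\}$ is in effect the set of ``consistent price systems'' for the one-step model, and your extreme-point enumeration reproduces exactly the paper's families: the singletons $\lambda^i=1$ with $c^i\le 1+\kappa_{t-1}$ correspond to $i=1\in\mathcal{I}_{11}$ and to the degenerate pairs $(i,1)\in\mathcal{I}_{12}$ (where indeed $\lambda_{t-1}^{i,1}=1$ and $\hat\mu_{t-1}^{i,1}=c^i-1$), while the two-point vertices on the face $\sum_j\lambda^jc^j=1+\kappa_{t-1}$ give $\mathcal{I}_{11}\setminus\{1\}$ (edges through $e_1$) and $\mathcal{I}_{12}$ with $j\ne 1$ (weight $\lambda^i=(c^j-1-\kappa_{t-1})/(c^j-c^i)=\lambda_{t-1}^{i,j}$, cost term $\hat\mu=\kappa_{t-1}$); the covering is exhaustive because every extreme point of a simplex cut by one half-space is either a surviving vertex or the intersection of an edge with the bounding hyperplane. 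What your route buys is a shorter, more transparent derivation of the pricing formula and an economic interpretation of the coefficients $\lambda_{t-1}^{i,j}$, $w_{t-1}^i$ as dual weights; what it gives up is the constructive by-product of the paper's primal analysis, namely the function $a_{t-1}$ and the optimizer $\alpha^{*}$ that feed directly into the optimal strategy of Proposition~\ref{OptStrat11} (you would have to recover $\phi_{t-1}^{\mathrm{opt}}$ separately from complementary slackness). Two small points to tighten: (i) the passage from the almost-sure super-replication constraint over $S_t$ and $\phi_{t-1}\in L^0(\R,\cF_{t-1})$ to a pointwise sup over the support and an inf over $a\in\R$ requires the measurability results the paper invokes (Propositions~\ref{esssup-sup} and~\ref{PropEssinfNI}), so cite them rather than treating the problem as deterministic from the outset; (ii) your computation gives $\hat\mu_{t-1}^{1}=\alpha_{t-1}^1-1=\kappa_{t-1}-\rho_{t-1}$ for the singleton $\lambda^1=1$, which agrees with the paper's own proof (Proof~\ref{Pr-Prop-Conv}) but has the opposite sign to the $\rho_{t-1}-\kappa_{t-1}$ appearing in the statement \eqref{mu121}, so the discrepancy is a typo in the statement, not an error in your argument.
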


\begin{rem}
The function $g_{t-1}$ retains the same structural form as $g_t$ in \eqref{GeneralPayoffForm1}, featuring convex components and an affine dependence on $\phi_{t-2}$. This property ensures that the backward induction may be carried out recursively.
\end{rem}

To express the optimal strategy associated to the minimal price above, we introduce the following quantities at time $t-1$:
\[
\begin{aligned}
d_{t-1}^1 &= \phi_{t-2}\,(1-\alpha_{t-1}^1+\kappa_{t-1})\,S_{t-1}, 
&\quad s_{t-1}^1 &= 1,\\[1mm]
d_{t-1}^i &= (1-\alpha_{t-1}^1+\kappa_{t-1})\,S_{t-1}\,b_{t-1}^i, 
&\quad s_{t-1}^i &= 1 - (1-\alpha_{t-1}^1+\kappa_{t-1})\,S_{t-1}\,p_{t-1}^i,
\\[1mm]
A_{t-1}^i(v) &= d_{t-1}^i + s_{t-1}^i\,v, 
&\quad A_{t-1}^{(1)}(v) &= \max_{i=1,\dots,2N_t} A_{t-1}^i(v).
\end{aligned}
\]
\[
I_{i,j}\text{ solves } A^i_{t-1}(I_{i,j})=A^j_{t-1}(I_{i,j}),\quad i,j=1,\dots,2N.
\]
Note that we only need the points $I_{i,j}$ when existence holds. The proof of the following is given in Appendix \ref{Pr-OptStrat11}, see Proof \ref{Pr-OptStrat11}.
\begin{prop}\label{OptStrat11}     Let \(g_t\) be a convex payoff function of the form \eqref{GeneralPayoffForm1}, and suppose that the AIP condition holds with 
\(\alpha_{t-1}^1 \geq 1-\kappa_{t-1} \) . An optimal strategy associated with the minimal superhedging price $p_{t-1}(g_t)$ given in Proposition \ref{Prop-Conv11} is \[ \phi_{t-1}^{\mathrm{opt}} = a_{t-1}(v^*_{t-1})\vee\phi_{t-2}, \] where  \bean &v^*_{t-1}\in \arg\min_{e\in I_{t-1}} \big|A_{t-1}^{(1)}(e)-p_{t-1}(g_t)+\hat g^1_t(\alpha_{t-1}S_{t-1})+\kappa_{t-1}\phi_{t-2}S_{t-1}\big|,\\&I_{t-1} = \bigl\{ I_{i,j} : i\ne j,\; s_{t-1}^i\le 0,\; s_{t-1}^j\ge 0 \bigr\}. 
 \eean 
    
    \end{prop}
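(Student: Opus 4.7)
The strategy is to explicitly minimize the piecewise-linear function giving the one-step super-hedging price, and then match the minimizer with the $v^{\star}$-parameterized expression in the statement.

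\emph{Step 1 (reduction to the buy side).} The super-replication inequality, combined with the convexity of each $\hat g_t^j$ in $S_t$ on the conditional support $[\alpha_{t-1}S_{t-1}, \beta_{t-1}S_{t-1}]$, reduces the problem to
\[
p_{t-1}(g_t) = \inf_{\phi} F(\phi), \qquad F(\phi) := \max_{j=1,\dots,2N_t}\big[\tilde g_{t-1}^j(S_{t-1}) + \phi S_{t-1}(1 - \gamma^j)\big] + \kappa_{t-1}|\phi - \phi_{t-2}|S_{t-1},
\]
where $\gamma^j := \alpha_{t-1}^j$ for $j \leq N_t$ and $\gamma^j := \beta_{t-1}^{j-N_t}$ for $j > N_t$. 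In Case~1, every $\gamma^j \geq \alpha_{t-1}^1 \geq 1-\kappa_{t-1}$, so each piece on the sell side $\{\phi \leq \phi_{t-2}\}$ has slope $S_{t-1}(1-\kappa_{t-1}-\gamma^j) \leq 0$; hence $F$ is non-increasing there, and the infimum is attained on $[\phi_{t-2},+\infty)$.

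\emph{Step 2 (valley identification and the $A$--$a$ duality).} On the buy side, $F$ is piecewise linear and convex. Its minimizer is either the boundary $\phi_{t-2}$ or an interior kink $\phi^{\star}$ where two active lines meet with opposite-sign buy-side slopes. These kinks correspond exactly to pairs $(i,j)$ with $w_{t-1}^i \leq 0 \leq w_{t-1}^j$, i.e., whose intersections $I_{i,j}$ form precisely the set $I_{t-1}$ (since $s_{t-1}^k = w_{t-1}^k$ for $k \geq 2$ and $s_{t-1}^1 = 1$). A direct algebraic check, based on the identity $p_{t-1}^i - p_{t-1}^j = (w_{t-1}^j - w_{t-1}^i)/(\rho_{t-1}S_{t-1})$, shows that the intersections of the two line families in $\phi$-coordinates and in $v$-coordinates share a common value, giving $\phi^{\star} = a_{t-1}(I_{i,j})$.

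\emph{Step 3 (selection of $v^{\star}$ and boundary correction).} Setting $K := p_{t-1}(g_t) - \hat g_t^1(\alpha_{t-1}S_{t-1}) - \kappa_{t-1}\phi_{t-2}S_{t-1}$, a direct evaluation yields $A_{t-1}^{(1)}(I_{i,j}) = F(\phi^{\star}) - \hat g_t^1(\alpha_{t-1}S_{t-1}) - \kappa_{t-1}\phi_{t-2}S_{t-1}$. When the infimum of $F$ is attained at the interior valley $\phi^{\star}$, one has $F(\phi^{\star}) = p_{t-1}(g_t)$ by Proposition~\ref{Prop-Conv11}, so $A_{t-1}^{(1)}(I_{i,j}) = K$; the argmin selects this $I_{i,j}$, making the absolute value vanish, and returns $\phi_{t-1}^{\mathrm{opt}} = a_{t-1}(v^{\star}) \geq \phi_{t-2}$. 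When instead the infimum is attained at the boundary $\phi_{t-2}$, one has $K < A_{t-1}^{(1)}(e)$ for every $e \in I_{t-1}$; the argmin then selects the valley of $A_{t-1}^{(1)}$ on $I_{t-1}$ (at which $a_{t-1}(v^{\star}) \leq \phi_{t-2}$), and the $\vee \phi_{t-2}$ correctly restores $\phi_{t-1}^{\mathrm{opt}} = \phi_{t-2}$.

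The hard part will be verifying the dual algebraic identities underlying Steps~2 and~3, which pin down the correspondence between the primal $\phi$-valley of $F$ and the $v$-parametrization via the $A_{t-1}^i$ lines, together with a careful bookkeeping of which regime ($\mathcal{I}_{12}$-interior vs.\ $\mathcal{I}_{11}$-boundary) achieves the maximum in the formula for $p_{t-1}(g_t)$ given in Proposition~\ref{Prop-Conv11}.
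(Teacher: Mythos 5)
Your proposal is correct in structure but takes a genuinely different route from the paper. You minimize the primal one-step cost $F(\phi)$ directly in the strategy variable, using endpoint attainment of the conditional supremum and the convex piecewise-linear valley structure on the buy side. The paper instead stays inside its dual construction: it invokes Theorem \ref{optimization} and Corollary \ref{coro-convex-infprice} to write $p_{t-1}(g_t)=\inf_{\alpha\ge 0}\varphi_{t-1}(\alpha,a_{t-1}(\alpha)\vee\phi_{t-2},S_{t-1})=A^{(1)}_{t-1}(\alpha^*)+\hat g^1_t(\alpha_{t-1}S_{t-1})-\kappa_{t-1}\phi_{t-2}S_{t-1}$, reads off $\alpha^*$ from the intersection points, and then explicitly verifies super-replication by showing $V_t=\gamma_{a,b}(S_t)$ with $a=\phi^{\mathrm{opt}}_{t-1}$ dominates $g_t(\phi^{\mathrm{opt}}_{t-1},\cdot)$ on the conditional support, using the constraints $\gamma_{a,b}\ge \bar g^i_t$ on $K^i_{t-1}$ inherited from Lemma \ref{hbar}. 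Your $F(\phi)$ is exactly the partial minimum of $\varphi_{t-1}(\alpha,\phi,S_{t-1})$ over the admissible $\alpha$, so the two optimizations are equivalent, and your central identity $\phi^\star=a_{t-1}(I_{i,j})$ does hold: since $w^j_{t-1}-w^i_{t-1}=\rho_{t-1}S_{t-1}(p^i_{t-1}-p^j_{t-1})$, the intersection of $A^i_{t-1}$ and $A^j_{t-1}$ coincides with that of $v\mapsto b^i_{t-1}-p^i_{t-1}v$ and $v\mapsto b^j_{t-1}-p^j_{t-1}v$. What your route buys is a more elementary, self-contained argument; what it loses is the explicit super-replication check, which in your framing is implicit in identifying $F(\phi)$ as the minimal capital for strategy $\phi$ (this requires Propositions \ref{esssup-sup} and \ref{PropEssinfNI} to pass from essential suprema/infima to pointwise ones).

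Two caveats. First, in Step 3 your constant $K=p_{t-1}(g_t)-\hat g^1_t(\alpha_{t-1}S_{t-1})-\kappa_{t-1}\phi_{t-2}S_{t-1}$ reproduces the sign displayed in the Proposition, but the paper's own proof derives $p_{t-1}(g_t)=A^{(1)}_{t-1}(\alpha^*)+\hat g^1_t(\alpha_{t-1}S_{t-1})-\kappa_{t-1}\phi_{t-2}S_{t-1}$, so the quantity that actually vanishes at the optimum is $A^{(1)}_{t-1}(e)-p_{t-1}(g_t)+\hat g^1_t(\alpha_{t-1}S_{t-1})-\kappa_{t-1}\phi_{t-2}S_{t-1}$; indeed, for $a\ge \phi_{t-2}$ one has $\hat\Phi_{\phi_{t-2}}(a)S_{t-1}-a m^1_{t-1}=\rho_{t-1}aS_{t-1}-\kappa_{t-1}\phi_{t-2}S_{t-1}$ and $A^{(1)}_{t-1}(\alpha)=\rho_{t-1}S_{t-1}\bigl(a_{t-1}(\alpha)\vee\phi_{t-2}\bigr)+\alpha$. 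This sign matters, because with the wrong constant the absolute value need not vanish at the intended intersection point and the argmin could select a different $I_{i,j}$; you should resolve the discrepancy rather than inherit it from the statement. Second, your Steps 2--3 defer precisely the algebraic identities that constitute the proof, so as written this is a plan rather than a complete argument, although the identities you name are the right ones and they do check out.
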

\begin{rem} This result shows that, in Case~\ref{case:large}, the quantity of risky assets at time $t-1$ should be increased, i.e.  $\phi_{t-1}^{\mathrm{opt}} \ge \phi_{t-2}$,  when $\alpha_{t-1}^1 \ge 1 - \kappa_{t-1}$.
In the absence of transaction costs, Case~\ref{case:large} simplifies to $\alpha_{t-1} = 1$, confirming that the optimal strategy is to increase the risky position, as $S_{t} \ge S_{t-1}$. More generally, Case~\ref{case:large} can be rewritten as $S_{t-1}\alpha_{t-1}^1\ge  S_{t-1}^b$ where $S_{t-1}^b=S_{t-1}(1-k_{t-1})$ represents the bid-price while $S_{t-1}\alpha_{t-1}^1$ may be viewed as the "minimal" feasible value of $S_t$ reflecting the effect of the transaction costs.
 \end{rem} 

\subsubsection{Minimal superhedging price and optimal strategy in Case~\ref{case:mild}}
\smallskip

As in the first case, we need to consider some  coefficients.  For \( r = 1, 2 \), 
\begin{align*}
w_{t-1}^{(r,i)} &=
\begin{cases}
1 - \dfrac{\rho_{t-1}^r}{\alpha_{t-1}^i - \alpha_{t-1}^1}, & i = 2, \dots, N_t,\\[1mm]
1 - \dfrac{\rho_{t-1}^r}{\beta_{t-1}^{\,i - N_t} - \alpha_{t-1}^1}, & i = N_t + 1, \dots, 2N_t,
\end{cases}
\end{align*}
where \( \rho_{t-1}^r = (1 - (-1)^r \kappa_{t-1}) - \alpha_{t-1}^1 \), $r=1,2$. We then deduce  the  sets
\begin{align*}
\mathcal{I}_{21} &=
    \bigl\{ (r,i) : w_{t-1}^{(r,i)} > 0 \bigr\},\\ 
\mathcal{I}_{22} &=
    \bigl\{ (r,l,i,j) :
    w_{t-1}^{(r,i)} \le 0,\; w_{t-1}^{(l,j)} > 0 \bigr\}.
\end{align*}

Finally, we deduce  the finite family of parameters $\hat{\mu}_{t-1}^m$, $m=(r,l,i,j)$, at time $t-1$ and some coefficients $\lambda_{t-1}^{m}$ allowing us to define the functions $\hat g_{t-1}^m$ at time $t-1$. Precisely, we have:
\begin{align}
\lambda_{t-1}^{(r,l,i,j)} &= 
\frac{|w_{t-1}^{(l,j)}| \bigl(1 - w_{t-1}^{(r,i)}\bigr)}
{|w_{t-1}^{(l,j)} - w_{t-1}^{(r,i)}|} \in [0,1],\quad (r,l,i,j) \in \mathcal{I}_{22}, \notag\\[1mm]
\hat{\mu}_{t-1}^{(r,l,i,j)} &= 
-\kappa_{t-1} 
+ \frac{(-1)^r |w_{t-1}^{(l,j)}| + (-1)^l |w_{t-1}^{(r,i)}|}
{|w_{t-1}^{(l,j)} - w_{t-1}^{(r,i)}|},
\quad (r,l,i,j) \in \mathcal{I}_{22}, \label{mu211}\\[1mm]
\hat{\mu}_{t-1}^{(r,i)} &= 
(-1)^r \kappa_{t-1} > -1,
\quad (r,i) \in \mathcal{I}_{21}. \label{mu221}
\end{align}

\begin{prop}\label{Prop-Conv21} 
Let \(g_t\) be a convex payoff function of the form \eqref{GeneralPayoffForm1}, and suppose that the AIP condition holds with 
\(\alpha_{t-1}^1 \leq 1-\kappa_{t-1} \).   
Then, the minimal superhedging price at time~\( t-1 \) is given by
\[
p_{t-1}(g_t)
= g_{t-1}(\phi_{t-2}, S_{t-1})
= \max_{(r,l,i,j)\in\mathcal{I}_{22}} g_{t-1}^{r,l,i,j}(\phi_{t-2}, S_{t-1})
\;\vee\;
\max_{(r,i)\in\mathcal{I}_{21}} g_{t-1}^{r,i}(\phi_{t-2}, S_{t-1}),
\]
where
\[
\begin{aligned}
g_{t-1}^{r,l,i,j}(\phi_{t-2}, x) 
&= \hat g_{t-1}^{r,l,i,j}(x) - \hat \mu_{t-1}^{(r,l,i,j)} \phi_{t-2} x, 
&\quad 
\hat g_{t-1}^{r,l,i,j}(x)
&= \lambda_{t-1}^{(r,l,i,j)} \tilde g_{t-1}^i(x)
  + \bigl(1 - \lambda_{t-1}^{(r,l,i,j)}\bigr) \tilde g_{t-1}^j(x), \\[1mm]
g_{t-1}^{r,i}(\phi_{t-2}, x) 
&= \hat g_{t-1}^{r,i}(x) - \hat \mu_{t-1}^{(r,i)} \phi_{t-2} x,
&\quad 
\hat g_{t-1}^{r,i}(x)
&= w_{t-1}^{(r,i)} \tilde g_{t-1}^1(x) 
  + \bigl(1 - w_{t-1}^{(r,i)}\bigr) \tilde g_{t-1}^i(x).
\end{aligned}
\]
\end{prop}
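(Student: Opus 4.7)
The plan is to adapt the one-step reduction used for Case~\ref{case:large} to the small-transaction-costs regime $\alpha_{t-1}^1 \le 1 - \kappa_{t-1}$. First, I would write the minimal one-step superhedging price as
\[
p_{t-1}(g_t) \;=\; \inf_{\phi_{t-1}} \,\esssup_{S_t \in [\alpha_{t-1} S_{t-1},\,\beta_{t-1} S_{t-1}]} \bigl[\, g_t(\phi_{t-1}, S_t) - \phi_{t-1}(S_t - S_{t-1}) + \kappa_{t-1}|\phi_{t-1} - \phi_{t-2}| S_{t-1}\,\bigr].
\]
Since each $g_t^i(\phi_{t-1},\cdot)$ is convex and the correction $-\phi_{t-1}(S_t-S_{t-1})$ is affine in $S_t$, the essential supremum is attained at one of the endpoints $\alpha_{t-1} S_{t-1}$ or $\beta_{t-1} S_{t-1}$. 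Evaluating there and using \eqref{tilde-g}, the problem reduces to minimizing over $\phi_{t-1}\in\R$ a piecewise-affine-plus-absolute-value objective whose affine pieces are encoded by the $2N_t$ functions $\tilde g_{t-1}^i$ together with the slopes $p_{t-1}^j$ and intercepts $b_{t-1}^j$ from \eqref{Deltahedging}.

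The decisive difference from Case~\ref{case:large} is that both signs of $\phi_{t-1}-\phi_{t-2}$ are now active at the optimum. I would therefore split the minimization into the two regimes $\phi_{t-1}\ge\phi_{t-2}$ (contributing $+\kappa_{t-1}(\phi_{t-1}-\phi_{t-2})S_{t-1}$, indexed by $r=2$) and $\phi_{t-1}\le\phi_{t-2}$ (contributing the opposite sign, indexed by $r=1$), solve each subproblem by the distorted Legendre--Fenchel transform as in Proof~\ref{Pr-Prop-Conv}, and then take the envelope across $r$. The shifted parameters $\rho_{t-1}^r = (1-(-1)^r\kappa_{t-1})-\alpha_{t-1}^1$ emerge as the effective tilts once the cost direction is fixed, and the coefficients $w_{t-1}^{(r,i)}$ measure whether each constraint remains active in that regime. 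The set $\mathcal{I}_{21}$ then collects pairs $(r,i)$ for which a single affine constraint is binding at the optimum (giving the single-index components $g_{t-1}^{r,i}$), while $\mathcal{I}_{22}$ collects quadruples $(r,l,i,j)$ for which two constraints, one from each trading-direction regime, bind simultaneously (giving the two-index components $g_{t-1}^{r,l,i,j}$); the weight $\lambda_{t-1}^{(r,l,i,j)}$ is precisely the convex-combination coefficient determined by the intersection of the two binding hyperplanes.

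The remaining steps are routine once the algebra is set up. I would check that the pointwise maximum of the families $g_{t-1}^{r,i}$ and $g_{t-1}^{r,l,i,j}$ reproduces the minimax value of the one-step problem: the outer max over $(r,l,i,j)$ encodes the supremum of the inner essential supremum across the two regimes, while the inner convex combinations encode the duality between the binding hyperplanes. The retention of the structural form \eqref{GeneralPayoffForm1}--\eqref{formGi-11} at time $t-1$ is then immediate from the explicit formulas for $\hat g_{t-1}^{r,i}$ and $\hat g_{t-1}^{r,l,i,j}$, which are convex in $x$, together with the affine dependence on $\phi_{t-2}$ through the coefficients $\hat\mu_{t-1}^{(r,i)}$ and $\hat\mu_{t-1}^{(r,l,i,j)}$.

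The main obstacle will be the sign bookkeeping. One must verify that the coefficients in \eqref{mu211}--\eqref{mu221} satisfy $\hat\mu_{t-1}^{(r,l,i,j)}>-1$ and $\hat\mu_{t-1}^{(r,i)}>-1$ under the combined AIP inequalities $\alpha_{t-1}^1\le 1-\kappa_{t-1}$ and $\beta_{t-1}^{N_t}\ge 1+\kappa_{t-1}$, and that each quadruple in $\mathcal{I}_{22}$ indeed corresponds to a genuine active vertex of the piecewise-affine-in-$\phi_{t-1}$ objective rather than a spurious intersection, which amounts to verifying inequalities on $(-1)^r,(-1)^l$ and the signs of $w_{t-1}^{(l,j)}-w_{t-1}^{(r,i)}$. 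Once these are in place, the Legendre--Fenchel machinery from Proof~\ref{Pr-Prop-Conv} applies mutatis mutandis.
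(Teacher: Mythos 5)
Your overall strategy --- reduce the one-step problem to a finite piecewise-affine convex minimization and read off the components of $g_{t-1}$ from intersections of affine pieces --- is sound, and your direct parametrization by $\phi_{t-1}$ is a legitimate alternative to the paper's route. The paper instead parametrizes affine minorants of the $\bar g_t^i$ on $K_{t-1}^i$ by a slack $\alpha\ge 0$ at the left endpoint $m_{t-1}^1$ (Lemma \ref{hbar}), reduces the two-variable infimum to $\inf_{\alpha\ge 0}\varphi_{t-1}(\alpha,a_{t-1}(\alpha),S_{t-1})$ via the monotonicity analysis of Theorem \ref{optimization} together with Corollary \ref{coro-convex-infprice}, and then applies Lemma \ref{KeyTool} to the maximum of the affine-in-$\alpha$ functions $\psi_{t-1}^{r,j}$, whose slopes are precisely the $w_{t-1}^{(r,j)}$. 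In your coordinates the slope in $\phi_{t-1}$ of the $(r,i)$ piece is $-S_{t-1}\bigl(c^i-\alpha_{t-1}^1\bigr)w_{t-1}^{(r,i)}$ with $c^i\in\{\alpha_{t-1}^i,\beta_{t-1}^{i-N_t}\}$, so the two parametrizations are sign-reversed versions of one another and either can be pushed through; note also that passing from the infimum over $\phi_{t-1}\in L^0(\R,\cF_{t-1})$ of a conditional essential supremum to a scalar minimization uses Propositions \ref{esssup-sup} and \ref{PropEssinfNI}, which you invoke only implicitly.

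There is nonetheless a genuine gap: the entire content of Proposition \ref{Prop-Conv21} is the explicit formulas for $\lambda_{t-1}^{(r,l,i,j)}$, $\hat\mu_{t-1}^{(r,l,i,j)}$, $\hat\mu_{t-1}^{(r,i)}$ and the index sets, and this is exactly what you defer as ``routine.'' Asserting that $\lambda_{t-1}^{(r,l,i,j)}$ ``is precisely the convex-combination coefficient determined by the intersection'' is the statement to be proved, not a proof, and the details you do state are wrong in ways that suggest the bookkeeping has not been done. First, $\mathcal{I}_{22}$ does not pair ``one constraint from each trading-direction regime'': its definition allows $r=l$. Second, the single-index components indexed by $\mathcal{I}_{21}$ are not ``single binding constraints''; in the paper they are the positive-slope functions evaluated at the boundary $\alpha=0$ of the constrained minimization (equivalently, in your coordinates, intersections with the $i=1$ pieces, whose $\phi_{t-1}$-slopes are $S_{t-1}\rho_{t-1}^r\ge 0$), and an argument that only considers pairwise intersections among pieces with $i,j\ge 2$ would omit them. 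Third, the AIP inequality is misquoted: under \eqref{AIPC} and Case \ref{case:mild} the relevant bounds are $\alpha_{t-1}^1\le 1-\kappa_{t-1}$ and $\beta_{t-1}^{N_t}\ge 1-\kappa_{t-1}$, not $\beta_{t-1}^{N_t}\ge 1+\kappa_{t-1}$; the correct bounds are what give $\rho_{t-1}^1,\rho_{t-1}^2\ge 0$ (hence $1-w_{t-1}^{(r,i)}\ge 0$, needed for the combinations to be convex) and $w_{t-1}^{(2,2N_t)}\ge 0$ (so at least one slope is non-negative and the infimum is finite). Finally, the degenerate case $w_{t-1}^{(2,2N_t)}=0$ needs the separate treatment given at the end of Proof \ref{Pr-Prop-Conv} via the last statement of Lemma \ref{KeyTool}, which your plan does not address.
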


To express the optimal strategy associated to the minimal price above, we introduce the following quantities at time~$t-1$.
For \( i = 1, 2 \) and \( j = 2, \dots, 2N_t \), define
\[
\begin{aligned}
d_{t-1}^{(r,i)} &= \bigl(1 - \alpha_{t-1}^1 - (-1)^r \kappa_{t-1} \bigr) S_{t-1} b_{t-1}^i 
    + (-1)^r \kappa_{t-1} \phi_{t-2} S_{t-1},\\[1mm]
s_{t-1}^{(r,i)} &= 1 - \bigl(1 - \alpha_{t-1}^1 - (-1)^r \kappa_{t-1} \bigr) S_{t-1} p_{t-1}^i,\\[1mm]
A_{t-1}^{(r,i)}(v) &= d_{t-1}^{(r,i)} + s_{t-1}^{(r,i)} v,\\[1mm]
A_{t-1}^{(2)}(v) &= \max_{r=1,2}\max_{i=2,\dots,2N_t} A^{(r,i)}(v).
\end{aligned}
\]

The intersection points \( I_{r,l,i,j} \) are defined, when existence holds, by the equations
\[
A^{(r,i)}(I_{r,l,i,j}) = A^{(l,j)}(I_{r,l,i,j}), 
\quad r,l = 1,2,\; i,j = 2,\dots,2N_t.
\]\smallskip

The proof of the following result is given in Appendix~\ref{Pr-OptStrat21}, see Proof \ref{Pr-OptStrat21}.

\begin{prop}\label{OptStrat21}
Let \(g_t\) be a convex payoff function of the form \eqref{GeneralPayoffForm1}, and suppose that the AIP condition holds with 
\(\alpha_{t-1}^1 \leq 1-\kappa_{t-1} \). 
An optimal strategy associated with the minimal superhedging price \( p(g_t) \) is given by
\[
\phi^{\mathrm{opt}}_{t-1} = a_{t-1}(v^*),
\]
where 
\bean
&v^* \in \arg\min_{e \in I_{t-1}} \big| A_{t-1}^{(2)}(e) - p_{t-1}(g_t)+\hat g^1(\alpha_{t-1}S_{t-1}) \big|,\\
&I_{t-1} = \bigl\{ I_{r,l,i,j} : (r,i) \ne (l,j),\;
s_{t-1}^{(r,i)} \le 0,\;
s_{t-1}^{(l,j)} \ge 0 \bigr\}.
\eean
\end{prop}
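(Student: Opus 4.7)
The proof follows the same strategy as that of Proposition~\ref{OptStrat11}, adapted to the fact that in Case~\ref{case:mild} both $\phi_{t-1}\geq\phi_{t-2}$ (buying) and $\phi_{t-1}\leq\phi_{t-2}$ (selling) may occur at the optimum; this is the origin of the extra index $r\in\{1,2\}$ in the families $A^{(r,i)}_{t-1}$ and in the intersection points $I_{r,l,i,j}$. The plan is to exhibit an explicit $\phi_{t-1}$ attaining the infimum in the one-step super-hedging problem whose value was identified in Proposition~\ref{Prop-Conv21}, and then to check that it coincides with $a_{t-1}(v^*)$.

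First, I would recast the one-step minimization as in Proof~\ref{Pr-Prop-Conv}. By convexity of $g_t(\phi_{t-1},\cdot)$ and the form \eqref{GeneralPayoffForm1}, the essential supremum of the super-hedging cost over $S_t\in[\alpha_{t-1}S_{t-1},\beta_{t-1}S_{t-1}]$ is attained at one of the endpoints, producing a finite maximum of $2N_t$ affine functions of $\phi_{t-1}$ with coefficients involving $\alpha_{t-1}^i$ and $\beta_{t-1}^i$. Linearizing the absolute value $\kappa_{t-1}|\phi_{t-1}-\phi_{t-2}|S_{t-1}$ through the sign of $\phi_{t-1}-\phi_{t-2}$ and normalizing by subtracting the baseline piece $\hat g_t^1(\alpha_{t-1}S_{t-1})$ yields exactly the affine family $A^{(r,i)}_{t-1}(\phi_{t-1})=d^{(r,i)}_{t-1}+s^{(r,i)}_{t-1}\phi_{t-1}$, so that
\[
p_{t-1}(g_t)=\hat g_t^1(\alpha_{t-1}S_{t-1})+\inf_{\phi_{t-1}\in\R}A^{(2)}_{t-1}(\phi_{t-1}).
\]

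Since $A^{(2)}_{t-1}$ is a finite maximum of affine functions, any interior minimum is a valley vertex where a piece of non-positive slope $s^{(r,i)}_{t-1}\leq 0$ meets a piece of non-negative slope $s^{(l,j)}_{t-1}\geq 0$, which is precisely the defining condition of $I_{t-1}$. The AIP hypothesis together with Proposition~\ref{AIP-tminus1} prevents the infimum from escaping to $-\infty$, so the infimum is finite and attained at some $v^*\in I_{t-1}$ satisfying $A^{(2)}_{t-1}(v^*)=p_{t-1}(g_t)-\hat g_t^1(\alpha_{t-1}S_{t-1})$; this is exactly the argmin characterization in the statement, since the absolute value vanishes there.

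The final, and expected to be the hardest, step is to verify that the corresponding $\phi_{t-1}^{\mathrm{opt}}:=a_{t-1}(v^*)$ actually realises this minimum. Unpacking \eqref{Deltahedging}, $a_{t-1}(v^*)$ selects the largest among the affine delta-hedging candidates $b_{t-1}^j-p_{t-1}^jv^*$, whose parameters coincide with the slopes and intercepts of the pieces entering $A^{(2)}_{t-1}$; one then has to check that the arg-maximum in \eqref{Deltahedging} at $v=v^*$ is attained at the pair $(i,j)$ labelling the binding vertex in $I_{t-1}$, so that inserting this $\phi_{t-1}$ into the super-hedging functional reproduces $A^{(2)}_{t-1}(v^*)+\hat g_t^1(\alpha_{t-1}S_{t-1})=p_{t-1}(g_t)$. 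The algebra has to be carried out separately on the two regimes $r=1,2$ because of the sign $(-1)^r$ appearing in \eqref{mu211}--\eqref{mu221} and in the definitions of $d^{(r,i)}_{t-1}$ and $s^{(r,i)}_{t-1}$; the two cases then reassemble through the maximum defining $A^{(2)}_{t-1}$, paralleling the corresponding step in the proof of Proposition~\ref{OptStrat11}.
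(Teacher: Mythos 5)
There is a genuine gap, and it originates in the identification of the variable of $A^{(2)}_{t-1}$. In the paper, the argument $v$ of $A^{(r,i)}_{t-1}(v)=d^{(r,i)}_{t-1}+s^{(r,i)}_{t-1}v$ is \emph{not} the strategy $\phi_{t-1}$: it is the dual slack parameter $\alpha\ge 0$ coming from Lemma~\ref{hbar} (the amount by which the affine majorant $\gamma_{a,b}$ exceeds $\bar y^1_{t-1}$ at $m^1_{t-1}$), the strategy being the slope $a=a_{t-1}(\alpha)$. This is why the slopes have the form $s^{(r,i)}_{t-1}=1-(1-\alpha^1_{t-1}-(-1)^r\kappa_{t-1})S_{t-1}p^i_{t-1}$: the ``$1$'' is the derivative of the $+\alpha$ term and the rest comes from $a_{t-1}(\alpha)=\max_i(b^i_{t-1}-p^i_{t-1}\alpha)$. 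If you carry out the direct linearization in $\phi_{t-1}$ that you describe (endpoint evaluation of the essential supremum plus splitting the absolute value), you obtain affine functions of $\phi_{t-1}$ whose slopes are of the form $(1-(-1)^r\kappa_{t-1})S_{t-1}-\alpha^i_{t-1}S_{t-1}$ or $(1-(-1)^r\kappa_{t-1})S_{t-1}-\beta^{i-N_t}_{t-1}S_{t-1}$, not $s^{(r,i)}_{t-1}$; so your claim that this ``yields exactly the affine family $A^{(r,i)}_{t-1}$'' is false, and the identity $p_{t-1}(g_t)=\hat g^1_t(\alpha_{t-1}S_{t-1})+\inf_{\phi_{t-1}\in\R}A^{(2)}_{t-1}(\phi_{t-1})$ does not hold with your reading of the variable (note also that the paper's infimum is constrained to $\alpha\ge 0$, not taken over all of $\R$). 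Under your reading the conclusion would moreover be $\phi^{\mathrm{opt}}_{t-1}=v^*$ rather than $a_{t-1}(v^*)$, which is inconsistent with the statement you are proving. The correct route, as in the paper, is to start from Theorem~\ref{optimization} and Corollary~\ref{coro-convex-infprice}, which give $p_{t-1}(g_t)=\inf_{\alpha\ge 0}\varphi_{t-1}(\alpha,a_{t-1}(\alpha),S_{t-1})=\hat g^1_t(\alpha_{t-1}S_{t-1})+\inf_{\alpha\ge 0}A^{(2)}_{t-1}(\alpha)$, and then to apply Lemma~\ref{KeyTool} to locate the minimizer $\alpha^*$ among the vertices $I_{t-1}$.

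The second, and more serious, omission concerns the verification that $\phi^{\mathrm{opt}}_{t-1}=a_{t-1}(v^*)$ actually super-replicates. Your plan is to check that the argmax in \eqref{Deltahedging} at $v=v^*$ is attained at the pair labelling the binding vertex, but this only matches values at one point and cannot deliver the required pointwise domination $V_t\ge g_t(\phi^{\mathrm{opt}}_{t-1},S_t)$ for \emph{all} realizations of $S_t$ in the conditional support. The paper's mechanism is different: with $V_{t-1}=p_{t-1}(g_t)$ the self-financed value $V_t$ is the affine function $\gamma_{a,b}(S_t)$ with $a=a_{t-1}(\alpha^*)$ and $b=\hat g^1_t(\alpha_{t-1}S_{t-1})-a\,m^1_{t-1}+\alpha^*$, and the feasibility constraints $\alpha\ge 0$, $a\ge a_{t-1}(\alpha)$ established in the proof of Lemma~\ref{hbar} guarantee $\gamma_{a,b}\ge\bar g^i_t$ on each $K^i_{t-1}$; after the change of variable $x\mapsto x(1+\hat\mu^i_t)$ this gives $\gamma_{a,b}(S_t)\ge g_t(\phi^{\mathrm{opt}}_{t-1},S_t)$ on $C_{t-1}$. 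This domination argument is the heart of the proof and is absent from your proposal.
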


Together, Propositions~\ref{Prop-Conv11}–\ref{Prop-Conv21} and Propositions~\ref{OptStrat11}–\ref{OptStrat21} provide a complete recursive characterization of the minimal superhedging prices and optimal hedging strategies under proportional transaction costs.
 \subsection{The multi-step super-hedging  under AIP}\label{MSP}

We now extend the one-step superhedging construction of Subsection \ref{GeneralStepProc1} to the whole interval $[0,T]$. This allows us to recursively compute minimal superhedging prices and associated optimal strategies under the AIP condition.  

Let $g_T$ be a non-negative convex payoff function. By Subsection \ref{GeneralStepProc1}, the minimal superhedging price at time $T-1$ is  
\[
p_{T-1}(g_T,\phi_{T-2}) = g_{T-1}(\phi_{T-2},S_{T-1}),
\]  
provided that the AIP condition holds, i.e.,  
\[
 \alpha_{T-1}^1  \leq 1+\kappa_{T-1} \quad \text{and} \quad \beta^{N_T}_{T-1}  \geq 1- \kappa_{T-1}. 
\]  
Since $N_T = 1$ and $\hat\mu_T^1 = 0$, the AIP condition at time $T-1$ does not depend on the terminal payoff $g_T$. Moreover, the backwardly computed payoff function $g_{T-1}$ exhibits the structural form (\ref{GeneralPayoffForm1}) with $N_{T-1} = 2$, where the coefficients $\hat\mu_{T-1}^i$, $i = 1, 2$, depend solely on the model parameters $\alpha_{T-1}$, $\beta_{T-1}$, and the transaction costs rate $\kappa_{T-1}$.

This reasoning naturally extends to earlier times by induction.
Suppose that, at some time $t$, a payoff function $g_t$ has been obtained from $g_T$ through the backward procedure described in Proposition~\ref{Prop-Conv11} or~\ref{Prop-Conv21}.
Then, the AIP condition imposed at the preceding step, between $t-1$ and $t$, depends solely on the model parameters $\alpha_{t-1}$, $\beta_{t-1}$, and $\kappa_{t-1}$, together with the structural coefficients of $g_t$, but not on the specific form of $g_T$.
In other words, the AIP condition can be regarded as a global property, independent of the terminal payoff. To formalize this, we introduce a set-valued backward sequence $(\Gamma_t)_{t=T,\dots,1}$:  
\begin{itemize}
\item The initial value is $\Gamma_T = \{\hat\mu_T^1\}$ with $\hat\mu_T^1 = 0$.  
\item Denote $N_t = {\rm card}(\Gamma_t)$ for $t = T,\dots,1$.
\item For $t \ge 1$, $\Gamma_{t-1}$ is defined as the set of strictly increasing coefficients $\hat \mu_{t-1}^i$ obtained recursively from the previous step:
\begin{itemize}
\item If $ 1-\kappa_{t-1}\le \alpha_{t-1}^1 \leq 1+\kappa_{t-1} $ (Case \ref{case:large} under AIP), the coefficients are given by (\ref{mu111}) and (\ref{mu121}).  
\item If $\alpha_{t-1}^1 \le 1-\kappa_{t-1} $ and $\beta_{t-1}^{N_t}\geq 1-\kappa_{t-1}$ (Case \ref{case:mild} under AIP), the coefficients are given by (\ref{mu211}) and (\ref{mu221}).  
\item Otherwise, set $\Gamma_{t-1} = \emptyset$.
\end{itemize}

\end{itemize}

Intuitively, $\Gamma_t$ encapsulates all strictly increasing coefficients required at each step to compute the minimal superhedging price.
This sequence depends solely on the model parameters and guarantees the absence of immediate profit (AIP) at all times, independently of the terminal payoff $g_T$.
Consequently, the backward induction can be carried out iteratively from $t = T$ down to $t = 0$, yielding the multi-step minimal superhedging price $g_0(\phi_{-1}, S_0)$ with $\phi_{-1} = 0$.

\begin{defi}\label{DefAIPGlob}  
The AIP condition is said to hold over the interval $[0,T]$ if, for all $t = 1, \dots, T$,
\[
 \alpha_{t-1}^1 \le 1+\kappa_{t-1} \quad \text{and} \quad \beta_{t-1}^{N_t} \ge 1-\kappa_{t-1}, \quad \text{with } N_t \neq 0,
\]  
where the coefficients are defined by
\begin{align*}
\alpha_{t-1}^1 &:= \alpha_{t-1}(1 + \hat\mu_t^1), & \beta_{t-1}^{N_t} &:= \beta_{t-1}(1 + \hat\mu_t^{N_t}), \\
\hat\mu_t^1 &:= \min \Gamma_t, & \hat\mu_t^{N_t} &:= \max \Gamma_t.
\end{align*}

\end{defi}

Under this condition, the minimal superhedging price of the zero payoff is trivially equal to zero at any time $t-1 \ge 0$ whenever $\phi_{t-2} = 0$. The AIP condition guarantees that the backward procedure is well-defined at each step and rules out situations in which the minimal price would become infinitely negative, which are economically meaningless.  

In the absence of transaction costs ($\kappa_t = 0$ for all $t$), the AIP condition simplifies to  
\[
\alpha_t \le 1 \le \beta_t, \quad t = 0, \dots, T,
\]  
which is observed in practice. Conversely, the condition of Definition \ref{DefAIPGlob} shows that arbitrage opportunities in a frictionless model can be eliminated by suitably increasing transaction costs.  These considerations lead to the following main result:  

\begin{theo} \label{BS} 
Let $g_T$ be a non-negative convex payoff function. Suppose that (\ref{MS}) holds  and the deterministic transaction costs coefficients $\kappa_t \in [0,1)$ are such that  the global AIP condition holds. Then, at time $0$, there exists a minimal superhedging price  $
p_0 = g_0(\phi_{-1}, S_0), \quad \phi_{-1} = 0$,
where $(g_t)_{t=0,\dots,T}$ satisfy the terminal condition $g_T(\phi_{T-1}, s) = g_T(s)$ and are defined backward by Propositions \ref{Prop-Conv11} or \ref{Prop-Conv21}. Moreover, the associated optimal strategy $(\phi_t)_{t=-1,0,\dots,T-1}$ satisfies $\phi_{-1}=0$ and is constructed forward via Propositions \ref{OptStrat11} or \ref{OptStrat21}.  
\end{theo}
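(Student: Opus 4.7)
The plan is to combine a backward induction on the time index with the explicit one-step construction developed in Section~\ref{GeneralStepProc1}. The structural invariant to be propagated is that, at each time $t$, the value function $(\phi_{t-1}, s) \mapsto g_t(\phi_{t-1}, s)$ is of the form (\ref{GeneralPayoffForm1})--(\ref{formGi-11}) with the exponents $(\hat\mu_t^i)_{i=1}^{N_t}$ coinciding with the strictly increasing enumeration of the set $\Gamma_t$ introduced before Definition~\ref{DefAIPGlob}. Once the backward phase produces the whole family $(g_t)_{t=0,\dots,T}$, the optimal strategy is recovered by a forward sweep from $t = 0$ to $t = T-1$ via the explicit formulas of Propositions~\ref{OptStrat11} and~\ref{OptStrat21}.

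Initialization at $t = T$ is immediate: since $g_T$ is a non-negative convex function of $s$ only, $g_T(\phi_{T-1}, s) := g_T(s)$ fits (\ref{GeneralPayoffForm1})--(\ref{formGi-11}) trivially with $N_T = 1$, $\hat\mu_T^1 = 0$ and $\hat g_T^1 = g_T$, which matches $\Gamma_T = \{0\}$. For the inductive step, assume the invariant at time $t$. The global AIP hypothesis of Definition~\ref{DefAIPGlob} provides precisely the inequalities $\alpha_{t-1}^1 \le 1 + \kappa_{t-1}$ and $\beta_{t-1}^{N_t} \ge 1 - \kappa_{t-1}$, which, by Proposition~\ref{AIP-tminus1}, are the one-step AIP conditions relative to $g_t$. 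Hence Theorem~\ref{theo:payoff} applies and yields
\[
p_{t-1}(g_t) = g_{t-1}(\phi_{t-2}, S_{t-1}),
\]
where $g_{t-1}$ is constructed by Proposition~\ref{Prop-Conv11} if $\alpha_{t-1}^1 \ge 1 - \kappa_{t-1}$ and by Proposition~\ref{Prop-Conv21} otherwise. Reading off the new exponents from (\ref{mu111})--(\ref{mu121}) or (\ref{mu211})--(\ref{mu221}), one checks that they form a strictly increasing sequence in $(-1, +\infty)$ that coincides by construction with $\Gamma_{t-1}$, which is non-empty by the global AIP hypothesis. This closes the induction and produces $p_0 = g_0(0, S_0)$ at $t = 0$ with the convention $\phi_{-1} = 0$.

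The forward construction of the strategy is then direct: starting from $\phi_{-1} = 0$, one defines $\phi_{t-1}$ for $t = 1, \dots, T$ as the position given by Proposition~\ref{OptStrat11} or~\ref{OptStrat21} applied to $g_t$ and the previously chosen $\phi_{t-2}$. By construction, this strategy turns the initial wealth $p_0$ into a self-financing portfolio satisfying, conditionally at each date $t-1$, $V_{t-1} \ge g_{t-1}(\phi_{t-2}, S_{t-1})$ once the transaction-cost term in (\ref{Dyn}) is accounted for; iterating the one-step super-replication inequalities yields $V_T \ge g_T(S_T)$ almost surely. Minimality of $p_0$ follows from the minimality at each one-step problem: any strictly smaller initial wealth would propagate into a violation of the one-step infimum characterization of Theorem~\ref{theo:payoff} at $t = 1$. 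The main obstacle, and the only delicate point beyond routine bookkeeping, is the propagation of the structural form: one must verify that at each step the exponents produced by the one-step procedure can be reordered into a strictly increasing deterministic sequence lying in $(-1, +\infty)$ without collapsing or exiting the admissible range. This is precisely what the construction of $\Gamma_{t-1}$ in Definition~\ref{DefAIPGlob} encodes, and it is why the global AIP condition is formulated set-valuedly rather than in terms of the original model parameters $(\alpha_{t-1}, \beta_{t-1}, \kappa_{t-1})$ alone.
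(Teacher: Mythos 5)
Your proposal is correct and follows essentially the same route as the paper, which justifies Theorem \ref{BS} by exactly this backward induction on the structural form (\ref{GeneralPayoffForm1})--(\ref{formGi-11}) via the set-valued sequence $(\Gamma_t)$, the one-step results of Propositions \ref{Prop-Conv11}/\ref{Prop-Conv21}, and the forward strategy construction of Propositions \ref{OptStrat11}/\ref{OptStrat21}. The only point to tighten is your minimality argument, which should run backward rather than forward: for any super-hedging pair $(V,\phi)$ one deduces $V_{t-1}\ge g_{t-1}(\phi_{t-2},S_{t-1})$ from $V_t\ge g_t(\phi_{t-1},S_t)$ by the one-step infimum characterization, descending from $t=T$ to $t=1$ to obtain $V_0\ge g_0(0,S_0)$.
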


\begin{rem}
The above Theorem \ref{BS} provides a practical backward-forward procedure to compute the minimal superhedging price and the associated optimal strategy. Let $g_t(\phi_{t-1}, S_t)$ denote the minimal price at time $t$. The procedure may be implemented as follows:

\begin{enumerate}
    \item \textbf{Backward step:} Starting from $g_T = g_T(S_T)$, the payoff functions $g_{t-1}$ are computed recursively from $g_t$ for $t = T, T-1, \dots, 1$ using Proposition~\ref{Prop-Conv11} or~\ref{Prop-Conv21}.
Each $g_{t-1}$ is given by the maximum over convex combinations of the time-$t$ functions defining $g_t$.
At the end of this backward procedure, the initial pricing function $g_0$ is obtained.
    \item \textbf{Forward step:} Initialize $\phi_{-1} = 0$ and iteratively compute the optimal strategy $\phi_t$ for $t=0, \dots, T-1$ using Proposition~\ref{OptStrat11} or \ref{OptStrat21}. 
\end{enumerate}
\end{rem}

\section{Numerical implementation}\label{NI}

We illustrate the method developed above using the U.S. stock \texttt{SPY} (precisely a tracker S\&P 500 ETF for us) as the underlying asset $S$, and we consider a European call option with payoff $g(S_T) = (S_T - K)^+$.
The data consist of daily \texttt{SPY} prices from June~3,~2013 to December~30,~2016, yielding a total of 904 observations and 903 return observations. Prices of the tracker ranged from $126.99$ to $196.60$. In this example, we assume that a week consists of the first four trading days, from Monday to Thursday.
A calibration window $W^j$ of $k = 1, \dots, 52$ weeks is used to estimate the conditional support of relative prices for the following weeks, indexed by $j = 1, \dots, 100$.
Specifically, we assume that the price process $S^{(j)}$ for the $j$-th week satisfies
\[
\operatorname{supp}_{\mathcal{F}_t}\!\Big(\frac{S^{(j)}_{t+1}}{S^{(j)}_t}\Big) = [\alpha^{(j)}_t,\, \beta^{(j)}_t],
\]
where
\[
\alpha^{(j)}_t = \min_{k \in \text{$W^j$}} \frac{S^{(k)}_{t+1}}{S^{(k)}_t}, 
\qquad 
\beta^{(j)}_t = \max_{k \in \text{$W^j$}} \frac{S^{(k)}_{t+1}}{S^{(k)}_t}.
\]

 All Call options are supposed to be at-the-money (ATM), i.e. $K=S_0$. We consider the  proportional transaction costs rates $\kappa=0.2j$ for $j=1,\cdots,10$.

\begin{figure}[htbp]
    \centering
    \includegraphics[width=1\textwidth]{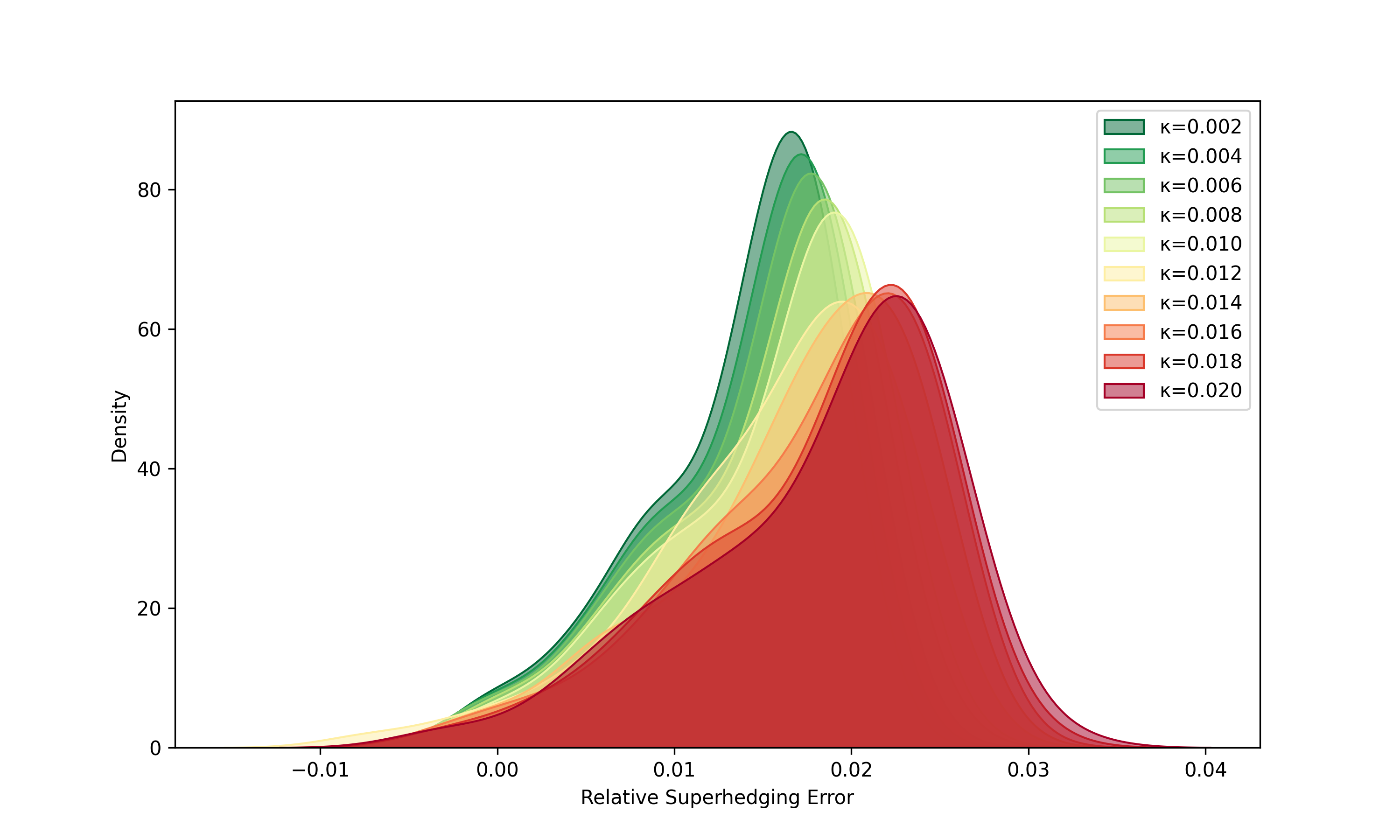}
    \caption{Distributions of the relative superhedging error $(V_T -(S_T - K)^+) / S_0$ for various transaction costs rates.}
    \label{fig:kde}
\end{figure}

\begin{figure}[htbp]
    \centering
    \includegraphics[width=1\textwidth]{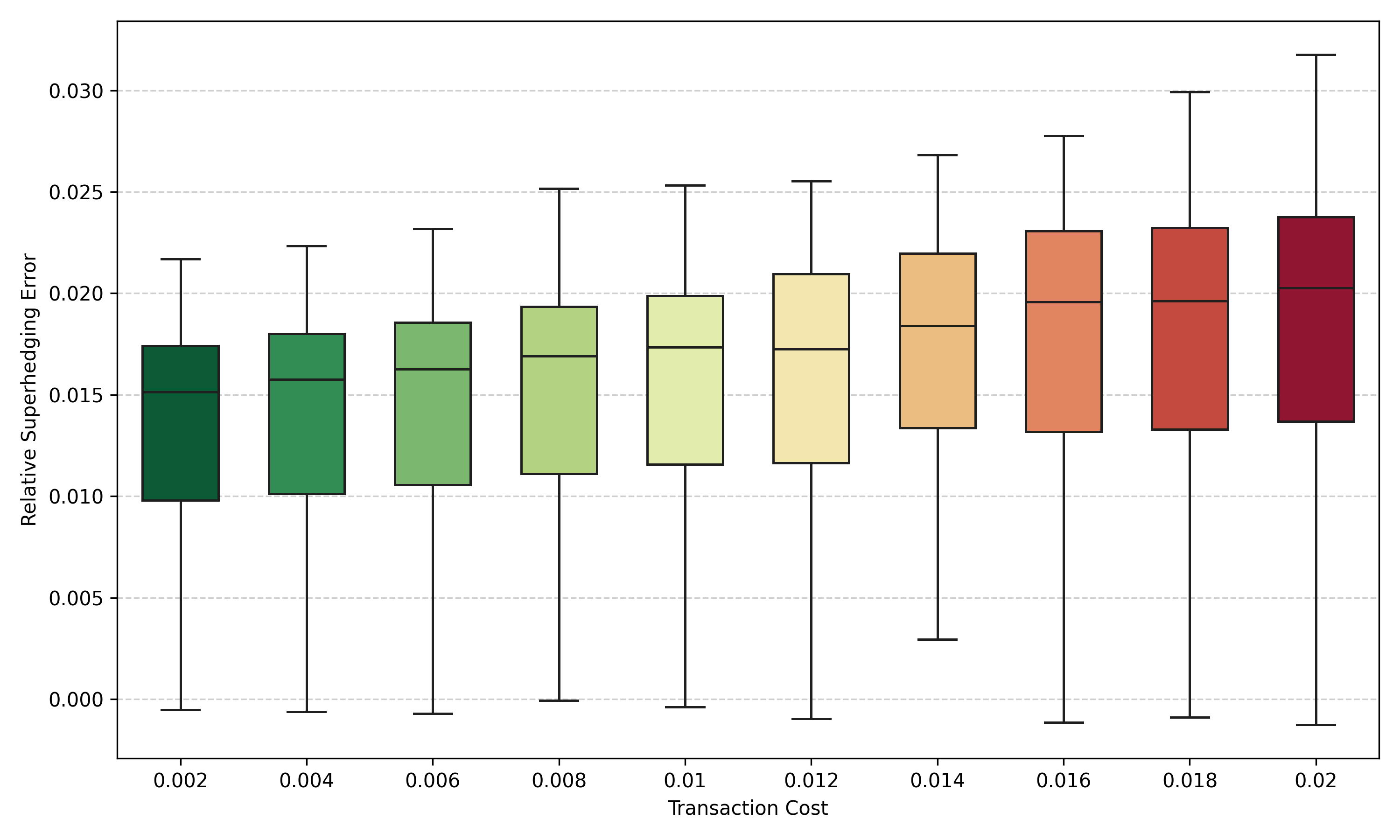}
    \caption{Boxplots of the relative superhedging errors for various transaction costs rates.}
    \label{fig:boxplot}
\end{figure}

\begin{figure}[htbp]
    \centering
    \includegraphics[width=0.8\textwidth]{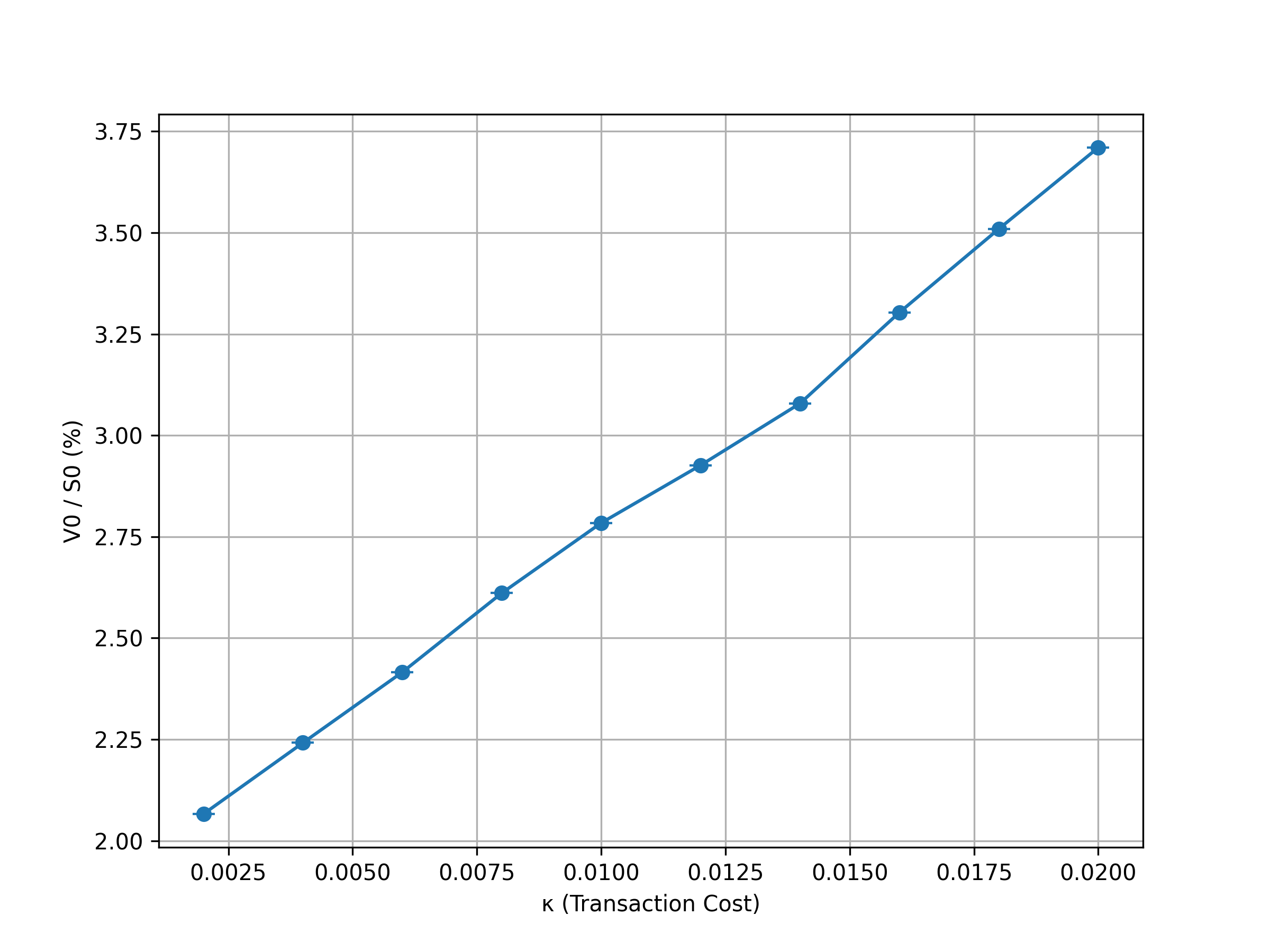}
    \caption{Impact of transaction costs on the initial relative superhedging price (in percent).}
    \label{fig:transactionVsCost}
\end{figure}

\begin{table}[htbp]
\centering
\caption{Superhedging performance statistics for SPY over 100 weeks for various proportional transaction costs rates ($\varepsilon$ is the relative superhedging error, $V_0$ the initial price).}
\label{tab:performance_stats_spy}

\sisetup{
    table-number-alignment = center,
    round-mode = places,
    round-precision = 2,
    detect-weight = true,
    detect-family = true
}

\begin{tabular}{S S S S S }
\toprule
{\(\kappa\) (\%)} & {Mean \(\bar{\varepsilon}\) (\%)} & {Std. Dev. \(\sigma_\varepsilon\) (\%)}  & {\(V_0/S_0\) (\%)} & {\(\mathbb{P}(\varepsilon \ge 0)\)} \\
\midrule
0.2  & 1.36 & 0.53 & 2.07 & 0.97 \\
0.4  & 1.41 & 0.55 & 2.24 & 0.98 \\
0.6  & 1.45 & 0.57 & 2.42 & 0.98 \\
0.8  & 1.52 & 0.59 & 2.61 & 0.98 \\
1.0  & 1.56 & 0.61 & 2.78 & 0.98 \\
1.2  & 1.58 & 0.67 & 2.93 & 0.97 \\
1.4  & 1.71 & 0.66 & 3.08 & 0.98 \\
1.6  & 1.77 & 0.68 & 3.30 & 0.98 \\
1.8  & 1.80 & 0.69 & 3.51 & 0.98 \\
2.0  & 1.84 & 0.71 & 3.71 & 0.98 \\
\bottomrule
\end{tabular}
\end{table}

\smallskip

As illustrated in Figures~\ref{fig:kde} and~\ref{fig:boxplot}, most of the relative superhedging errors observed during the $100$ weeks are not negative.
This is expected when the calibration of $\alpha_t^{(j)}$ and $\beta_t^{(j)}$ based on past data is consistent with future outcomes.
Conversely, if the calibration is inaccurate, negative superhedging errors may appear, and they tend to be larger as the transaction costs rate increases.

As shown in Figure~\ref{fig:transactionVsCost}, the relative initial price increases with the transaction costs rate, ranging from 2.07\% to 3.71\% of $S_0$, as an affine function. Precisely, from our experiment, we have  $V_0\simeq (0.9111\kappa+0.0188)S_0,\, \kappa\in [0.2\%,2\%].$ Table~\ref{tab:performance_stats_spy} summarizes these statistics, confirming that our \textbf{ method  provides reliable hedging with limited relative error}, even in the presence of substantial transaction costs.
Approximately 9.7\% of the realized prices fall outside the estimated support interval $[\alpha_{t}^{(j)} S_{t}^{(j)},\, \beta_{t}^{(j)} S_{t}^{(j)}]$, which explains the observed 2--3\% of negative errors. Note that the AIP condition is satisfied for every week considered, ensuring the absence of arbitrage in the model.

\section{The general one step super-hedging problem}\smallskip \label{section propagation}

At time $t$, we consider a payoff function $g_t$ depending on the previous strategy $\phi_{t-1} \in L^0(\mathbb{R}, \mathcal{F}_{t-1})$, given by \eqref{GeneralPayoffForm1} and satisfying \eqref{formGi-11}–\eqref{formGi1}. Our goal is to determine the superhedging price $V_{t-1}$ such that $V_t\geq g_t(\phi_{t-1},S_t)$ for some strategy $\phi_{t-1}$. This  is equivalent to 
\bea \label{replication1}
V_{t-1}&\geq& g_t(\phi_{t-1},S_t)-\phi_{t-1}\Delta S_{t}+\kappa_{t-1}| \phi_{t-1}-\phi_{t-2}|S_{t-1},\\ \nonumber
&\geq& \esssup_{\cF_{t-1}}(g_t(\phi_{t-1},S_t)-\phi_{t-1}S_t)+\phi_{t-1}S_{t-1}+\kappa_{t-1}|\Delta\phi_{t-1}|S_{t-1}.
\eea 
Let us introduce the following processes
\bea
\gamma_t^i(\phi_{t-1},x)&:=& g_t^i(\phi_{t-1},x)-\phi_{t-1}x,\quad i=1,\cdots,N_t,\\
\gamma_{t}&:=&\max_{i=1,\dots,N_t}\gamma_t^i\\
\bar g_t^i(x)&=&\hat g_t^i(\frac{x}{1+\hat\mu^i_t}),\\
\bar f_{t-1}^i &:=& -\bar g_t^i+\delta_{K_{t-1}^i},\quad K_{t-1}^i=(1+\hat \mu_{t}^i)C_{t-1},\\ \label{ftmoinsun}
f_{t-1}&:=&\min_{i=1,\dots,N_t}\bar f_{t-1}^i,
\eea
where $C_{t-1}={\rm supp}_{\cF_{t-1}}(S_t)=[\alpha_{t-1}S_{t-1},\beta_{t-1}S_{t-1}]$.
\begin{prop} We have:
    \bean
    \esssup_{\cF_{t-1}}(g_t(\phi_{t-1},S_t)-\phi_{t-1}S_t)=f_{t-1}^*(-\phi_{t-1})
    \eean
\end{prop}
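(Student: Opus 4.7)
The plan is to first replace the conditional essential supremum by a deterministic supremum over the conditional support $C_{t-1}$ of $S_t$, then perform an affine change of variable that brings each piece into the form $\bar g_t^i(y) - \phi_{t-1} y$, and finally identify the resulting expression as the Legendre--Fenchel conjugate of $f_{t-1}$ evaluated at $-\phi_{t-1}$.

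Concretely, I would proceed in four steps. First, since $\gamma_t = \max_{i=1,\dots,N_t} \gamma_t^i$ is a maximum over a finite family and $\phi_{t-1}$ is $\mathcal{F}_{t-1}$-measurable, the conditional essential supremum commutes with this maximum, so
\begin{equation*}
\esssup_{\mathcal{F}_{t-1}} \gamma_t(\phi_{t-1}, S_t) = \max_{i=1,\dots,N_t} \esssup_{\mathcal{F}_{t-1}} \gamma_t^i(\phi_{t-1}, S_t).
\end{equation*}
Second, I would use that $x \mapsto \gamma_t^i(\phi_{t-1}, x) = \hat g_t^i(x) - (1+\hat\mu_t^i)\phi_{t-1} x$ is convex (hence continuous on the relevant bounded interval), together with the assumption \eqref{MS}, to rewrite each term as a deterministic supremum on $C_{t-1} = [\alpha_{t-1}S_{t-1},\beta_{t-1}S_{t-1}]$:
\begin{equation*}
\esssup_{\mathcal{F}_{t-1}} \gamma_t^i(\phi_{t-1}, S_t) = \sup_{x \in C_{t-1}} \bigl[\hat g_t^i(x) - (1+\hat\mu_t^i)\phi_{t-1} x\bigr].
\end{equation*}
Third, I would apply the affine change of variable $y = (1+\hat\mu_t^i) x$, which is a bijection from $C_{t-1}$ onto $K_{t-1}^i = (1+\hat\mu_t^i) C_{t-1}$ thanks to \eqref{formGi-11}, and use the defining identity $\bar g_t^i(y) = \hat g_t^i(y/(1+\hat\mu_t^i))$. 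The resulting expression $\sup_{y \in K_{t-1}^i} [\bar g_t^i(y) - \phi_{t-1} y]$ is exactly $(\bar f_{t-1}^i)^*(-\phi_{t-1})$, after absorbing the constraint $y \in K_{t-1}^i$ into the indicator $\delta_{K_{t-1}^i}$ present in the definition of $\bar f_{t-1}^i$. Fourth, since $f_{t-1} = \min_i \bar f_{t-1}^i$, the elementary identity $(\min_i h_i)^* = \max_i h_i^*$, which follows directly from interchanging $\sup_x$ with $\max_i$ in the definition of the conjugate, yields $f_{t-1}^*(-\phi_{t-1}) = \max_i (\bar f_{t-1}^i)^*(-\phi_{t-1})$. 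Chaining the previous equalities gives the claim.

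The main delicate point will be the second step. The inequality $\esssup_{\mathcal{F}_{t-1}} \gamma_t^i(\phi_{t-1}, S_t) \le \sup_{x \in C_{t-1}} \gamma_t^i(\phi_{t-1}, x)$ is immediate from $S_t \in C_{t-1}$ a.s., but the reverse inequality requires the defining property of the conditional support: for every $\mathcal{F}_{t-1}$-measurable selector $x$ of $C_{t-1}$ and every $\varepsilon > 0$, one has $\mathbb{P}(|S_t - x| < \varepsilon \mid \mathcal{F}_{t-1}) > 0$ a.s., which, combined with continuity of $\gamma_t^i$ in its second argument on the compact interval $C_{t-1}$, forces the conditional essential supremum to majorize $\gamma_t^i(\phi_{t-1}, x)$ for every such selector, and hence the entire deterministic supremum by a measurable selection argument. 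Once this identification is in place, the remaining manipulations are purely algebraic and the conjugate computation is routine.
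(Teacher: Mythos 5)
Your proposal is correct and follows essentially the same route as the paper: commuting the conditional essential supremum with the finite maximum, reducing to a deterministic supremum over the conditional support, performing the affine change of variable $x=(1+\hat\mu_t^i)z$, and invoking $(\min_i h_i)^*=\max_i h_i^*$. The only difference is that your ``delicate'' second step is exactly Proposition~\ref{esssup-sup}, which the paper simply cites (from \cite{CL}) rather than reproving via selectors of the conditional support.
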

\begin{proof}
By definition, we have:
    \bean
    \esssup_{\cF_{t-1}}(g_t(\phi_{t-1},S_t)-\phi_{t-1}S_t)
    &=&\esssup_{\cF_{t-1}}\gamma_{t}(\phi_{t-1},S_t),\\
    &=& \esssup_{\cF_{t-1}}(\max_{i=1,\dots,N_t}\gamma_t^i(\phi_{t-1},S_t),\\
    &=& \max_{i=1,\dots,N_t}\esssup_{\cF_{t-1}}\gamma_t^i(\phi_{t-1},S_t).
    \eean
    Since the functions $g_t^i$, $i = 1, \dots, N_t$, are continuous, we deduce from Proposition~\ref{esssup-sup} that 
    \bean
   \esssup_{\cF_{t-1}}\gamma_t^i(\phi_{t-1},S_t)&=& \sup_{z\in C_{t-1}^i}(\hat g_t^i(z)-\hat \mu_t^i\phi_{t-1}z-\phi_{t-1}z)\\
   &=& \sup_{z\in C_{t-1}^i}(\hat g_t^i(z)-(1+\hat \mu_t^i)\phi_{t-1}z)\\
   &=& \sup_{x\in \kappa_{t-1}^i}(\bar g_t^i(x)-\phi_{t-1}x),\quad x=(1+\hat \mu_t^i)z,\\
   &=& \sup_{x\in \mathbb{R}}(-\phi_{t-1}x-\bar f_{t-1}^i(x))\\
   &=& (\bar f_{t-1}^i)^*(-\phi_{t-1}).
    \eean 
    Therefore, 
    \bean \esssup_{\cF_{t-1}}\gamma_{t}(\phi_{t-1},S_t)&=&\max_{i=1,\dots,N_t}(\bar f_{t-1}^i)^*(-\phi_{t-1}),\\
    &=& \left(\min_{i=1,\dots,N_t}\bar f_{t-1}^i\right)^*(-\phi_{t-1})= f_{t-1}^*(-\phi_{t-1}).\eean
    The conclusion follows.
\end{proof}
Finally, $V_{t-1}$ is a superhedging price if and only if \eqref{replication1} holds, 
or equivalently,
\[
V_{t-1} \geq P_{t-1}(\phi_{t-1})
:= f_{t-1}^*(-\phi_{t-1})
- \phi_{t-1} S_t
+ \phi_{t-1} S_{t-1}
+ \kappa_{t-1} |\Delta \phi_{t-1}| S_{t-1}.
\]
Thus, the set of all superhedging prices at time $t-1$ is given by
\bea \label{setsuperhedgingprices}
\mathcal{P}_{t-1}(g_t)
= \big\{ P_{t-1}(\phi_{t-1}) : \phi_{t-1} \in L^0(\mathbb{R}, \mathcal{F}_{t-1}) \big\}
+ L^0(\mathbb{R}_+, \mathcal{F}_{t-1}).
\eea
The infimum superhedging price is defined as
\bea
p_{t-1}(g_t)
= \essinf \, \mathcal{P}_{t-1}(g_t).
\eea

Throughout the sequel, we denote by $\Phi_{\phi_{t-2}}$ the invertible mapping defined by
\bea \label{Phi}
\Phi_{\phi_{t-2}}(x) := x - \kappa_{t-1} \, |x + \phi_{t-2}|.
\eea
Furthermore, we define
\bea \label{hatPhi}
\hat{\Phi}_{\phi_{t-2}}(x) := - \Phi_{\phi_{t-2}}(-x).
\eea

\begin{theo}\label{infPrice}
    Let $g_t$ be a convex payoff function of the form \eqref{GeneralPayoffForm1}.  
Then the infimum superhedging price at time $t-1$ satisfies
\bea
p_{t-1}(g_t)
= - \big( f_{t-1}^* \circ \Phi_{\phi_{t-2}}^{-1} \big)^*(S_{t-1}),
\eea
where $f_{t-1}$ is defined by~\eqref{ftmoinsun} and $\Phi_{\phi_{t-2}}^{-1}$ is the inverse function of $\Phi_{\phi_{t-2}}$ defined by \eqref{Phi}.
\end{theo}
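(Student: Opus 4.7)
The plan is to reduce the essential infimum defining $p_{t-1}(g_t)$ to a pointwise deterministic scalar optimization, and then to convert that optimization into a Legendre--Fenchel conjugate by absorbing the bid-price term and the transaction-cost penalty into the single distortion $\Phi_{\phi_{t-2}}$ introduced in~\eqref{Phi}.

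Combining the representation of $\mathcal{P}_{t-1}(g_t)$ in~\eqref{setsuperhedgingprices} with the preceding proposition, which expresses the essential supremum in terms of $f_{t-1}^*(-\phi_{t-1})$, the set $\mathcal{P}_{t-1}(g_t)$ is generated by the scalar functional
$$x \longmapsto f_{t-1}^*(-x) + x\,S_{t-1} + \kappa_{t-1}\,|x-\phi_{t-2}|\,S_{t-1}, \qquad x\in\R,$$
evaluated along $\cF_{t-1}$-measurable selectors. Since this functional is continuous in $x$ and jointly measurable in $(\omega,x)$, a standard measurable-selection argument of Filippov type permits the essential infimum to be computed $\omega$-by-$\omega$, so that almost surely
$$p_{t-1}(g_t) = \inf_{x\in\R}\bigl[\,f_{t-1}^*(-x) + x\,S_{t-1} + \kappa_{t-1}\,|x-\phi_{t-2}|\,S_{t-1}\,\bigr].$$

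Next, substituting $y=-x$ and using the definition $\Phi_{\phi_{t-2}}(y)=y-\kappa_{t-1}\,|y+\phi_{t-2}|$, the last two terms collapse into a single distorted linear form, namely $-y\,S_{t-1} + \kappa_{t-1}\,|y+\phi_{t-2}|\,S_{t-1} = -S_{t-1}\,\Phi_{\phi_{t-2}}(y)$, so that
$$p_{t-1}(g_t) = \inf_{y\in\R}\bigl[\,f_{t-1}^*(y) - S_{t-1}\,\Phi_{\phi_{t-2}}(y)\,\bigr].$$
The map $\Phi_{\phi_{t-2}}$ is piecewise affine with slopes $1-\kappa_{t-1}$ on $[-\phi_{t-2},\infty)$ and $1+\kappa_{t-1}$ on $(-\infty,-\phi_{t-2}]$; since $\kappa_{t-1}\in[0,1)$, both are strictly positive, so $\Phi_{\phi_{t-2}}$ is a continuous strictly increasing bijection of $\R$ onto $\R$. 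Setting $z=\Phi_{\phi_{t-2}}(y)$ therefore yields
$$p_{t-1}(g_t) = \inf_{z\in\R}\bigl[\,(f_{t-1}^*\circ\Phi_{\phi_{t-2}}^{-1})(z) - S_{t-1}\,z\,\bigr] = -\bigl(f_{t-1}^*\circ\Phi_{\phi_{t-2}}^{-1}\bigr)^{\!*}(S_{t-1}),$$
which is the claimed identity.

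The only genuinely delicate point is the pointwise reduction: the equivalence between the essential infimum over $\cF_{t-1}$-measurable selectors and the $\omega$-wise infimum of the deterministic functional in $x$. The remaining two steps are purely algebraic once the definition of $\Phi_{\phi_{t-2}}$ is unwound and its invertibility, which rests solely on the assumption $\kappa_{t-1}\in[0,1)$, is observed.
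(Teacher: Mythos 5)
Your proposal is correct and follows essentially the same route as the paper: reduction of the essential infimum to a pointwise scalar infimum, the sign change $y=-x$, absorption of the bid and transaction-cost terms into $S_{t-1}\Phi_{\phi_{t-2}}(y)$, and the change of variable $z=\Phi_{\phi_{t-2}}(y)$ to read off the conjugate. The one step you flag as delicate is exactly where the paper invokes its Proposition~\ref{PropEssinfNI} (the normal-integrand identity $\essinf_{\cH}\{f(A):A\in L^0(\R,\cH)\}=\inf_{a\in\R}f(a)$), which plays the role of your Filippov-type selection argument; your explicit check that $\Phi_{\phi_{t-2}}$ is a strictly increasing piecewise affine bijection is a welcome addition the paper leaves implicit.
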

\begin{proof}
    Recall that the infimum price is defined as $p_{t-1}(g_t):=\essinf \cP_{t-1}(g_t)$. By  Proposition \ref{PropEssinfNI},  we get the following
\bean
p_{t-1}(g_T)&=&\inf_{z\in \R}\left(f_{t-1}^*(-z)+z S_{t-1}+\kappa_{t-1}|z-\phi_{t-2}|S_{t-1}\right),\\
&=&-\sup_{z\in \R }\left(-f_{t-1}^*(-z)-z S_{t-1}-\kappa_{t-1}|z-\phi_{t-2}|S_{t-1}\right),\\
&=&-\sup_{z\in \R }\left(-f_{t-1}^*(z)+z S_{t-1}-\kappa_{t-1}|z+\phi_{t-2}|S_{t-1}\right), \\
&=&-\sup_{z\in \R}\left(S_{t-1}\Phi_{\phi_{t-2}}(z)-f_{t-1}^*(z) \right),\\
&=&-\sup_{z\in\R}\left(S_{t-1}z-f_{t-1}^* \circ \Phi_{\phi_{t-2}}^{-1}(z)\right),\\
&=&-(f_{t-1}^{*} \circ \Phi_{\phi_{t-2}}^{-1})^{*}(S_{t-1}).
\eean

\end{proof}

    \begin{theo}
       Let $g_t$ be a convex payoff function of the form \eqref{GeneralPayoffForm1}.  
There exists a unique $\mathcal{F}_{t-1}$-measurable convex integrand $h_{t-1}$ such that
\[
p_{t-1}(g_t) = -\, h_{t-1}(S_{t-1}).
\]
Moreover,
\bea \label{h_{t-1}}
h_{t-1}
= \big[ \big( \bar h_{t-1}^* \circ \Phi_{\phi_{t-2}} \big)^{**} 
\circ \Phi_{\phi_{t-2}}^{-1} \big]^*,
\eea
where
\bea \label{barh_{t-1}}
\bar h_{t-1}(x)
:= \sup \big\{
  ( \gamma^* \circ \Phi_{\phi_{t-2}}^{-1} )^*(x)
  : \gamma \in \mathcal{A},\, \gamma \le f_{t-1}
\big\}.
\eea
\end{theo}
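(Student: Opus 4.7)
The starting point is Theorem~\ref{infPrice}, which already identifies
\[
p_{t-1}(g_t) = -\bigl(f_{t-1}^* \circ \Phi_{\phi_{t-2}}^{-1}\bigr)^*(S_{t-1}).
\]
Setting $h_{t-1}^{0} := (f_{t-1}^* \circ \Phi_{\phi_{t-2}}^{-1})^*$, existence of an $\mathcal{F}_{t-1}$-measurable convex integrand is immediate: any Fenchel conjugate is convex and lower semicontinuous in its argument, and the measurability in $\omega$ follows from $\phi_{t-2}$ being $\mathcal{F}_{t-2}\subset\mathcal{F}_{t-1}$-measurable. Thus the qualitative part of the statement is already in hand; the real content is to verify that the abstract formula \eqref{h_{t-1}} reproduces $h_{t-1}^{0}$.

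The first step I would take is to show $\bar h_{t-1} = h_{t-1}^{0}$. The inequality $\bar h_{t-1} \ge h_{t-1}^{0}$ comes from choosing $\gamma := f_{t-1}^{**}$ in the supremum defining $\bar h_{t-1}$: the biconjugate $f_{t-1}^{**}$ is the largest convex lsc minorant of $f_{t-1}$, hence lies in $\mathcal{A}$ and satisfies $\gamma \le f_{t-1}$, and moreover $\gamma^{*} = f_{t-1}^{*}$, so that the corresponding element of the supremum is exactly $h_{t-1}^{0}$. Conversely, any admissible convex $\gamma \le f_{t-1}$ satisfies $\gamma^{*} \ge f_{t-1}^{*}$ pointwise, hence $\gamma^{*} \circ \Phi_{\phi_{t-2}}^{-1} \ge f_{t-1}^{*} \circ \Phi_{\phi_{t-2}}^{-1}$, which after conjugation yields $(\gamma^{*} \circ \Phi_{\phi_{t-2}}^{-1})^{*} \le h_{t-1}^{0}$. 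Taking the supremum over $\gamma$ gives the reverse inequality.

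The second step is to show that the outer operator
\[
\psi \mapsto \bigl[(\psi^{*} \circ \Phi_{\phi_{t-2}})^{**} \circ \Phi_{\phi_{t-2}}^{-1}\bigr]^{*}
\]
acts as the identity on $\bar h_{t-1}$, which is already convex lsc. The key manipulation is a change of variable $u = \Phi_{\phi_{t-2}}^{-1}(x)$ inside the outermost Fenchel conjugate, which rewrites $h_{t-1}(y)$ as
\[
\sup_{u}\bigl\{y\,\Phi_{\phi_{t-2}}(u) - (\bar h_{t-1}^{*} \circ \Phi_{\phi_{t-2}})^{**}(u)\bigr\}.
\]
One then argues that the inner biconjugate $(\cdot)^{**}$ can be removed without changing this supremum: since $\Phi_{\phi_{t-2}}$ is non-decreasing and piecewise affine, the outer sup in $u$ already performs the appropriate convexification on each monotone piece. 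Removing the biconjugate reduces the expression to the classical Fenchel identity $(\bar h_{t-1}^{*})^{*} = \bar h_{t-1}$, yielding $h_{t-1} = \bar h_{t-1} = h_{t-1}^{0}$. Uniqueness then follows from the explicit construction together with the observation that two convex lsc integrands agreeing at $S_{t-1}$ on a set of positive measure must coincide on the closed convex hull of its support, which by~\eqref{MS} is a non-degenerate interval.

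The main obstacle is the second step: the distortion $\Phi_{\phi_{t-2}}$ is piecewise linear and concave (with a single kink at $-\phi_{t-2}$), so its interaction with convex envelopes is not automatic. Composition with $\Phi_{\phi_{t-2}}$ or $\Phi_{\phi_{t-2}}^{-1}$ does not preserve convexity in general, and one must carefully exploit the monotonicity and piecewise linearity to justify removing the intermediate biconjugate. This is precisely the role of the \emph{distorted Legendre--Fenchel transform} developed in the paper, and the delicate part of the argument is confined to this algebraic identity.
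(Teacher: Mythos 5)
Your overall strategy (identify $h_{t-1}$ with $(f_{t-1}^*\circ\Phi_{\phi_{t-2}}^{-1})^*$ via Theorem~\ref{infPrice}, show that $\bar h_{t-1}$ coincides with it, then show the outer operator in \eqref{h_{t-1}} acts as the identity) is viable and would in fact give a more direct argument than the paper, which simply invokes the general Proposition~\ref{propconjf}. However, there are genuine gaps. First, in \eqref{barh_{t-1}} the class $\mathcal{A}$ consists of \emph{affine} functions $\gamma_{a,b}(x)=ax+b$, not of all convex l.s.c.\ functions; hence $f_{t-1}^{**}$ is not an admissible $\gamma$ and your lower bound $\bar h_{t-1}\ge h_{t-1}^0$ collapses. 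The identity is still true, but the correct route is Lemma~\ref{LemmDualComp}: $(\gamma_{a,b}^*\circ\Phi_{\phi_{t-2}}^{-1})^*(x)=\Phi_{\phi_{t-2}}(a)x+b$, the best admissible intercept for slope $a$ is $b=-f_{t-1}^*(a)$, so $\bar h_{t-1}(x)=\sup_a\big(\Phi_{\phi_{t-2}}(a)x-f_{t-1}^*(a)\big)$, which equals $(f_{t-1}^*\circ\Phi_{\phi_{t-2}}^{-1})^*(x)$ after the substitution $y=\Phi_{\phi_{t-2}}(a)$. (Your upper bound is fine, since affine functions are convex.)

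Second, the removal of the inner biconjugate is asserted, not proved: ``the outer sup already performs the convexification'' is not an argument, and you yourself note that composition with the concave, kinked $\Phi_{\phi_{t-2}}$ does not preserve convexity in general. The missing ingredient is Lemma~\ref{LemmConvexityComp}: since $f_{t-1}=+\infty$ on $\mathbb{R}_-$ (the sets $K_{t-1}^i$ lie in $(0,\infty)$) and $\Phi_{\phi_{t-2}}^{-1}$ is convex, the map $f_{t-1}^*\circ\Phi_{\phi_{t-2}}^{-1}$ is convex and l.s.c., hence equal to its biconjugate; consequently $\bar h_{t-1}^*=f_{t-1}^*\circ\Phi_{\phi_{t-2}}^{-1}$, so $\bar h_{t-1}^*\circ\Phi_{\phi_{t-2}}=f_{t-1}^*$ is already convex l.s.c.\ and the biconjugate in \eqref{h_{t-1}} drops out. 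You never invoke the positivity of the support of $f_{t-1}$, which is precisely what makes this work; this is the content of Proposition~\ref{propconjf}, by which the paper proves the theorem in two lines. Finally, your uniqueness argument fails: $S_{t-1}$ is $\mathcal{F}_{t-1}$-measurable, so its conditional support given $\mathcal{F}_{t-1}$ is the singleton $\{S_{t-1}\}$ (condition \eqref{MS} concerns $S_t/S_{t-1}$, not $S_{t-1}$), and the relation $p_{t-1}(g_t)=-h_{t-1}(S_{t-1})$ pins $h_{t-1}$ down at one point only. Uniqueness must be understood, and proved, in the sense of Proposition~\ref{propconjf}, i.e.\ within the class of l.s.c.\ convex $h$ such that $h^*\circ\Phi_{\phi_{t-2}}$ is l.s.c.\ convex with $(h^*\circ\Phi_{\phi_{t-2}})^*=f_{t-1}^{**}$.
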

\begin{proof}
By Theorem~\ref{infPrice}, we have
\[
p_{t-1}(g_t)
= - \big( f_{t-1}^{*} \circ \Phi_{\phi_{t-2}}^{-1} \big)^{*}(S_{t-1}).
\]
Since $f_{t-1} = +\infty$ on $\mathbb{R}_{-}$ and $\Phi_{\phi_{t-2}}$ is a bijection such that 
$\Phi_{\phi_{t-2}}^{-1}$ is convex, Proposition~\ref{propconjf} applies. 
Hence, there exists a unique lower semicontinuous convex function $h_{t-1}$ such that
\[
f_{t-1}^* = h_{t-1}^* \circ \Phi_{\phi_{t-2}}.
\]
Therefore,
\bean
p_{t-1}(g_t)
&=& - \big( f_{t-1}^{*} \circ \Phi_{\phi_{t-2}}^{-1} \big)^{*}(S_{t-1}) \\[0.3em]
&=& - \big( h_{t-1}^* \circ \Phi_{\phi_{t-2}} \circ \Phi_{\phi_{t-2}}^{-1} \big)^{*}(S_{t-1}) \\[0.3em]
&=& -\, h_{t-1}(S_{t-1}). 
\eean
\end{proof}

\subsection{Computation of $\bar{h}_{t-1}$}\smallskip

In the following, $\gamma_{a,b} \in \cA$ denotes any affine function 
$\gamma_{a,b}(x) = a x + b$. 
Recall that $\bar h_{t-1}$ is given by~\eqref{barh_{t-1}}. 
Since $\Phi_{\phi_{t-2}}^{-1}$ is bijective, Lemma~\ref{LemmDualComp} yields
\[
(\gamma_{a,b}^* \circ \Phi_{\phi_{t-2}}^{-1})^*(x)
= \Phi_{\phi_{t-2}}(a)x + b.
\]
Therefore,
\begin{align}
\bar h_{t-1}(x)
&= \sup_{a,b \in \mathbb{R}} \big\{ \Phi_{\phi_{t-2}}(a)x + b,\; \gamma_{a,b} \le f_{t-1} \big\} \label{barh}\\
&= \sup_{a,b \in \mathbb{R}} \big\{ \Phi_{\phi_{t-2}}(a)x + b,\; \gamma_{a,b} \le \min_{1,\dots,N_t}(\bar f_{t-1}^i) \big\} \notag\\
&= \sup_{a,b \in \mathbb{R}} \big\{ \Phi_{\phi_{t-2}}(a)x + b,\; 
\gamma_{a,b} \le -\bar g_t^i + \delta_{K_{t-1}^i},\; i=1,\dots,N_t \big\} \notag\\
&= -\inf_{a,b \in \mathbb{R}} 
\big\{ -\Phi_{\phi_{t-2}}(-a)x + b,\; 
\gamma_{a,b} \ge \bar g_t^i \text{ on } K_{t-1}^i,\; i=1,\dots,N_t \big\}. \notag
\end{align}

\medskip
In the sequel, assume that $C_{t-1} = [m_{t-1}, M_{t-1}]$ is the conditional support of $S_t$ given $\cF_{t-1}$, 
where $m_{t-1}$ and $M_{t-1}$ are $\cF_{t-1}$-measurable with $m_{t-1} < M_{t-1}$ a.s. 
In fact, by assumption, $m_{t-1} = \alpha_{t-1} S_{t-1}$ and $M_{t-1} = \beta_{t-1} S_{t-1}$.

For each $j = 1, \dots, N_t$, define
\begin{align}
m_{t-1}^j &= (1 + \hat{\mu}_{t}^j) m_{t-1}, &
m_{t-1}^{j,+} &= \frac{m_{t-1}^j}{1 + \kappa_{t-1}}, &
m_{t-1}^{j,-} &= \frac{m_{t-1}^j}{1 - \kappa_{t-1}}, \notag\\
M_{t-1}^j &= (1 + \hat{\mu}_{t}^j) M_{t-1}, &
M_{t-1}^{j,+} &= \frac{M_{t-1}^j}{1 + \kappa_{t-1}}, &
M_{t-1}^{j,-} &= \frac{M_{t-1}^j}{1 - \kappa_{t-1}}, \notag\\
\bar{y}_{t-1}^j &= \bar{g}_{t}^j(m_{t-1}^j), &
\bar{Y}_{t-1}^j &= \bar{g}_{t}^j(M_{t-1}^j). \notag
\end{align}
Observe that $\bar{y}_{t-1}^j = \tilde g_{t-1}^j(S_{t-1})$ and 
$\bar{Y}_{t-1}^j = \tilde g_{t-1}^{j+N_t}(S_{t-1})$, 
where $\tilde g_{t-1}$ is defined in \eqref{tilde-g}.

We then define the following slopes and intercepts by
\begin{align}
\bar p_{t-1}^{j} &= \frac{1}{m_{t-1}^{j} - m_{t-1}^1}, &
\bar b_{t-1}^{j} &= \bar p_{t-1}^{j} \big( \bar{y}_{t-1}^{j} - \bar{y}_{t-1}^1 \big),
& j = 2, \dots, N_t, \notag\\
\bar p_{t-1}^{j} &= \frac{1}{M_{t-1}^{j-N_t} - m_{t-1}^1}, &
\bar b_{t-1}^{j} &= \bar p_{t-1}^{j} \big( \bar{Y}_{t-1}^{j-N_t} - \bar{y}_{t-1}^1 \big),
& j = N_t+1, \dots, 2N_t. \label{pente-pi}
\end{align}

and we set $\bar p_{t-1}^{j} = 0$ and $\bar b_{t-1}^{j} = -\infty$ 
whenever $m_{t-1}^{j} = m_{t-1}^1$ or $M_{t-1}^{j} = m_{t-1}^1$.

Let $\sigma$ be the permutation of $\{2, \dots, 2N_t\}$ that arranges $(\bar p_{t-1}^i)_{i=2}^{2N_t}$ in nondecreasing order, 
so that $p_{t-1}^i := \bar p_{t-1}^{\sigma(i)}$ with $p_{t-1}^2 \le \dots \le p_{t-1}^{2N_t}$. 
The same permutation $\sigma$ is applied to $(\bar b_{t-1}^i)_{i=2}^{2N_t}$, yielding $b_{t-1}^i := \bar b_{t-1}^{\sigma(i)}$ for $i = 2, \dots, 2N_t$.

We now reintroduce the mapping $a_{t-1}$ defined in \eqref{Deltahedging} in Section~\ref{MR},
\[
a_{t-1} := \max_{i = 2, \dots, 2N_t} \big( b_{t-1}^i - p_{t-1}^i \alpha \big),
\]

\begin{rem}
By convention, $b_{t-1}^i = -\infty$ whenever $p_{t-1}^i = 0$. 
Hence, the slopes $-p_{t-1}^i$ of the affine functions
\[
\alpha \mapsto b_{t-1}^i - p_{t-1}^i \alpha
\]
defining $a_{t-1}$ are all strictly negative. Consequently, 
$a_{t-1}$ is a strictly decreasing and continuous function and, therefore, invertible.
\end{rem}

\begin{lemm}\label{hbar} We have $\bar{h}_{t-1}(x)=-\bar\varphi_{t-1}(x)$ where 
\bea
\bar\varphi_{t-1}(x) &=& \inf_{\alpha \geq 0, \, a \geq a_{t-1}(\alpha)} \varphi_{t-1}(\alpha,a,x),\\
\label{varphi}
\varphi_{t-1}(\alpha,a,x) &= &\hat\Phi_{\phi_{t-2}}(a)x + \bar y_{t-1}^1 - a m_{t-1}^1 + \alpha.
\eea
\end{lemm}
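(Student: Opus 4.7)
The starting point is the explicit expression for $\bar h_{t-1}$ obtained at the end of the previous computation:
$$
\bar h_{t-1}(x) = -\inf_{a,b \in \mathbb{R}} \bigl\{ -\Phi_{\phi_{t-2}}(-a)\,x + b \;:\; \gamma_{a,b} \geq \bar g_t^i \text{ on } K_{t-1}^i,\ i=1,\dots,N_t \bigr\}.
$$
By the definition \eqref{hatPhi}, $-\Phi_{\phi_{t-2}}(-a) = \hat\Phi_{\phi_{t-2}}(a)$, so the objective inside the braces is exactly the form that must appear in $\varphi_{t-1}$. The plan is therefore to (i) reduce the continuous constraint to a finite endpoint system, (ii) optimize out $b$ via a slack variable, and (iii) rewrite the remaining inequalities in the compact form $a \geq a_{t-1}(\alpha)$.

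\textbf{Step 1: finite-dimensional reduction of the constraint.} Since $\bar g_t^i = \hat g_t^i\bigl(\cdot/(1+\hat\mu_t^i)\bigr)$ is convex on the compact interval $K_{t-1}^i = [m_{t-1}^i, M_{t-1}^i]$, and $\gamma_{a,b}$ is affine, the functional inequality $\gamma_{a,b} \geq \bar g_t^i$ on $K_{t-1}^i$ is equivalent to the two endpoint conditions
$$
a\, m_{t-1}^i + b \;\geq\; \bar y_{t-1}^i, \qquad a\, M_{t-1}^i + b \;\geq\; \bar Y_{t-1}^i.
$$
Thus the infimum is taken over $2N_t$ linear constraints in $(a,b)$.

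\textbf{Step 2: slack reparameterization.} Among these constraints, the one with $i=1$ at the left endpoint, namely $b \geq \bar y_{t-1}^1 - a\,m_{t-1}^1$, is used to introduce the slack
$$
\alpha := b - \bigl(\bar y_{t-1}^1 - a\, m_{t-1}^1\bigr) \;\geq\; 0,
$$
and then $b = \bar y_{t-1}^1 - a\, m_{t-1}^1 + \alpha$. Substituting this expression into the remaining $2N_t - 1$ endpoint inequalities, each of them (after dividing by the positive differences $m_{t-1}^i - m_{t-1}^1$ or $M_{t-1}^i - m_{t-1}^1$) takes the form
$$
a \;\geq\; \bar b_{t-1}^j - \bar p_{t-1}^j\, \alpha, \qquad j = 2, \dots, 2N_t,
$$
with $\bar p_{t-1}^j$, $\bar b_{t-1}^j$ as defined in \eqref{pente-pi}. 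The convention $\bar b_{t-1}^j = -\infty$ when a denominator vanishes correctly discards any inactive constraint.

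\textbf{Step 3: conclusion.} Taking the maximum over $j = 2, \dots, 2N_t$ collapses these $2N_t - 1$ inequalities into the single constraint $a \geq a_{t-1}(\alpha)$, by the very definition \eqref{Deltahedging} (with $b_{t-1}^j, p_{t-1}^j$ being only a relabelling of $\bar b_{t-1}^j, \bar p_{t-1}^j$, which leaves the $\max$ unchanged). The objective rewrites as $\hat\Phi_{\phi_{t-2}}(a)\,x + \bar y_{t-1}^1 - a\, m_{t-1}^1 + \alpha = \varphi_{t-1}(\alpha,a,x)$, so
$$
\bar h_{t-1}(x) \;=\; -\inf_{\alpha\geq 0,\ a\geq a_{t-1}(\alpha)} \varphi_{t-1}(\alpha,a,x) \;=\; -\bar\varphi_{t-1}(x),
$$
which is the announced identity.

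\textbf{Main obstacle.} The only non-bookkeeping step is the convexity-based reduction in Step 1, which requires $\bar g_t^i$ to be convex on $K_{t-1}^i$; this is inherited from the convexity of $\hat g_t^i$ assumed in \eqref{formGi1} via the affine change of variable. The rest is careful index tracking to ensure that the $2N_t-1$ endpoint constraints produced after substitution are in exact one-to-one correspondence with the affine pieces composing $a_{t-1}$.
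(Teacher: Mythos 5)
Your proposal is correct and follows essentially the same route as the paper's proof: reduce the domination constraint $\gamma_{a,b}\ge \bar g_t^i$ on $K_{t-1}^i$ to the two endpoint inequalities by convexity, introduce the slack $\alpha=\alpha_1\ge 0$ at the left endpoint of $K_{t-1}^1$ to eliminate $b$, and rewrite the remaining $2N_t-1$ endpoint constraints as $a\ge \bar b_{t-1}^j-\bar p_{t-1}^j\alpha$, which collapse into $a\ge a_{t-1}(\alpha)$. The handling of the degenerate denominators via the convention $\bar b_{t-1}^j=-\infty$ also matches the paper.
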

\begin{proof} 
By (\ref{barh}), we  solve the inequality $\gamma_{a,b}\geq \bar g_{t-1}^i$ on the set $K_{t-1}^i=[m_{t-1}^i,M_{t-1}^i]$. We rewrite $\gamma_{a,b}(x)=\bar y_{t-1}^i+\alpha_i +a(x-m_{t-1}^i)$ and we impose $\alpha_i\ge 0$ since we want $\gamma_{a,b}(m_{t-1}^i)\geq \bar g_{t-1}^i(m_{t-1}^i)=\bar y_{t-1}^i$. We define $\alpha=\alpha_1$.  Since $b=\bar y_{t-1}^i+\alpha_i-am_{t-1}^i$, for any $i=1,\cdots,N_t$, the constraint $\alpha_i\ge 0$ for $i\ge 2$ reads as $b-\bar y_{t-1}^i+am_{t-1}^i\ge 0$, with $b=\bar y_{t-1}^1+\alpha-am_{t-1}^1$, i.e. $a\ge  b^i_{t-1}- p_{t-1}^i\alpha$.

Since $\bar g_{t-1}^i$ is convex, $\gamma_{a,b}\geq \bar g_{t-1}^i$ on $K_{t-1}^i=[m_{t-1}^i,M_{t-1}^i]$ if and only if  $\gamma_{a,b}(M_{t-1}^i)\geq \bar g_{t-1}^i(M_{t-1}^i)=\bar Y_{t-1}^i$, i.e. $aM_{t-1}^i+\bar y_{t-1}^1+\alpha-a m_{t-1}^1\ge \bar Y_{t-1}^i$ or equivalently $a\ge  b^i_{t-1}- p_{t-1}^i\alpha$ for $i=N_t,\cdots,2N_t-1$.\smallskip

By \eqref{barh}, we deduce that $\bar{h}_{t-1}(x)=-\bar\varphi_{t-1}(x)$ where
 
\bean
\bar\varphi_{t-1}(x)=\inf_{\alpha \geq 0, \, a \geq a_{t-1}(\alpha)} \left\{ \hat\Phi_{\phi_{t-2}}(a)x + \bar y_{t-1}^1 - a m_{t-1}^1 + \alpha \right\},
\eean
where $\hat\Phi_{\phi_{t-2}}(a) = -\Phi_{\phi_{t-2}}(-a)$, i.e. $
\bar\varphi_{t-1}(x) = \inf_{\alpha \geq 0, \, a \geq a_{t-1}(\alpha)} \varphi_{t-1}(\alpha,a,x),$
with  
\bean
\varphi_{t-1}(\alpha,a,x) = \hat\Phi_{\phi_{t-2}}(a)x + \bar y_{t-1}^1 - a m_{t-1}^1 + \alpha.
\eean 
\end{proof}

\begin{theo}\label{optimization}
We have 
    \bean 
\bar \varphi_{t-1}(x)&=&-\infty,\quad x<m_{t-1}^{1+}\,{\rm or }\, x>M_{t-1}^{N-},\\
&=&\inf_{\alpha\geq 0}\varphi_{t-1}(\alpha,a_{t-1}(\alpha)\vee\phi_{t-2},x),\quad  x\in[m_{t-1}^{1+},m_{t-1}^{1-}],\\
&=&\inf_{\alpha \ge 0}\varphi_{t-1}(\alpha,a_{t-1}(\alpha),x),\quad  x\in[m_{t-1}^{1-},M_{t-1}^{N-}],\\
\eean
\end{theo}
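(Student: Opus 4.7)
The plan is to perform the inner minimization in $a$ with $\alpha$ and $x$ held fixed, and then analyze the outer minimization in $\alpha \geq 0$, exploiting the piecewise affine structure of $\varphi_{t-1}$ inherited from $\hat\Phi_{\phi_{t-2}}(a) = a + \kappa_{t-1}|a - \phi_{t-2}|$. First I would rewrite
\[
\varphi_{t-1}(\alpha, a, x) = a(x - m_{t-1}^1) + \kappa_{t-1}|a - \phi_{t-2}|x + \bar y_{t-1}^1 + \alpha,
\]
so that, as a function of $a$, this is piecewise linear with kink at $a = \phi_{t-2}$, left slope $s_-(x) := (1-\kappa_{t-1})x - m_{t-1}^1$ and right slope $s_+(x) := (1+\kappa_{t-1})x - m_{t-1}^1$. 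The feasibility constraint $a \geq a_{t-1}(\alpha)$ is one-sided, so $a$ may tend to $+\infty$.

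The inner minimization then splits into three subcases driven by the signs of $s_\pm(x)$. If $x < m_{t-1}^{1+}$, then $s_+(x) < 0$ and both slopes are negative, so $\varphi_{t-1} \to -\infty$ as $a \to +\infty$, yielding the first $-\infty$ regime. If $m_{t-1}^{1+} \leq x \leq m_{t-1}^{1-}$, then $s_-(x) \leq 0 \leq s_+(x)$, the function is V-shaped in $a$ with unconstrained minimum at the kink $\phi_{t-2}$; combined with the constraint the optimum is $a^\ast = a_{t-1}(\alpha) \vee \phi_{t-2}$ (one simply checks the two cases $a_{t-1}(\alpha) \lessgtr \phi_{t-2}$). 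If $x > m_{t-1}^{1-}$, then $s_-(x) > 0$, the function is strictly increasing in $a$, and the minimum is attained at $a^\ast = a_{t-1}(\alpha)$. Substituting these values into $\varphi_{t-1}$ gives the two candidate formulas of the statement.

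It remains to analyze the outer infimum over $\alpha \geq 0$ and to identify when it equals $-\infty$. Here I would use that $a_{t-1}$ is the pointwise maximum of the finitely many affine decreasing maps $\alpha \mapsto b_{t-1}^j - p_{t-1}^j\alpha$, so for $\alpha$ large enough the dominant piece is the one with the smallest slope, namely $p_{t-1}^2$. Using the monotonicity of $i \mapsto 1+\hat\mu_t^i$ together with $M_{t-1} > m_{t-1}$, one identifies the largest denominator among the $m_{t-1}^j - m_{t-1}^1$ and $M_{t-1}^j - m_{t-1}^1$ as $M_{t-1}^{N_t} - m_{t-1}^1$, hence $p_{t-1}^2 = 1/(M_{t-1}^{N_t} - m_{t-1}^1)$. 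In regime~II, since $a_{t-1}(\alpha) \to -\infty$, eventually $a^\ast = \phi_{t-2}$, so $\varphi_{t-1}(\alpha,\phi_{t-2},x) = \phi_{t-2}(x - m_{t-1}^1) + \bar y_{t-1}^1 + \alpha$ grows linearly and the outer infimum is finite. In regime~III, substituting $a_{t-1}(\alpha)$ (which is below $\phi_{t-2}$ for large $\alpha$) yields an expression linear in $\alpha$ with leading coefficient $1 - p_{t-1}^2\, s_-(x)$, strictly negative exactly when $x > M_{t-1}^{N-}$ (yielding $-\infty$) and non-negative otherwise (yielding a finite infimum). Combining the three regimes produces the statement.

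The main obstacle will be the asymptotic bookkeeping in the final step: pinning down that $p_{t-1}^2$ is indeed $1/(M_{t-1}^{N_t} - m_{t-1}^1)$ and translating the sign of the leading coefficient of $\alpha$ into the clean threshold $x = M_{t-1}^{N-}$. The remaining steps reduce to sign tracking of the piecewise affine slopes $s_\pm(x)$ and the straightforward splitting of the constraint set.
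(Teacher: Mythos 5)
Your proposal is correct and follows essentially the same route as the paper: you split the inner minimization in $a$ according to the signs of the two slopes $(1\pm\kappa_{t-1})x-m_{t-1}^1$ (the paper's $\eta^1_{t-1},\eta^2_{t-1}$), which yields the three regimes and the minimizers $a_{t-1}(\alpha)\vee\phi_{t-2}$ and $a_{t-1}(\alpha)$, and then you let $\alpha\to\infty$ along the dominant affine piece of $a_{t-1}$ to obtain $-\infty$ when $x>M_{t-1}^{N-}$. Your identification of the binding slope as the \emph{smallest} $p_{t-1}^j=1/(M_{t-1}^{N_t}-m_{t-1}^1)$ is equivalent to the paper's requirement that all coefficients $1-\eta^i_{t-1}(x)p_{t-1}^j$ be negative, and is in fact the labeling consistent with the paper's stated non-decreasing ordering of the $p_{t-1}^j$.
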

\begin{proof}
    Recall that, the mapping  $\varphi_{t-1}$ is given by the following,
\bean
\varphi_{t-1}(\alpha,a,x) &=&\hat\Phi_{\phi_{t-2}}(a)x + \bar y_{t-1}^1-am_{t-1}^1+\alpha\\\notag
&=& (a+\kappa_{t-1}|a-\phi_{t-2}|)x+\bar y_{t-1}^1-am_{t-1}^1+\alpha\\
&=& \left\{
    \begin{array}{ll}
        \eta_{t-1}^1(x)a+\kappa_{t-1}\phi_{t-2}x+\bar y_{t-1}^1 +\alpha :=\varphi_{t-1}^1(\alpha,a,x)& \mbox{if } a\leq \phi_{t-2} \\
        \eta_{t-1}^2(x)a-\kappa_{t-1}\phi_{t-2}x+\bar y_{t-1}^1 +\alpha:= \varphi_{t-1}^2(\alpha,a,x) & \mbox{if } a\geq \phi_{t-2} \notag
    \end{array}
\right.
\eean
where, for $j=1,2$, $\eta_{t-1}^j(x):=\left((1+(-1)^j \kappa_{t-1})x-m_{t-1}^1\right)$.

First, consider $x<m_{t-1}^{1+}$ so that $\eta_{t-1}^1(x)$ and $\eta_{t-1}^2(x)$ are negative. Thus, by taking $a=\infty$,\bea\label{case1} \bar\varphi_{t-1}(x)=\inf_{\alpha \geq 0, \, a \geq a_{t-1}(\alpha)}\varphi_{t-1}(\alpha,a,x)=-\infty.\eea

Next, if $x\in[m_{t-1}^{1+},m_{t-1}^{1-}]$, then $\eta_{t-1}^1(x)\leq 0$ and $\eta_{t-1}^2(x)\geq 0$. Consequently, the mappings $a\to \varphi_{t-1}^1(\alpha,a,x)$ and $a\to \varphi_{t-1}^2(\alpha,a,x)$ are respectively non-increasing and  non-decreasing. This implies that \bea\label{case2} \bar\varphi_{t-1}(x)=\inf_{\alpha \geq 0, \, a \geq a_{t-1}(\alpha)}\varphi_{t-1}(\alpha,a,x)=\inf_{\alpha\geq 0}\varphi_{t-1}(\alpha,a_{t-1}(\alpha)\vee\phi_{t-2},x).\eea

If $a_{t-1}(0)\leq \phi_{t-2}$, then $a_{t-1}(\alpha)\vee\phi_{t-2}=\phi_{t-2},\,\forall\alpha\geq 0$. Indeed, $ \alpha\geq0$ implies that $a_{t-1}(\alpha)\leq a_{t-1}(0)\leq \phi_{t-2}$. Hence,\bea\label{subcase2-1}\bar\varphi_{t-1}(x)=\varphi_{t-1}(0,\phi_{t-2},x).\eea

If $a_{t-1}(0)\geq \phi_{t-2}$, we have 

\bea\label{subcase2-2} \bar\varphi_{t-1}(x)=\min(\bar\varphi_{t-1}^1(x),\bar\varphi_{t-1}^2(x))\eea with, $\bar\varphi_{t-1}^1(x):=\inf_{0\leq \alpha\leq a_{t-1}^{-1}(\phi_{t-2})}\varphi_{t-1}(\alpha,a_{t-1}(\alpha),x)$ and $\bar\varphi_{t-1}^2(x):=\inf_{\alpha\geq a_{t-1}^{-1}(\phi_{t-2})}\varphi_{t-1}(\alpha,\phi_{t-2},x)$. On one hand, $\bar\varphi_{t-1}^2(x)=\varphi_{t-1}(a_{t-1}^{-1}(\phi_{t-2}),\phi_{t-2},x)\geq\bar\varphi_{t-1}^1(x)$. , \bean \bar\varphi_{t-1}(x) =\bar\varphi_{t-1}^1(x)\eean
On the other hand, since $\eta_{t-1}^1(x)\le 0$,  we have  \bean \inf_{\alpha\geq a_{t-1}^{-1}(\phi_{t-2})}\varphi_{t-1}(\alpha,a_{t-1}(\alpha),x)\geq \varphi_{t-1}(a_{t-1}^{-1}(\phi_{t-2}),\phi_{t-2},x)  \geq \bar\varphi_{t-1}^1(x)\eean
Therefore,  $\bar\varphi_{t-1}^1(x)=\inf_{\alpha\geq0}\varphi_{t-1}(\alpha,a_{t-1}(\alpha),x)=\bar\varphi_{t-1}(x)$.

Let $x\geq m_{t-1}^{1-}$, then $a\to \varphi_{t-1}(\alpha,a,x)$ is non-decreasing. Thus, \bea\label{case3} \bar\varphi_{t-1}(x)=\inf_{\alpha\geq 0}\varphi_{t-1}(\alpha,a_{t-1}(\alpha),x).\eea

Finally, if $x>M_{t-1}^{N-}$, we have from above that $\bar\varphi_{t-1}(x)=\inf_{\alpha\geq 0}\varphi_{t-1}(\alpha,a_{t-1}(\alpha),x)$. Explicitly, $\varphi_{t-1}(\alpha,a_{t-1}(\alpha),x)$ is given by,
\bean 
 \begin{cases}
    \begin{array}{ll}
        \eta_{t-1}^1(x)\max_{i=2,\dots,2N}(b_{t-1}^{i}-\alpha p_{t-1}^i)+\kappa_{t-1}\phi_{t-2}x+\bar y_{t-1}^1 +\alpha, & \mbox{if }  \alpha\geq a_{t-1}^{-1}(\phi_{t-2})\\
        \eta_{t-1}^2(x)\max_{i=2,\dots,2N}(b_{t-1}^i-\alpha p_{t-1}^i)-\kappa_{t-1}\phi_{t-2}x+\bar y_{t-1}^1 +\alpha,& \mbox{if }\alpha\leq a_{t-1}^{-1}(\phi_{t-2})\notag
    \end{array}
\end{cases}
\eean
But $\eta_{t-1}^i,\: i=1,2$ are both positive in this case. Hence $\varphi_{t-1}(\alpha,a_{t-1}(\alpha),x)$ is equal to:
\bean
  \begin{cases}
    \begin{array}{ll}
      \max_{i=2,\dots,2N}(  \eta_{t-1}^1(x)b_{t-1}^i+(1- \eta_{t-1}^1(x) p_{t-1}^i)\alpha)+\kappa_{t-1}\phi_{t-2}x+\bar y_{t-1}^1,  & \mbox{if } \alpha\geq a_{t-1}^{-1}(\phi_{t-2}) \\
       \max_{i=2,\dots,2N}( \eta_{t-1}^2(x)b_{t-1}^i+(1- \eta_{t-1}^2(x)p_{t-1}^i)\alpha)-\kappa_{t-1}\phi_{t-2}x+\bar y_{t-1}^1, & \mbox{if } \alpha\leq a_{t-1}^{-1}(\phi_{t-2}) 
    \end{array}
    \end{cases}
\eean 
Since $x$ is strictly greater than $M_{t-1}^{N-}$, it is in particular also strictly greater than $M_{t-1}^{N+}$ implying that $\eta_{t-1}^i(x)> M_{t-1}^N-m_{t-1}^1$ for $i=1,2$. In conclusion, since $p_{t-1}^2\leq\dots \leq p_{t-1}^{2N}=\frac{1}{M_{t-1}^N-m_{t-1}^1}$, we get that, $\eta_{t-1}^i(x)p_{t-1}^i>1$. In other words, the slopes in $\alpha$ are strictly negative hence $\bar\varphi_{t-1}(x)=-\infty$ by taking $\alpha=+\infty$. 
\end{proof}

\begin{theo}\label{slopes order}
    The random function $\bar h_{t-1}$ defined by \eqref{barh_{t-1}} is a convex, piecewise affine function on the interval $[m_{t-1}^{1+}, M_{t-1}^{N-}]$. Moreover, on each subinterval where $\bar h_{t-1}$ is affine, the slopes are of the form $-\hat\Phi_{\phi_{t-2}}(a_{t-1}^{\alpha_i})$, for $i = 1, \dots, d$, with $d \ge 1$. Finally, these slopes are strictly smaller than $-\phi_{t-2}$ on the first interval $[m_{t-1}^{1+}, m_{t-1}^{1-}]$ and strictly larger than $-\phi_{t-2}$ on the last interval $[M_{t-1}^{N+}, M_{t-1}^{N-}]$.
\end{theo}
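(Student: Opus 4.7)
The plan is to derive the structural properties of $\bar h_{t-1}$ directly from the explicit infimum formulas in Theorem~\ref{optimization}, combined with the piecewise affine nature of $a_{t-1}$.

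First, I will establish convexity. For every admissible pair $(\alpha,a)$, the map $x\mapsto \varphi_{t-1}(\alpha,a,x)=\hat\Phi_{\phi_{t-2}}(a)x+\bar y_{t-1}^1-am_{t-1}^1+\alpha$ is affine. Since $\bar\varphi_{t-1}(x)$ is by construction a pointwise infimum of such affine functions over the admissible set, it is concave on $[m_{t-1}^{1+},M_{t-1}^{N-}]$, and therefore $\bar h_{t-1}=-\bar\varphi_{t-1}$ is convex on this interval.

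Next I will prove piecewise affinity. The function $\alpha\mapsto a_{t-1}(\alpha)=\max_{i=2,\dots,2N_t}(b_{t-1}^i-p_{t-1}^i\alpha)$ is a maximum of finitely many affine functions, hence convex and piecewise affine, with a finite set of breakpoints $\alpha_1<\dots<\alpha_d$. For each fixed $x$ in one of the two sub-intervals appearing in Theorem~\ref{optimization}, the map $\alpha\mapsto\varphi_{t-1}(\alpha,a_{t-1}(\alpha)\vee\phi_{t-2},x)$ (resp.\ $\varphi_{t-1}(\alpha,a_{t-1}(\alpha),x)$) is piecewise affine in $\alpha$, with kinks located among the $\alpha_i$'s and at most one additional kink at $a_{t-1}^{-1}(\phi_{t-2})$. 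Since the infimum of a piecewise affine function on $[0,\infty)$ is attained either at $0$, at a kink, or as $\alpha\to+\infty$ (the latter being ruled out by finiteness on $[m_{t-1}^{1+},M_{t-1}^{N-}]$ thanks to Theorem~\ref{optimization}), I obtain $\bar\varphi_{t-1}(x)$ as the minimum over a finite family of functions that are affine in $x$, namely those of the form $\varphi_{t-1}(\alpha_i,a_{t-1}(\alpha_i)\vee\phi_{t-2},x)$. Therefore $\bar h_{t-1}$ is the maximum of a finite family of affine functions of $x$, i.e., convex and piecewise affine, and on each affine piece the slope inherited from the corresponding optimizer is precisely $-\hat\Phi_{\phi_{t-2}}(a_{t-1}(\alpha_i))$, as claimed.

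Finally I will handle the boundary slope inequalities. On $[m_{t-1}^{1+},m_{t-1}^{1-}]$, Theorem~\ref{optimization} forces the optimum to satisfy $a^*=a_{t-1}(\alpha^*)\vee\phi_{t-2}\geq\phi_{t-2}$; since $\hat\Phi_{\phi_{t-2}}$ is strictly increasing with $\hat\Phi_{\phi_{t-2}}(\phi_{t-2})=\phi_{t-2}$, this yields a slope $-\hat\Phi_{\phi_{t-2}}(a^*)\leq -\phi_{t-2}$, with strict inequality as soon as $a^*>\phi_{t-2}$. Symmetrically, on $[M_{t-1}^{N+},M_{t-1}^{N-}]$ the coefficients $\eta_{t-1}^1(x),\eta_{t-1}^2(x)$ are large enough (specifically $\eta_{t-1}^2(x)\geq 1/p_{t-1}^{2N_t}$) to push the optimizer past $\alpha=a_{t-1}^{-1}(\phi_{t-2})$ onto a piece where $a_{t-1}(\alpha^*)<\phi_{t-2}$, so the slope governing the optimization in $\alpha$ becomes $\eta_{t-1}^1$, giving a slope $-\hat\Phi_{\phi_{t-2}}(a^*)>-\phi_{t-2}$. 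The main obstacle will be this last step: carefully analyzing the first-order condition $\eta_{t-1}^j(x)\,p_{t-1}^i=1$ across the breakpoints of $a_{t-1}$ to ensure that on the open boundary sub-intervals the optimum never coincides with the crossover value $a^*=\phi_{t-2}$, so that the strict inequalities genuinely hold.
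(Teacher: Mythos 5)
Your proposal is correct in substance and reaches the statement by the same overall strategy as the paper: convexity from $\bar h_{t-1}=-\bar\varphi_{t-1}$ with $\bar\varphi_{t-1}$ an infimum of functions affine in $x$; identification of the slopes with $-\hat\Phi_{\phi_{t-2}}(a_{t-1}^{\alpha_i})$ via the optimizing $\alpha$'s; and the boundary bounds obtained by locating the optimizer relative to $a_{t-1}^{-1}(\phi_{t-2})$ (on $[m_{t-1}^{1+},m_{t-1}^{1-}]$ the $\vee\,\phi_{t-2}$ forces $a^*\ge\phi_{t-2}$; on $[M_{t-1}^{N+},M_{t-1}^{N-}]$ the non-positivity of all $\alpha$-slopes $1-\eta_{t-1}^2(x)\bar p_{t-1}^i$ pushes the optimizer onto the branch $a_{t-1}(\alpha^*)\le\phi_{t-2}$ — this is exactly the paper's case analysis, which you compress into one step). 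Where you genuinely diverge is the piecewise-affinity argument: the paper decomposes $\bar\varphi_{t-1}=\min(\bar\varphi_{t-1}^1,\bar\varphi_{t-1}^2)$, partitions the $x$-axis by the zeros $x^{j,i}$ of the $\alpha$-slopes $a_{j,i}^x$, and invokes Lemma~\ref{KeyTool} on each cell where the sign pattern is fixed; you instead observe that the breakpoints in $\alpha$ of $\alpha\mapsto\varphi_{t-1}(\alpha,a_{t-1}(\alpha)\vee\phi_{t-2},x)$ (the kinks of $a_{t-1}$ plus $a_{t-1}^{-1}(\phi_{t-2})$) do not depend on $x$, so the infimum equals a minimum over a fixed finite grid of $\alpha$'s and $\bar\varphi_{t-1}$ is a finite minimum of affine functions of $x$. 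This is cleaner and avoids the sign-pattern bookkeeping; its only prerequisites are that $\eta_{t-1}^j(x)\ge 0$ on the relevant branch (so that the composition remains a maximum of affine functions of $\alpha$) and that the infimum is not $-\infty$, both of which you correctly source from Theorem~\ref{optimization}. Two caveats, neither fatal: your argument, like the paper's own proof, only delivers non-strict inequalities on the boundary intervals (equality occurs when the optimal slope equals $-\phi_{t-2}$), so the word \emph{strictly} in the statement is not actually established by either route; and your bound $\eta_{t-1}^2(x)\ge 1/p_{t-1}^{2N_t}$ controls only one of the $\alpha$-slopes — what is really needed (and what the paper also uses, via $M_{t-1}^{N+}=\max_i\{m_{t-1}^{i+},M_{t-1}^{i+}\}$) is $\eta_{t-1}^2(x)\ge\max_i(1/\bar p_{t-1}^{i})=M_{t-1}^{N}-m_{t-1}^1$, which does hold for $x\ge M_{t-1}^{N+}$ and should be stated as such.
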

\begin{proof}
     Recall that $\bar h_{t-1}=-\bar \varphi_{t-1}(x)$ where $\bar \varphi_{t-1}$ is concave as an infimum of affine function in $x$. It follows that $\bar h_{t-1}$ is convex.  Let us show that $\bar \varphi_{t-1}$ is a piecewise affine function. As a first step, we show it on the interval $[m_{t-1}^{1-},M_{t-1}^{N-}]$.  Recall that, by the proof  of Theorem \ref{optimization} , we have  $\bar \varphi_{t-1}=\min_{i=1,2}\bar \varphi_{t-1}^i$ where 
\bea  \bar \varphi_{t-1}^1(x)&=&\inf_{\alpha\in [0,a_{t-1}^{-1}(\phi_{t-2})\vee 0]}\varphi_{t-1}(\alpha,a_{t-1}(\alpha),x),\\ \label{phi-1}
\bar \varphi_{t-1}^2(x)&=&\inf_{\alpha \ge a_{t-1}^{-1}(\phi_{t-2})\vee 0}\varphi_{t-1}(\alpha,a_{t-1}(\alpha),x).
\eea
It is then sufficient to show that $\bar \varphi_{t-1}^i$, $i=1,2$, are piecewise affine function. To see it, notice that by defintion of $a_{t-1}$ the functions $\bar\varphi_{t-1}^i$ are of the form 

$\bar\varphi_{t-1}^i(x)=\inf_{\alpha\in [\alpha_0^i,\alpha_1^i]}\max_{j=1,\cdots,N}(a_{j,i}^x\alpha +b_{j,i}^x)$ for some coefficients $a_{j,i}^x,b_{j,i}^x$ which are affine functions of $x$. Precisely, we have:
\bean a_{j,i}^x&=&1-p_{t-1}^j((1+(-1)^{i+1} \kappa_{t-1})x-m_{t-1}^1),\\
b_{j,i}^x&=&\bar y_{t-2}^1+(-1)^i \kappa_{t-1}\phi_{t-2}x+((1+(-1)^{i+1} \kappa_{t-1})x-m_{t-1}^1)b_{t-1}^j.
\eean

Notice that the bounds $\alpha_0^i,\alpha_1^i$, $i=1,2$, does not depend on $x$. Let us denote by $x^{j,i}\in \R$ the solutions to the equations $a_{j,i}^{x^{j,i}}=0$. We suppose w.l.o.g. that $x^{j,i}<x^{j+1,i}$ for all $i,j$.

In the case where $x\ge \max_{j=1,\cdots,N}x^{j,i}=x^{N,i}$ for $i=1$ (resp. $i=2$), we have $a_{j,i}^x\ge 0$ for all $j=1,\cdots,N$ hence 
$\bar\varphi_{t-1}^i(x)=\max_{j=1,\cdots,N}(a_{j,i}^x\alpha_1^i +b_{j,i}^x)$ so that $\bar\varphi_{t-1}^i$, $i=1$ (resp. $i=2$), is a piecewise affine function. 

In the case where $x\le \min_{j=1,\cdots,N}x^{j,i}=x^{1,i}$ for $i=1$ (resp. $i=2$), we have $a_{j,i}^x\le 0$ for all $j=1,\cdots,N$ hence 
$\bar\varphi_{t-1}^i(x)=\max_{j=1,\cdots,N}(a_{j,i}^x\alpha_0^i +b_{j,i}^x)$ so that $\bar\varphi_{t-1}^i$, $i=1,2$ are piecewise affine functions. 

Otherwise, for each $i=1,2$, if $x\in (x^{1,i},x^{N,i})$,  some slopes $a_{j,i}^x$ are non-positive and at least one is strictly positive. Therefore, by Lemma \ref{KeyTool}, each $\bar\varphi_{t-1}^i$, $i=1,2$, coincides with a maximum of  functions which are affine in $x$. We just need to verify that the set of all non-negative (resp. positive) slopes $a_{k,i}^x$ corresponds to a fixed set of indices, i.e. a set that does not depend on $x$, when   $x\in [x^{j,i},x^{j+1,i}]$ for some given $j\le N-1$. This is clear since we  have $a_{k,i}^x\le 0$ if and only if $k\le j$ and $a_{m,i}^x>0 $ if and only if $m\ge j+1$. By Lemma \ref{KeyTool}, we then conclude that each $\bar\varphi_{t-1}^i$ is a piecewise affine function
 on every interval $[x^{j,i},x^{j+1,i}]$, $j=1,\cdots,N-1$, as a maximum of affine functions in $x$ over a set of indices that only depend on $j$ when $x\in [x^{j,i},x^{j+1,i}]$.

On the interval $[m_{t-1}^{1+},m_{t-1}^{1-}]$, the reasoning is similar, i.e. by the proof of Theorem \ref{optimization} , we also have $\bar \varphi_{t-1}=\min_{i=1,2}\bar \varphi_{t-1}^i$ if $a_{t-1}^{-1}(\phi_{t-2})\ge 0$. Otherwise,  $\bar \varphi_{t-1}(x)=\varphi_{t-1}(0,\phi_{t-2},x)$, which proves that $\bar h_{t-1}$ is a piecewise affine function. 

The second  statement is a direct consequence of  the expression of $\varphi_{t-1}$. Indeed, the proof of  Theorem \ref{optimization} shows that, for each $x$, the infimum $\bar \varphi_{t-1}(x)$ (over all $\alpha\ge 0$) is attained by some $\alpha^x$ which is necessarily a constant (i.e. independent of $x$) on each interval on which $\bar \varphi_{t-1}$ is an affine function.

At last,  on the interval $[m_{t-1}^{1+},m_{t-1}^{1-}]$,  $\bar h_{t-1}=-\varphi_{t-1}(\alpha,a_{t-1}(\alpha)\vee\phi_{t-2},x)$, i.e. either a slope is  $-\hat\Phi_{\phi_{t-2}}(\phi_{t-2})=-\phi_{t-2}$, $\phi_{t-2}=a_{t-1}(\alpha_0)$ with $\alpha_0=a_{t-1}^{-1}(\phi_{t-2})$  or a slope is $-\hat\Phi_{\phi_{t-2}}(a_{t-1}^{\alpha_i})$ with $a_{t-1}^{\alpha_i}\ge \phi_{t-2}$. Therefore, $\hat\Phi_{\phi_{t-2}}(a_{t-1}^{\alpha_i})\ge \hat\Phi_{\phi_{t-2}}(\phi_{t-2})$ so that the slope is smaller than $-\hat\Phi_{\phi_{t-2}}(\phi_{t-2})=-\phi_{t-2}$. \smallskip

On the interval $[M_{t-1}^{N+},M_{t-1}^{N-}]$, recall that $\bar \varphi_{t-1}=\min_{i=1,2}\bar \varphi_{t-1}^i$ and we have $\bar \varphi_{t-1}^1=\inf_{\alpha\in [0,a_{t-1}^{-1}(\phi_{t-2})\vee 0]}\varphi_{t-1}(\alpha,a_{t-1}(\alpha),x)$. If $a_{t-1}(0)\le \phi_{t-2}$, then we have $a_{t-2}^{-1}(\phi_{t-2})\le 0$ hence we have $\bar \varphi_{t-1}^1(x)=\varphi_{t-1}(0,a_{t-1}(0),x)$. We also deduce that $\bar \varphi_{t-1}^2(x)=\inf_{\alpha \ge  0]}\varphi_{t-1}(\alpha,a_{t-1}(\alpha),x)\le \bar \varphi_{t-1}^1(x)$. Therefore, $\bar \varphi_{t-1}(x)=\bar \varphi_{t-1}^2(x)$. This implies, by definition of $\bar \varphi_{t-1}^2$ that the slopes of $\bar h_{t-1}$ are of the form $-\hat\Phi_{\phi_{t-2}}(a_{t-1}^{\alpha_i})$ with $\alpha_i\ge 0$ such that $a_{t-1}^{\alpha_i}\le a_{t-1}^{0}\le \phi_{t-2}$. So,  $-\hat\Phi_{\phi_{t-2}}(a_{t-1}^{\alpha_i})\ge -\phi_{t-2}$. In the case where $a_{t-1}(0)> \phi_{t-2}$, we have $\bar \varphi_{t-1}^1(x)=\inf_{\alpha\in [0,a_{t-1}^{-1}(\phi_{t-2})]}\varphi_{t-1}(\alpha,a_{t-1}(\alpha),x)$. Note that   $\varphi_{t-1}(\alpha,a_{t-1}(\alpha),x)$ is piecewise affine function in $\alpha$ and the slopes are of the form $\epsilon^i=1-\bar p_i\left(x(1+\kappa_{t-1})-m^1_{t-2}  \right)$ where $(\bar p_i)_i$ are defined in (\ref{pente-pi}). We get that $\epsilon^i\ge 0$ if and only if $x\ge (\bar p_i)^{-1}$ where $(\bar p_i)^{-1}\in \{m_{t-1}^{i+},M_{t-1}^{i+}\}$. Since $M_{t-1}^{N+}=\max_{i=1,\cdots,N}\{m_{t-1}^{i+},M_{t-1}^{i+}\}$, we deduce that all the slopes $\epsilon^i$  are non-positive hence $\bar \varphi_{t-1}^1(x)=\varphi_{t-1}(a_{t-1}^{-1}(\phi_{t-2}),\phi_{t-2},x)$. This implies that $\bar \varphi_{t-1}^2(x)\le \bar \varphi_{t-1}^1(x)$ hence $\bar \varphi_{t-1}(x)=\bar \varphi_{t-1}^2(x)$ and we may conclude as in the previous case.

\end{proof}

\begin{rem} \label{partition}

By Proposition \ref{optimization}, there exists $d \in \mathbb{N}$ and a partition
\[
[m_{t-1}^{1+}, M_{t-1}^{N-}] = \bigcup_{i=0}^{d-1} [O_i, O_{i+1}]
\]
with endpoints $O_0 = m_{t-1}^{1+}$ and $O_d = M_{t-1}^{N-}$, such that the slope of the piecewise affine function $\bar h_{t-1}$ is $\hat\Phi_{\phi_{t-3}}(a_{t-1}^{\alpha_i})$ on each interval $[O_i, O_{i+1}]$. We deduce that the  Fenchel conjugate, $\bar h_{t-1}^*$, is also piecewise affine. Its slopes correspond to the points $O_0, \dots, O_d$ over the partition of $\mathbb{R}$ induced by the increasing sequence $\bigl(-\hat\Phi_{\phi_{t-2}}(a_{t-1}^{\alpha_i})\bigr)_{i=1}^d$.
\end{rem}
To compute $h_{t-1}$ given by \eqref{h_{t-1}}, we shall see that $\bar{h}_{t-1}^* \circ \Phi_{\phi_{t-2}}$ is convex and, therefore, we have: 
\[
h_{t-1} = \left[ (\bar{h}_{t-1}^* \circ \Phi_{\phi_{t-2}})^{**} \circ \Phi^{-1}_{\phi_{t-2}} \right]^* = \bar{h}_{t-1}.
\]

\begin{lemm}\label{teta} With the notations of Remark \ref{partition}, suppose that there exists $j$ such that $-\hat\Phi_{\phi_{t-2}}(a_{t-1}^{\alpha_j})\leq  -\phi_{t-2}$ and $-\hat\Phi_{\phi_{t-2}}(a_{t-1}^{\alpha_{j+1}})>  -\phi_{t-2}$. Then, there exits $i$ such that  $\phi_{t-2}=\hat\Phi_{\phi_{t-2}}(a_{t-1}^{\alpha_i})$.
\end{lemm}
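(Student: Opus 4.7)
The plan is to exhibit the specific value $\alpha_0 := a_{t-1}^{-1}(\phi_{t-2})$ as one of the partition optimizers $\alpha_i$ from Remark~\ref{partition}. Once this is done, the identity $\hat\Phi_{\phi_{t-2}}(\phi_{t-2}) = \phi_{t-2}$, which is immediate from $\hat\Phi_{\phi_{t-2}}(x) = x + \kappa_{t-1}|x - \phi_{t-2}|$, yields $\hat\Phi_{\phi_{t-2}}(a_{t-1}^{\alpha_i}) = \phi_{t-2}$ and hence the claim.

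First I would catalogue the kinks of the minimization problem defining $\bar\varphi_{t-1}(x)$. Using the explicit forms from Theorem~\ref{optimization}, the objective in $\alpha$ is a piecewise affine function whose breakpoints come from two independent sources: the kinks of $a_{t-1}(\alpha) = \max_{i}(b_{t-1}^i - p_{t-1}^i \alpha)$ itself, and the distinguished value $\alpha_0$, which is a breakpoint either via the clipping $a_{t-1}(\alpha) \vee \phi_{t-2}$ on $[m_{t-1}^{1+}, m_{t-1}^{1-}]$ or via the nonsmoothness of $\hat\Phi_{\phi_{t-2}}$ at the argument $\phi_{t-2}$ on $[m_{t-1}^{1-}, M_{t-1}^{N-}]$. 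Consequently any minimizer $\alpha^x$ of the piecewise affine program must lie at one of these breakpoints.

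Next I would translate the hypothesis. Since $\hat\Phi_{\phi_{t-2}}$ is strictly increasing and $a_{t-1}$ is strictly decreasing, the straddling inequalities $-\hat\Phi_{\phi_{t-2}}(a_{t-1}^{\alpha_j}) \le -\phi_{t-2} < -\hat\Phi_{\phi_{t-2}}(a_{t-1}^{\alpha_{j+1}})$ are equivalent to $\alpha_j \le \alpha_0 < \alpha_{j+1}$, so the monotone sequence of optimizers brackets $\alpha_0$. To conclude that $\alpha_0$ itself appears among the $\alpha_i$, I would note that the slope in $\alpha$ of the objective has a genuine positive jump at $\alpha = \alpha_0$ (arising either from the clipping, or from the absolute value inside $\hat\Phi_{\phi_{t-2}}$, with jump size proportional to either $\phi_{t-2} - a_{t-1}(\alpha_{0}^+)$ or $2\kappa_{t-1}$). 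Hence there is an open range of $x$ on which the argmin of the piecewise affine program rests precisely at $\alpha_0$, so $\alpha_0$ appears as some $\alpha_i$ in the partition and the strict inequality $\alpha_j < \alpha_0 < \alpha_{j+1}$ is impossible.

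The main obstacle I expect is to make this ``cannot skip $\alpha_0$'' step watertight at the junction $x = m_{t-1}^{1-}$, where the formula defining $\bar\varphi_{t-1}$ switches between the clipped and unclipped forms and the breakpoint at $\alpha_0$ changes its origin. I would handle this by exploiting convexity of $\bar h_{t-1}$: the subdifferential $\partial \bar h_{t-1}$ is monotone and its image covers the interval between any left and right slope realized at the junction, so the value $-\phi_{t-2}$ must be attained on a genuine subinterval since the kink at $\alpha_0$ persists in both regimes and creates no degeneracy.
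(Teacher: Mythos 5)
Your argument is correct in substance but proceeds along a genuinely different route from the paper's. The paper works at the single junction point $O_{j+1}$, where both $\alpha_j$ and $\alpha_{j+1}$ realize the infimum: it writes $\phi_{t-2}=\lambda a_{t-1}^{\alpha_j}+(1-\lambda)a_{t-1}^{\alpha_{j+1}}$, forms the corresponding convex combination of the two affine expressions of $\varphi_{t-1}$, and uses convexity of $\hat\Phi_{\phi_{t-2}}$ together with convexity and monotonicity of $\alpha\mapsto a_{t-1}(\alpha)$ to show that the feasible point $\bigl(a_{t-1}^{-1}(\phi_{t-2}),\phi_{t-2}\bigr)$ also attains the infimum at $O_{j+1}$, which forces $\alpha_j=a_{t-1}^{-1}(\phi_{t-2})$. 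You instead argue on the inner minimization in $\alpha$: the point $\alpha_0=a_{t-1}^{-1}(\phi_{t-2})$ is a breakpoint at which the slope of the (convex, piecewise affine) objective jumps upward by a strictly positive amount, so the first-order optimality conditions on the two adjacent intervals of the partition are incompatible with $\alpha_j<\alpha_0<\alpha_{j+1}$: the left slope at $\alpha_0$ would have to be simultaneously nonnegative (from the interval where the argmin is $\alpha_j<\alpha_0$) and strictly negative (from the interval where the argmin is $\alpha_{j+1}>\alpha_0$, since the right slope there is nonpositive and the jump is positive). Your approach is more elementary and makes transparent \emph{why} the slope $-\phi_{t-2}$ cannot be skipped, whereas the paper's convex-combination identity has the advantage of directly exhibiting the optimal pair and dovetails with the computation of $\hat O_j^+$ and $\hat O_{j+1}^-$ used immediately afterwards in Theorem \ref{ConvexityFundamental}. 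Three points in your sketch deserve tightening: the ``open range of $x$'' phrasing is slightly off, since the two partition intervals are adjacent and the clean statement is the sign contradiction for the left and right slopes evaluated at $x=O_{j+1}$ itself; the junction $x=m_{t-1}^{1-}$ is in fact benign, because for $\alpha<\alpha_0$ one has $a_{t-1}(\alpha)>\phi_{t-2}$ so the clipped and unclipped objectives coincide there and the left slope at $\alpha_0$ is given by the same affine formula $1-p\bigl((1+\kappa_{t-1})x-m_{t-1}^1\bigr)$ in both regimes, making the appeal to monotonicity of $\partial\bar h_{t-1}$ unnecessary; and the strict positivity of the slope jump uses $\kappa_{t-1}>0$ and the strict decrease of $a_{t-1}$, the same implicit nondegeneracy the paper relies on.
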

\begin{proof}
    The assumption is equivalent to $a_{t-1}^{\alpha_j}\geq \phi_{t-2}$ and  $a_{t-1}^{\alpha_{j+1}}< \phi_{t-2}$. We deduce 
    $\lambda \in]0,1]$ such that, $ \phi_{t-2}=\lambda a_{t-1}^{\alpha_j} + (1-\lambda) a_{t-1}^{\alpha_{j+1}}.$ By left and right continuity of $\bar\varphi_{t-1}$ at point $O_{j+1}$, we get that:
\bea\label{0}
\bar\varphi_{t-1}(O_{j+1})&=&\lambda \bar\varphi_{t-1}(O_{j+1}) + (1-\lambda)\bar\varphi_{t-1}(O_{j+1})\quad \\ \notag
                          &=&  \lambda \bar\varphi_{t-1}(O_{j+1}^-)+ (1-\lambda)\bar\varphi_{t-1}(O_{j+1}^+)\\ \notag &=&\lambda \bar\varphi_{t-1}^2(O_{j+1})+ (1-\lambda)\bar\varphi_{t-1}^1(O_{j+1}).
\eea

Let us introduce the notation $C_{\lambda}(a_j)=\lambda a_j+(1-\lambda)a_{j+1}$ for the convex combination of  any pair of real numbers $(a_j,a_{j+1})$ and $j\in \N$. Using the explicit affine expression of $\varphi_{t-1}(\alpha,a,x)$ and the definitions of $\alpha_j,\alpha_{j+1}$, we get that 
\bea \label{IFunda} \notag \bar\varphi_{t-1}(O_{j+1})&=& \lambda \varphi_{t-1}(\alpha_j,a_{t-1}^{\alpha_j},O_{j+1}) + (1-\lambda)\varphi_{t- 1}(\alpha_{j+1},a_{t-1}^{\alpha_{j+1}},O_{j+1})\\
&=&C_{\lambda}(\hat\Phi_{\phi_{t-2}}(a_{t-1}^{\alpha_j}))O_{j+1}+y^1_{t-1}-\phi_{t-2}m^1_{t-1}+C_{\lambda}(\alpha_{j}).
\eea

Since $\hat\Phi_{\phi_{t-2}}$ is convex, we have:
\bea\label{1}
C_{\lambda}(\hat\Phi_{\phi_{t-2}}(a_{t-1}^{\alpha_j}))&\geq& \hat\Phi_{\phi_{t-2}}(\lambda a_{t-1}^{\alpha_j} + (1-\lambda) a_{t-1}^{\alpha_{j+1}})=\hat\Phi_{\phi_{t-2}}(\phi_{t-2}).
\eea
Moreover, by convexity of the mapping $\alpha\mapsto a_{t-1}(\alpha)$, we have: \[a_{t-1}(\lambda \alpha_j+(1-\lambda)\alpha_{j+1})\leq \lambda a_{t-1}^{\alpha_j} + (1-\lambda) a_{t-1}^{\alpha_{j+1}}=\phi_{t-2}\]
As $a_{t-1}$ is decreasing, we obtain
\bea\label{2} C_{\lambda}(\alpha_{j})\geq a_{t-1}^{-1}(\phi_{t-2}).\eea
Using the inequalities $(\ref{1})$ and $(\ref{2})$, we conclude by (\ref{IFunda}) that 
\[\bar\varphi_{t-1}(O_{j+1})\geq \hat\Phi_{\phi_{t-2}}(\phi_{t-2})O_{j+1}+y^1_{t-1}-\phi_{t-2}m^1_{t-1}+a_{t-1}^{-1}(\phi_{t-2})\]
i.e.\[\bar\varphi_{t-1}(O_{j+1})\geq \bar\varphi_{t-1}^1(O_{j+1})=\bar\varphi_{t-1}^2(O_{j+1})=\varphi_{t-1}(a_{t-1}^{-1}(\phi_{t-2}),\phi_{t-2},O_{j+1}).\]

On the other hand, by Proposition \ref{optimization}, we have $O_{j+1}\ge m_{t-1}^{1-}$.  Therefore,  we deduce that $\bar\varphi_{t-1}(O_{j+1})=\min(\min_{\alpha\in\tau^1}\bar\varphi_{t-1}^1,\min_{\alpha\in\tau^2}\bar\varphi_{t-1}^2),$ still by  Proposition \ref{optimization}, where $a^{-1}_{t-1}(\phi_{t-2})\in \tau^1\cap \tau^2$.  So, necessarily, we have  $$\bar\varphi_{t-1}(O_{j+1})=\bar\varphi_{t-1}^1(O_{j+1})=\bar\varphi_{t-1}^2(O_{j+1})$$ and $ \alpha_j=a_{t-1}^{-1}(\phi_{t-2})$, i.e.  $\hat\Phi_{\phi_{t-2}}(a_{t-1}^{\alpha_j})=\phi_{t-2}$ as stated.
\end{proof}

 \begin{theo}\label{ConvexityFundamental}The function $\bar{h}_{t-1}^* \circ \Phi_{\phi_{t-2}}$ is convex on $\R$.\end{theo}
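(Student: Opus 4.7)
The plan is to reduce global convexity of $G:=\bar h_{t-1}^*\circ\Phi_{\phi_{t-2}}$ to a single one-sided inequality at $x=-\phi_{t-2}$, using the piecewise-affine description of $\bar h_{t-1}$ from Theorem~\ref{slopes order} and the slope matching guaranteed by Lemma~\ref{teta}.

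The first step is to describe the two building blocks. Since $\bar h_{t-1}$ is a convex piecewise-affine function on $[m_{t-1}^{1+},M_{t-1}^{N-}]$ (and $+\infty$ elsewhere) with strictly increasing slopes $s_1<\cdots<s_d$ on intervals $[O_{i-1},O_i]$, standard Fenchel duality for piecewise-affine convex functions gives that $\bar h_{t-1}^*$ is finite, convex and piecewise affine on $\R$, with kinks exactly at $s_1,\ldots,s_d$ and with slope $O_i$ on $(s_i,s_{i+1})$. On the other hand, $\Phi_{\phi_{t-2}}(x)=x-\kappa_{t-1}|x+\phi_{t-2}|$ is piecewise linear with a unique kink at $-\phi_{t-2}$, of left slope $1+\kappa_{t-1}$ and right slope $1-\kappa_{t-1}$. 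Since the composition of a convex function with an affine function is convex, $G$ is convex on each of the half-lines $(-\infty,-\phi_{t-2}]$ and $[-\phi_{t-2},+\infty)$, so convexity on $\R$ reduces to
\[
G'_-(-\phi_{t-2})\le G'_+(-\phi_{t-2}),
\]
and the chain rule rewrites this as $(\bar h_{t-1}^*)'_-(-\phi_{t-2})(1+\kappa_{t-1})\le(\bar h_{t-1}^*)'_+(-\phi_{t-2})(1-\kappa_{t-1})$.

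The next step is to compute the two one-sided slopes of $\bar h_{t-1}^*$ at $-\phi_{t-2}$. By Theorem~\ref{slopes order} we have $s_1<-\phi_{t-2}<s_d$, so there is a last index $k$ with $s_k\le -\phi_{t-2}$, and necessarily $s_{k+1}>-\phi_{t-2}$. Lemma~\ref{teta} then forces $s_k=-\phi_{t-2}$, equivalently $a_{t-1}(\alpha_k)=\phi_{t-2}$ (since $\hat\Phi_{\phi_{t-2}}$ is strictly increasing and fixes $\phi_{t-2}$). In particular $1<k<d$, the point $-\phi_{t-2}$ is a kink of $\bar h_{t-1}^*$, and the one-sided slopes of $\bar h_{t-1}^*$ there are $O_{k-1}$ on the left and $O_k$ on the right. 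The inequality to prove therefore reduces to
\[
O_{k-1}(1+\kappa_{t-1})\le O_k(1-\kappa_{t-1}).
\]

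The heart of the argument, and the main technical obstacle, is this last inequality. My plan is to match the affine expressions $\varphi_{t-1}(\alpha_i,a_{t-1}(\alpha_i),\cdot)$ at the consecutive breakpoints $O_{k-1}$ and $O_k$. Since $a_{t-1}(\alpha_{k-1})>\phi_{t-2}>a_{t-1}(\alpha_{k+1})$, on each of the two adjacent pieces the map $\hat\Phi_{\phi_{t-2}}$ is affine, and a short computation collapses the matching conditions to
\[
O_{k-1}(1+\kappa_{t-1})-m_{t-1}^1=\frac{\alpha_k-\alpha_{k-1}}{a_{t-1}(\alpha_{k-1})-a_{t-1}(\alpha_k)},\qquad O_k(1-\kappa_{t-1})-m_{t-1}^1=\frac{\alpha_{k+1}-\alpha_k}{a_{t-1}(\alpha_k)-a_{t-1}(\alpha_{k+1})}.
\]
Both right-hand sides are reciprocals of the absolute values of two consecutive secant slopes of $a_{t-1}$. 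Because $a_{t-1}$ is a maximum of affine functions by \eqref{Deltahedging}, it is convex, so these secant slopes are non-decreasing in the index; being negative, their absolute values are non-increasing, and their reciprocals are therefore non-decreasing. This yields $O_{k-1}(1+\kappa_{t-1})-m_{t-1}^1\le O_k(1-\kappa_{t-1})-m_{t-1}^1$ and closes the proof. The subtle step is identifying $O_{k-1},O_k$ with those specific secant-slope reciprocals of $a_{t-1}$; the convexity of $a_{t-1}$ then does all the remaining work.
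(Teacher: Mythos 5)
Your argument follows the same route as the paper's own proof: reduce global convexity to the single junction inequality at $-\phi_{t-2}$, identify the two one-sided slopes of $\bar h_{t-1}^*\circ\Phi_{\phi_{t-2}}$ there as $(1+\kappa_{t-1})$ and $(1-\kappa_{t-1})$ times consecutive breakpoints $O$ of $\bar h_{t-1}$, compute those breakpoints by matching the affine pieces $\varphi_{t-1}(\alpha_i,a_{t-1}(\alpha_i),\cdot)$ (using Lemma~\ref{teta} to place a slope exactly at $-\phi_{t-2}$), and conclude from the convexity of the decreasing map $\alpha\mapsto a_{t-1}(\alpha)$ via monotonicity of its secant slopes. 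In the generic configuration your computation and conclusion coincide with the paper's.

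There is, however, a gap: you dispose of the degenerate configurations by reading the strict inequalities $s_1<-\phi_{t-2}<s_d$ off the statement of Theorem~\ref{slopes order}, but the paper's own development only supports the non-strict versions, and its proof of the present theorem explicitly treats the cases where \emph{all} slopes of $\bar h_{t-1}$ lie on one side of $-\phi_{t-2}$. Concretely, when $a_{t-1}(0)\le\phi_{t-2}$ one has $a_{t-1}(\alpha)\le\phi_{t-2}$ for every $\alpha\ge 0$, so every slope satisfies $s_i\ge -\phi_{t-2}$ and the first one equals $-\phi_{t-2}$ exactly; then your index $k$ equals $1$, there is no $\alpha_{k-1}$, and the left one-sided slope of $\bar h_{t-1}^*$ at $-\phi_{t-2}$ is the boundary value $O_0=m_{t-1}^{1+}$ rather than a secant-slope reciprocal. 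The required inequality $(1+\kappa_{t-1})O_0\le(1-\kappa_{t-1})O_1$ then has to be obtained from the boundary identities $O_0=m_{t-1}^{1+}$ (so $(1+\kappa_{t-1})O_0=m_{t-1}^1$) and $O_1\ge m_{t-1}^{1-}$, which is exactly the separate argument the paper supplies; the symmetric situation at the other end (all slopes $\le -\phi_{t-2}$, last one equal, using $O_d=M_{t-1}^{N-}$ and $O_{d-1}\le M_{t-1}^{N+}$) must be handled likewise. Your secant-slope mechanism is the heart of the proof, but as written it silently excludes these two cases, and they cannot be excluded on the strength of the results you cite.
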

\begin{proof}
    Recall that, by convexity, the slopes of $\bar{h}_{t-1}^*$ are non decreasing and coincide with the elements $O_i$ of the partition defining $\bar{h}_{t-1}$, see (\ref{partition}). Recall that by the proof of Theorem \ref{slopes order}, the slopes of $\bar{h}_{t-1}$ are $x_i=-\hat \Phi_{\phi_{t-2}}(a_{t-1}^{\alpha_i})$. They  are also non decreasing and define a partition for $\bar{h}_{t-1}^*$. The smallest one is smaller than $-\phi_{t-2}$ with the smallest $\alpha_i\le a_{t-1}^{-1}(\phi_{t-2})$ while the largest one is larger than $-\phi_{t-2}$ with the largest  $\alpha_i\ge a_{t-1}^{-1}(\phi_{t-2})$. \smallskip
 
 We first suppose that  $a_{t-1}^{0}\geq \phi_{t-2}$ hence $-\hat \Phi_{\phi_{t-2}}(a_{t-1}^{0})\leq -\phi_{t-2}$. Note  that $\bar\varphi_{t-1}$ coincides with $\bar\varphi_{t-1}^2$, see the proof of Theorem \ref{slopes order}, for  $x<M^{N-}_{t-1}$ sufficiently closed to $M^{N-}_{t-1}$,  which implies that the last slope is larger than $-\phi_{t-2}$.     Let  us consider the interval $[O_j,O_{j+1}]$ on which   $\bar{h}_{t-1}$ admits the  largest slope $x_j$ smaller than $-\phi_{t-2}$ so that the  slope of $\bar{h}_{t-1}$  is  $x_{j+1}>-\phi_{t-2}$ on the next interval $[O_{j+1},O_{j+2}]$.
This is possible if and only if at least one slope of $\bar{h}_{t-1}$  is strictly larger than $-\phi_{t-2}$. The case where all the slopes are smaller (resp. larger) than $-\phi_{t-2}$  will be considered later.\smallskip

We deduce that $\hat{O}_j^+ = (1 + \kappa_{t-1}) O_j$ and $\hat{O}_{j+1}^- = (1 - \kappa_{t-1}) O_{j+1}$ are the corresponding slopes for $\bar{h}_{t-1}^* \circ \Phi_{\phi_{t-2}}$ on the interval $[x_j,-\phi_{t-2}]$ and $[-\phi_{t-2},x_{j+1}]$ respectively. It suffices to check that $\hat{O}_j^+ \leq \hat{O}_{j+1}^-$ to deduce that $\bar{h}_{t-1}^* \circ \Phi_{\phi_{t-2}}$ is convex on $\R$. Indeed, the function $\bar{h}_{t-1}^* \circ \Phi_{\phi_{t-2}}$ is convex on $]-\infty, -\phi_{t-2}[$ and $]-\phi_{t-2}, +\infty[$, respectively. To see it, note that, $\bar{h}_{t-1}^*$ is convex on $\mathbb{R}$ and $\Phi_{\phi_{t-2}}$ is affine on each interval $]-\infty, -\phi_{t-2}[$ and $]-\phi_{t-2}, +\infty[$, respectively. \smallskip

By Lemma \ref{teta}, we may assume that $x_j=-\phi_{t-2}$, i.e.  $x_j=-\hat\Phi_{\phi_{t-2}}(a_{t-1}^{\alpha_j})$, where $\alpha_j = a_{t-1}^{-1}(\phi_{t-2})$ and the previous slope $x_{j-1}=-\hat\Phi_{\phi_{t-2}}(a_{t-1}^{\alpha_{j-1}}) <- \phi_{t-2}$ with  $\alpha_{j-1}<a_{t-1}^{-1}(\phi_{t-2})$. Note that $\bar{h}_{t-1}=-\bar\varphi_{t-1}$ coincides with $\bar{h}_{t-1}^1:=-\bar\varphi_{t-1}^1$ on $(-\infty,O_{j+1}]$ and with $\bar{h}_{t-1}^2:=-\bar\varphi_{t-1}^2$ on $[O_{j},\infty)$.

By  left continuity at point $O_j$, $\bar\varphi_{t-1}^1(O_j)$ and its left limit $\bar\varphi_{t-1}^1(O_j-)$ coincide, i.e.:
\bea\label{left}\notag
\bar\varphi_{t-1}(O_j)&=&\bar\varphi_{t-1}^1(O_j)=\bar\varphi_{t-1}^1(O_j-)=\varphi_{t-1}(\alpha_{j-1},a_{t-1}^{\alpha_{j-1}},O_j-),\\\notag
                    &=&\hat\Phi_{\phi_{t-2}}(a_{t-1}^{\alpha_{j-1}})O_j+y^1_{t-1}-a_{t-1}^{\alpha_{j-1}}m^1_{t-1}+\alpha_{j-1},\\
                    &=&[(1+\kappa_{t-1})a_{t-1}^{\alpha_{j-1}}-\kappa_{t-1}\phi_{t-2}]O_j+y^1_{t-1}-a_{t-1}^{\alpha_{j-1}}m^1_{t-1}+\alpha_{j-1}.\quad\quad\quad
\eea
Since $\bar\varphi_{t-1}$ is affine on $[O_j,O_{j+1}]$ with slope $\hat\Phi_{\phi_{t-2}}(a_{t-1}^{\alpha_j})=-\phi_{t-2}$, we have  
\bean \bar\varphi_{t-1}(O_{j+1})&=&\bar\varphi_{t-1}(O_j)+\hat\Phi_{\phi_{t-2}}(a_{t-1}^{\alpha_j})(O_{j+1}-O_j).
\eean
Therefore, by left continuity, $\bar\varphi_{t-1}(O_j)+\hat\Phi_{\phi_{t-2}}(a_{t-1}^{\alpha_j})(O_{j+1}-O_j)=\bar\varphi_{t-1}^1(\alpha_j,O_{j+1}-)$. Using  (\ref{left}) and the affine expression of $\bar\varphi_{t-1}$ with slope $\phi_{t-2}$ on $[O_j,O_{j+1}]$, the previous equality implies that   \[\hat O_j^+=m_{t-1}^1+\frac{\alpha_j-\alpha_{j-1}}{a_{t-1}^{\alpha_{j-1}}-a_{t-1}^{\alpha_j}}.\]

Similarly, by continuity at point $O_{j+1}$, we may prove that \[\hat O_{j+1}^-=m_{t-1}^1+\frac{\alpha_j-\alpha_{j+1}}{a_{t-1}^{\alpha_{j+1}}-a_{t-1}^{\alpha_j}}.\]
Thus, by  convexity of the map $\alpha \mapsto a_{t-1}(\alpha) = a_{t-1}^{\alpha}$, we conclude that  we have $\hat{O}_j^+ \leq \hat{O}_{j+1}^-$. \bigskip

Let us now consider the case where $x_i\ge -\phi_{t-2}$ for all $i$. This implies that $x_i=-\hat \Phi_{\phi_{t-2}}(a_{t-1}^{\alpha_i})$ with $a_{t-1}^{\alpha_i}\le \phi_{t-2}$ or $\alpha_i\ge a_{t-1}^{-1}(\phi_{t-2})$. Therefore, the first (and smallest) slope is necessary $-\phi_{t-2}$ and  $\bar\varphi_{t-1}=\bar\varphi_{t-1}^2$. Indeed, see the expression of $\bar\varphi_{t-1}^2$ in (\ref{phi-1}) which proves that the slope $-\phi_{t-2}$ is attained. Recall that $\bar\varphi_{t-1}=\bar\varphi_{t-1}^1$ on $[m_{t-1}^{1+}, m_{t-1}^{1-}]$, by the proof of Theorem \ref{slopes order} , hence the first interval $[O_j,O_{j+1}]$ on which the slope of $\bar{h}_{t-1}$ is  $-\phi_{t-2}$  is such that $O_j=m_{t-1}^{1+}$ hence $\hat O_j^+=m_{t-1}^{1}$. Moreover, we necessarily have $O_{j+1}\ge m_{t-1}^{1-}$, which implies that $\hat O_{j+1}^-\ge m_{t-2}^{1}$ hence $\hat O_j^-\ge \hat O_j^+$ as desired. {\bf Note that this case also allows to conclude when $a_{t-1}^0<\phi_{t-2}$}. Indeed, we then have $a_{t-1}^\alpha<\phi_{t-2}$ for any $\alpha\ge 0$ hence $x_i\ge -\phi_{t-2}$ for all $i$.\smallskip

At last, it remains to consider the case where $x_i\leq -\phi_{t-2}$ for all $i$. This implies that $x_i=-\hat \Phi_{\phi_{t-2}}(a_{t-1}^{\alpha_i})$ with $a_{t-1}^{\alpha_i}\ge \phi_{t-2}$ or $\alpha_i\le a_{t-1}^{-1}(\phi_{t-2})$. Therefore, the last (and largest) slope is necessary $-\phi_{t-2}$, see Theorem \ref{slopes order}, and  $\bar\varphi_{t-1}=\bar\varphi_{t-1}^1=\bar\varphi_{t-1}^2$ on the interval $[O_j,O_{j+1}]$. Moreover, on this interval,  the slope of $\bar{h}_{t-1}$ is  $-\phi_{t-2}$  so that that $O_{j+1}=M_{t-1}^{N-}$. This implies that $\hat O_{j+1}^-\ge M_{t-1}^{N}$. On the other hand, by Theorem \ref{slopes order}, we have $O_j\le M_{t-1}^{N+}$ hence $O_j^+\le M_{t-1}^{N}\le O_{j+1}^-$. The conclusion follows. 
\end{proof}

\begin{coro} \label{coro-convex-infprice} 
We have:
\bean
\left((\bar h_{t-1}^*\circ \Phi_{\phi_{t-2}})^{**}\circ \phi_{\phi_{t-2}}^{-1}\right)^*&=& \bar h_{t-1}.
\eean
\end{coro}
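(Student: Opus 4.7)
The plan is to apply the Fenchel--Moreau biconjugate theorem twice, using Theorem \ref{ConvexityFundamental} to make it effective at the inner step.

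First, I would verify that $\bar h_{t-1}^* \circ \Phi_{\phi_{t-2}}$ is proper, convex, and lower semicontinuous on $\R$. Convexity is precisely the content of Theorem \ref{ConvexityFundamental}. Lower semicontinuity is immediate since $\bar h_{t-1}^*$ is lsc as a Fenchel conjugate and $\Phi_{\phi_{t-2}}$ is continuous (being a sum of a linear term and an absolute value). Properness follows from Theorems \ref{slopes order} and \ref{optimization}: $\bar h_{t-1}$ is piecewise affine on $[m_{t-1}^{1+}, M_{t-1}^{N-}]$ and equals $+\infty$ elsewhere, so its conjugate is proper. The Fenchel--Moreau theorem then yields
\[
\left(\bar h_{t-1}^* \circ \Phi_{\phi_{t-2}}\right)^{**} = \bar h_{t-1}^* \circ \Phi_{\phi_{t-2}}.
\]

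Next, I would compose on the right by the bijection $\Phi_{\phi_{t-2}}^{-1}$ and use $\Phi_{\phi_{t-2}} \circ \Phi_{\phi_{t-2}}^{-1} = \mathrm{id}$ to obtain
\[
\left(\bar h_{t-1}^* \circ \Phi_{\phi_{t-2}}\right)^{**} \circ \Phi_{\phi_{t-2}}^{-1} = \bar h_{t-1}^*.
\]
Taking the Fenchel conjugate of both sides then produces $\bar h_{t-1}^{**}$ on the right. A second application of the Fenchel--Moreau theorem, this time to $\bar h_{t-1}$ itself — which is proper, convex (Theorem \ref{slopes order}), and lsc as a piecewise affine function on a closed interval extended by $+\infty$ outside — gives $\bar h_{t-1}^{**} = \bar h_{t-1}$, which is the claimed identity.

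The argument is essentially bookkeeping around Fenchel--Moreau: the real mathematical content has already been absorbed into Theorem \ref{ConvexityFundamental}, without which the inner biconjugate would only equal the convex lsc envelope of $\bar h_{t-1}^* \circ \Phi_{\phi_{t-2}}$ rather than the function itself, and the identity would fail. Consequently I do not anticipate any genuine obstacle at this corollary stage; only the properness and lsc verifications need to be stated carefully to make sure both applications of Fenchel--Moreau are legitimate.
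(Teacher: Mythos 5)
Your argument matches the paper's own proof of Corollary \ref{coro-convex-infprice}: both invoke Theorem \ref{ConvexityFundamental} to apply the biconjugate identity to $\bar h_{t-1}^*\circ \Phi_{\phi_{t-2}}$, cancel $\Phi_{\phi_{t-2}}\circ\Phi_{\phi_{t-2}}^{-1}$, and conclude from the convexity of $\bar h_{t-1}$. Your explicit checks of properness and lower semicontinuity are details the paper leaves implicit, but the route is the same.
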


\begin{proof}Since $\bar h_{t-1}^*\circ \Phi_{\phi_{t-2}}$ is convex, it follows that $(\bar h_{t-1}^*\circ \Phi_{\phi_{t-2}})^{**}=\bar h_{t-1}^*\circ \Phi_{\phi_{t-2}}$ and the conclusion follows from the convexity of $\bar h_{t-1}$. \end{proof}

To avoid that the minimal super-hedging price is $p_{t-1}(g_t,\phi_{t-2}) = -\infty$, a relative AIP condition at time $t-1$ with respect to the choice of $g_t$ is required. Notably, this condition does not depend on $\phi_{t-2}$, as stated in Definition \ref{AIPrel}.

\subsection{Conclusion: Infimum price under AIP and associated optimal strategy}\label{GeneralStepProc}
\bigskip

From the previous subsection, we deduce the main result, i.e. Theorem \ref{theo:payoff}, which shows that the infimum super-hedging price at time $t-1$ remains a payoff function of $S_{t-1}$, in the same structural form as $g_t$. In fact, we prove that this infimum coincides with the minimal super-hedging price.

Propositions \ref{Prop-Conv11} and \ref{Prop-Conv21} give explicit expressions for the payoff function $g_{t-1}$ in terms of $g_t$, as derived from the proof of Theorem \ref{theo:payoff}.

Finally, Propositions \ref{OptStrat11} and \ref{OptStrat21} provide the optimal trading strategies corresponding to these minimal super-hedging prices.

\section{Appendix}\label{A}
\subsection{Conditional essential infimum and supremum}
The definitions of conditional essential supremum and infimum of a family of random variables are given in \cite[Proposition 2.5]{CL}.

\begin{theo} Let $\cH$ and $\cF$ be complete $\sigma$-algebras such that $\cH\subseteq \cF$ and let  $\Gamma=(\gamma_i)_{i\in I}$ be a family of real-valued $\cF$-measurable random variables.  There exists a unique $\R\cup\{+\infty\}$-valued $\cH$-measurable random variable, denoted by $\esssup_{\cH} \Gamma$, such that $\esssup_{\cH} \Gamma\ge \gamma_i$ a.s. for all $i\in I$ and,   if $\bar \gamma$ is $\cH$-measurable and satisfies $\bar \gamma \ge \gamma_i$ a.s. for all $i\in I$, then  $\bar \gamma \ge \esssup_{\cH} \Gamma$ a.s..
\end{theo}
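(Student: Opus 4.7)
The plan is to construct $\esssup_\cH \Gamma$ as the pointwise smallest $\cH$-measurable upper bound of the family $\Gamma$. Concretely, I would introduce the set
\[
\cU := \{\eta : \eta \text{ is } \cH\text{-measurable, } \R\cup\{+\infty\}\text{-valued, and } \eta \ge \gamma_i \text{ a.s. for every } i\in I\}.
\]
This set is nonempty since the constant $+\infty$ belongs to it, and it is stable under pairwise minima: if $\eta_1,\eta_2\in\cU$, then $\min(\eta_1,\eta_2)$ is $\cH$-measurable and still dominates each $\gamma_i$ almost surely. This lattice structure is what will allow a countable exhaustion.

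The next step is to apply the classical (unconditional) essential infimum construction to the downward-directed family $\cU$. Composing with a bounded strictly increasing function such as $\arctan$ to handle possibly infinite values, I would select a sequence $(\eta_n)\subset\cU$ with $\E[\arctan\eta_n]\downarrow \inf_{\eta\in\cU}\E[\arctan\eta]$, then replace $\eta_n$ by $\min(\eta_1,\dots,\eta_n)\in\cU$ to enforce monotonicity. Setting $\eta^\star := \lim_n \eta_n$ almost surely and defining $\esssup_\cH\Gamma := \eta^\star$, the monotone limit is $\cH$-measurable (using completeness of $\cH$), and passing to the limit in $\eta_n \ge \gamma_i$ a.s. for each fixed $i\in I$ gives $\eta^\star \ge \gamma_i$ a.s., so $\eta^\star\in\cU$.

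Uniqueness and minimality then come by a short argument. Any $\cH$-measurable $\bar\gamma$ dominating every $\gamma_i$ a.s.\ belongs to $\cU$; the construction of $\eta^\star$ as a pointwise limit of a minimizing decreasing sequence, together with the lattice property $\min(\eta_n,\bar\gamma)\in\cU$, implies $\bar\gamma\ge\eta^\star$ a.s., which is precisely the required dominance property. If two random variables satisfy both the upper-bound and the minimality property of the statement, then each dominates the other a.s., forcing equality.

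The main technical subtlety lies in the extraction of the countable minimizing sequence from the uncountable family $\cU$, and in justifying that the a.s.\ limit $\eta^\star$ indeed attains $\inf_{\eta\in\cU}\E[\arctan\eta]$; this relies on monotone convergence after the $\arctan$ transformation and on the stability of $\cU$ under finite minima, which ensures that taking running minima preserves membership in $\cU$ while decreasing the expectation. Once these points are established, completeness of $\cH$ is what allows identifying the almost sure limit with an $\cH$-measurable representative, closing the proof.
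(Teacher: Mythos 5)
Your construction is correct and is the classical one: work inside the lattice $\cU$ of $\cH$-measurable upper bounds, use a bounded strictly increasing transform to extract a minimizing sequence for the expectation, stabilize it by running minima, and identify the monotone limit as the least element of $\cU$. Note that the paper does not prove this statement at all --- it is recalled from \cite[Proposition 2.5]{CL} --- so there is no internal proof to compare against; your argument matches the standard one used in that reference, and the only points worth tightening are that the final minimality step should explicitly invoke $\E[\arctan\min(\eta^\star,\bar\gamma)]\ge\E[\arctan\eta^\star]$ together with $\min(\eta^\star,\bar\gamma)\le\eta^\star$ to force a.s.\ equality, and that completeness of $\cH$ is not actually needed for the measurability of the pointwise limit (only for choosing representatives up to null sets).
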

Recall that the conditional support $\supp_{\cH} X$ of a random variable $X$ is defined as the smallest $\cH$-measurable random set that contains $X$ a.s., see \cite{ELM}. The following proposition is a key tool for our approach, see proof in given in \cite{CL}.
\begin{prop}\label{esssup-sup}
  Let $X\in L^0(\R,\cF)$  and let $h: \Omega\times \R\mapsto \R$ be a $\cH\otimes\cB(\R)$-measurable function which is lower semi-continuous ($l.s.c.$) in x. Then,
  \bea
  \esssup_{\cH}h(X) = \sup_{x\in \supp_{\cH}X}h(x)\; a.s..
  \eea

\end{prop}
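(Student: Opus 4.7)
The plan is to prove the two inequalities separately. For $\esssup_{\cH} h(X) \le \sup_{x\in\supp_{\cH}X} h(x)$, the argument is immediate: $X\in\supp_{\cH}X$ a.s.\ gives $h(X)\le \sup_{x\in\supp_{\cH}X} h(x)$ pointwise, and the right-hand side is $\cH$-measurable (the supremum of a jointly measurable, l.s.c.\ integrand over an $\cH$-measurable closed-valued random set is measurable by the projection theorem applied to the superlevel sets), so it dominates the essential supremum by the universal property of $\esssup_{\cH}$.

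For the reverse inequality I would argue by contradiction. Set $\zeta := \sup_{x\in\supp_{\cH}X} h(x)$, let $\xi$ be any $\cH$-measurable random variable with $\xi\ge h(X)$ a.s., and suppose $A_n := \{\xi + 1/n < \zeta\}\in\cH$ has positive measure for some $n$. On $A_n$ the $\cH$-measurable random set $\Gamma_n(\omega) := \{x\in\supp_{\cH}X(\omega) : h(\omega,x) \ge \xi(\omega) + 1/n\}$ is non-empty, and its graph is $\cH\otimes\cB(\R)$-measurable thanks to the lower semi-continuity of $h(\omega,\cdot)$ (the superlevel set $\{h\ge c\}$ being closed in $x$). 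A Kuratowski--Ryll-Nardzewski selection then produces an $\cH$-measurable $Y_n\in\supp_{\cH}X$ with $h(\omega,Y_n(\omega))\ge\xi(\omega)+1/n$ on $A_n$. Applying l.s.c.\ of $h(\omega,\cdot)$ at $Y_n(\omega)$ and a second measurable selection yields an $\cH$-measurable $\varepsilon_n > 0$ such that $h(\omega,x) > \xi(\omega) + 1/(2n)$ whenever $|x - Y_n(\omega)| < \varepsilon_n(\omega)$. The characterizing property of the conditional support, namely $\mathbb{P}(|X - Y_n| < \varepsilon_n \mid \cH) > 0$ a.s.\ for any $\cH$-measurable selection $Y_n$ of $\supp_{\cH}X$ and any $\cH$-measurable $\varepsilon_n > 0$, then forces $\{|X - Y_n| < \varepsilon_n\}\cap A_n$ to have positive $\mathbb{P}$-measure, on which $h(X) > \xi$, contradicting $\xi\ge h(X)$.

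The main obstacle is the twofold measurable selection: first extracting $Y_n$ from the random closed set where $h$ nearly attains its supremum on $\supp_{\cH}X$, then extracting an $\cH$-measurable radius $\varepsilon_n$ that converts qualitative lower semi-continuity into a uniform quantitative bound on a random neighbourhood. Both rely on the joint $\cH\otimes\cB(\R)$-measurability of $\{(\omega,x) : h(\omega,x) \ge c\}$, which is where l.s.c.\ in $x$ is really used, and on the defining property of $\supp_{\cH}X$ as the smallest $\cH$-measurable closed-valued random set containing $X$ a.s. A minor additional point is the case $\zeta = +\infty$ on an $\cH$-measurable set, which is handled by truncating $\zeta$ to $\min(\zeta,M)$ for an increasing sequence of $\cH$-measurable constants $M$ before running the argument above.
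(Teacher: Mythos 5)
The paper does not prove this proposition at all: it is quoted as a known result with the proof deferred to \cite{CL}, so there is no internal argument to compare against. Your proposal is, in substance, the standard proof of this fact and it is essentially correct: the inequality $\esssup_{\cH}h(X)\le\sup_{x\in\supp_{\cH}X}h(x)$ follows from $X\in\supp_{\cH}X$ a.s.\ together with the $\cH$-measurability of the right-hand side and the minimality property defining $\esssup_{\cH}$; the reverse inequality uses exactly the two ingredients one expects, namely that lower semi-continuity spreads a pointwise lower bound $h(\omega,Y_n(\omega))\ge\xi(\omega)+1/n$ to a whole random neighbourhood of $Y_n(\omega)$, and that the minimality of the conditional support forces $\mathbb{P}(|X-Y_n|<\e_n\mid\cH)>0$ a.s.\ on $\{Y_n\in\supp_{\cH}X\}$ (otherwise one could delete the open ball $B(Y_n,\e_n)$ and obtain a strictly smaller $\cH$-measurable closed set containing $X$ a.s.). This is where l.s.c.\ is genuinely needed, and your counter-positive structure correctly yields $\xi\ge\zeta-1/n$ a.s.\ for every admissible $\xi$ and every $n$.

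One technical slip should be repaired: for an l.s.c.\ function it is the \emph{sublevel} sets $\{h\le c\}$ that are closed, while the superlevel sets $\{h>c\}$ are open; so the multifunction $\Gamma_n(\omega)=\{x\in\supp_{\cH}X(\omega):h(\omega,x)\ge\xi(\omega)+1/n\}$ need not be closed-valued and the Kuratowski--Ryll-Nardzewski theorem does not apply directly. The selection of $Y_n$ should instead be obtained from the von Neumann--Aumann selection theorem, which only requires the graph $\{(\omega,x):x\in\supp_{\cH}X(\omega),\,h(\omega,x)\ge\xi(\omega)+1/n\}$ to be $\cH\otimes\cB(\R)$-measurable together with completeness of $\cH$ -- both available here, since the paper works with complete $\sigma$-algebras and $\supp_{\cH}X$ is a closed-valued measurable random set (hence has measurable graph). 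The same completeness is what legitimates the projection-theorem argument for the measurability of $\sup_{x\in\supp_{\cH}X}h(x)$ in the easy direction, and the measurable choice of the radius $\e_n$ (e.g.\ via a countable dyadic exhaustion) is routine. With these adjustments the proof is complete, including your truncation device for the case $\sup_{x\in\supp_{\cH}X}h(x)=+\infty$.
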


Recall that, if $h$  is a $\cH$-normal integrand on $\R$ (see Definition 14.27 in \cite{rw}) then $h$ is $\cH \otimes \cB(\R)$-measurable and is l.s.c. in $x$, see  \cite[Definition 1.5]{rw}),  and the converse holds true if $\cH$ is complete for some probability measure, see \cite[Corollary 14.34]{rw}. Similarly, we have the following result, see Lemma 5.5 \cite{Lep2}.

\begin{prop}\label{PropEssinfNI}
For any $\cH$-normal integrand $f$ on $\R$, we have
$$\essinf_{\cH}\{f(A):~A\in L^0(\R,\cH)\}=\inf_{a\in \R}f(a).$$
\end{prop}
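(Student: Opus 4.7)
The plan is to establish the identity by proving the two opposing inequalities separately. The lower bound $\essinf_{\cH}\{f(A):A\in L^0(\R,\cH)\}\ge \inf_{a\in\R}f(a)$ is pointwise and essentially immediate, while the reverse inequality requires a measurable selection argument tailored to normal integrands.

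For the lower bound, I first invoke the standard fact that, for an $\cH$-normal integrand $f$, the inf-projection $\omega\mapsto\inf_{a\in\R}f(\omega,a)$ is itself $\cH$-measurable (see Theorem~14.37 in \cite{rw}). For any $A\in L^0(\R,\cH)$ the pointwise inequality $f(\omega,A(\omega))\ge \inf_{a\in\R}f(\omega,a)$ is trivial, and since the right-hand side is $\cH$-measurable, the defining minimality property of the conditional essential infimum immediately yields $\essinf_{\cH}\{f(A)\}\ge \inf_{a\in\R}f(a)$ almost surely.

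For the reverse inequality, fix $n\in\N$ and split $\Omega$ into the $\cH$-measurable sets $B=\{\omega:\inf_{a}f(\omega,a)>-\infty\}$ and $B^c$. Consider the multifunction $\Gamma_n$ defined by $\Gamma_n(\omega)=\{a\in\R:f(\omega,a)\le \inf_{a'}f(\omega,a')+1/n\}$ on $B$ and $\Gamma_n(\omega)=\{a\in\R:f(\omega,a)\le -n\}$ on $B^c$. Both sets are nonempty by construction, and closed since $f(\omega,\cdot)$ is lower semicontinuous. Because $f$ is a normal integrand and the inf-projection is $\cH$-measurable, $\Gamma_n$ is an $\cH$-measurable closed-valued multifunction. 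A classical measurable selection theorem (e.g., Theorem~14.5 in \cite{rw}) then provides a selector $A_n\in L^0(\R,\cH)$ with $A_n(\omega)\in \Gamma_n(\omega)$ a.s. Consequently $f(A_n)\le \inf_{a}f+1/n$ on $B$ and $f(A_n)\le -n$ on $B^c$, so $\essinf_{\cH}\{f(A)\}\le f(A_n)$. Letting $n\to\infty$ yields $\essinf_{\cH}\{f(A)\}\le \inf_{a\in\R}f(a)$ almost surely, which combined with the previous bound gives the identity.

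The main subtlety I expect is the need to patch together two selectors across $B$ and $B^c$, and to verify rigorously that the multifunction $\Gamma_n$ is measurable with nonempty closed values; this measurability is precisely what the normal-integrand hypothesis is designed to ensure, so once it is invoked the argument closes cleanly. No additional technical machinery beyond standard results on normal integrands and closed-valued measurable selection is required.
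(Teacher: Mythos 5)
Your proof is correct and is the standard argument: the paper itself does not prove Proposition~\ref{PropEssinfNI} but quotes it from Lemma~5.5 of \cite{Lep2}, which rests on exactly the two ingredients you use --- the characterization of $\essinf_{\cH}$ as the largest $\cH$-measurable minorant combined with measurability of the inf-projection $\omega\mapsto\inf_{a\in\R}f(\omega,a)$ for one inequality, and a measurable selection of $\varepsilon$-minimizers (with the set $\{\inf_a f(\cdot,a)=-\infty\}$ treated separately) for the other. Two harmless slips worth fixing: the defining property of the conditional essential infimum you invoke is its \emph{maximality} among $\cH$-measurable minorants, not minimality; and on the $\cH$-measurable set where $f(\omega,\cdot)\equiv+\infty$ your $\Gamma_n(\omega)$ is all of $\R$, so that case should be noted (any constant selector works there and the desired inequality is trivial), since writing $f(\omega,a)\le\inf_{a'}f(\omega,a')+1/n$ with an infinite infimum requires this small caveat.
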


\subsection{Auxiliary results}

\begin{lemm}\label{LemmConvexityComp}
Let $f$ be a function from $\R$ to $[-\infty,\infty]$ such that $f=\infty$ on $\R_{-}$. For every   real-valued convex function $\varphi$,  $f^*\circ \varphi$ is a convex function.  
\end{lemm}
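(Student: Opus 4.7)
The plan is to reduce the claim to the classical fact that a non-decreasing convex function composed with a convex function is again convex. The nontrivial ingredient provided by the hypothesis $f = +\infty$ on $\R_-$ is precisely the monotonicity of $f^*$, which is what lets the composition argument go through regardless of the sign of the convex function $\varphi$.

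First I would observe that $f^*$ is convex on $\R$ as the pointwise supremum of the affine functions $y \mapsto xy - f(x)$ indexed by $x \in \R$. This is a standard property of the Legendre--Fenchel transform and requires no assumption on $f$.

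Next I would establish that $f^*$ is non-decreasing on $\R$. Because $f(x) = +\infty$ for $x < 0$, the terms $xy - f(x)$ contribute $-\infty$ for $x < 0$, so
\[
f^*(y) = \sup_{x \ge 0}\bigl(xy - f(x)\bigr).
\]
For $y_1 \le y_2$ and any $x \ge 0$ we have $xy_1 - f(x) \le xy_2 - f(x)$; taking the supremum in $x \ge 0$ yields $f^*(y_1) \le f^*(y_2)$.

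Finally I would chain the two facts. For $\lambda \in [0,1]$ and $x_1, x_2 \in \R$, convexity of $\varphi$ gives
\[
\varphi\bigl(\lambda x_1 + (1-\lambda)x_2\bigr) \le \lambda\,\varphi(x_1) + (1-\lambda)\,\varphi(x_2),
\]
and applying the non-decreasing $f^*$ preserves the inequality:
\[
f^*\bigl(\varphi(\lambda x_1 + (1-\lambda)x_2)\bigr) \le f^*\bigl(\lambda\,\varphi(x_1) + (1-\lambda)\,\varphi(x_2)\bigr).
\]
Convexity of $f^*$ then bounds the right-hand side by $\lambda f^*(\varphi(x_1)) + (1-\lambda) f^*(\varphi(x_2))$, yielding convexity of $f^* \circ \varphi$. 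The only point that requires minor care is the extended-value arithmetic when $f^*$ attains $\pm\infty$, but the chain of inequalities remains valid in $[-\infty,+\infty]$. There is no real obstacle here; the lemma is essentially a packaging of the monotonicity trick enabled by the one-sided infinity constraint on $f$.
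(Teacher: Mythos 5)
Your proof is correct. It does, however, package the argument differently from the paper. The paper's proof is a one-step supremum argument: since $f=+\infty$ on $\R_-$, one has $f^*\circ\varphi(x)=\sup_{y\ge 0}\bigl(\varphi(x)\,y-f(y)\bigr)$, and for each fixed $y\ge 0$ the map $x\mapsto \varphi(x)\,y-f(y)$ is convex (a non-negative multiple of $\varphi$ plus a constant), so the composition is convex as a pointwise supremum of convex functions. You instead factor the same hypothesis through the conjugate: $f=+\infty$ on $\R_-$ forces $f^*$ to be non-decreasing, and then the classical ``non-decreasing convex composed with convex'' rule finishes the job. The two routes exploit the identical mechanism (the dual variable is confined to $\R_+$), so neither is more general here; the paper's version is more compact, while yours isolates the structural reason the hypothesis matters (monotonicity of $f^*$) and is more modular if one wanted to reuse that fact elsewhere. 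Your handling of the extended-value arithmetic is also sound: since $f^*(y)\ge x_0y-f(x_0)>-\infty$ for all $y$ whenever $f$ is somewhere finite, and $f^*\equiv-\infty$ otherwise, the chain of inequalities never produces an ill-defined expression.
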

\begin{proof}
Since $\varphi$ is a convex function,  the mapping $x\mapsto \varphi(x)y-f(y)$ is convex, for every fixed $y\in \R_+$. Observe that

$$f^*\circ \varphi(x)=\sup\{\varphi(x)y-f(y),y\in \R \}=\sup\{\varphi(x)y-f(y),y\in \R_+ \},$$ so that $f^*\circ \varphi$ is convex as a pointwise supremum of convex functions.
\end{proof}

\begin{lemm}\label{LemmDualComp}
Let $\gamma\in\cA$, $\gamma(x):=ax+b$, and $\varphi$ be a bijection. Then, $(\gamma^*\circ\varphi)^*$ is an affine function given by $(\gamma^* \circ \varphi)^*(x):=\varphi^{-1}(a)x+b$ and  $(\gamma^*\circ\varphi)^{**}=\gamma^*\circ\varphi.$
\end{lemm}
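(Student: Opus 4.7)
The strategy is direct computation from the definition of the Legendre--Fenchel conjugate, using the fact that $\gamma$ is affine so that $\gamma^{*}$ becomes an indicator-type function supported at a single point.

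First, I would compute $\gamma^{*}$ explicitly. For $\gamma(x)=ax+b$, the identity $\gamma^{*}(y)=\sup_{x\in\mathbb{R}}\bigl(xy-ax-b\bigr)=\sup_{x\in\mathbb{R}} x(y-a)-b$ shows that $\gamma^{*}(y)=-b$ when $y=a$ and $\gamma^{*}(y)=+\infty$ otherwise. In other words, $\gamma^{*}=-b+\delta_{\{a\}}$ in the convex-analytic sense.

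Next, I would exploit bijectivity of $\varphi$. Since $\varphi$ is a bijection, the condition $\varphi(x)=a$ has the unique solution $x=\varphi^{-1}(a)$. Therefore $\gamma^{*}\circ\varphi(x)=-b$ if $x=\varphi^{-1}(a)$ and $+\infty$ otherwise, i.e.\ $\gamma^{*}\circ\varphi=-b+\delta_{\{\varphi^{-1}(a)\}}$. Taking the Fenchel conjugate of this function reduces the sup to the single value $x=\varphi^{-1}(a)$:
\[
(\gamma^{*}\circ\varphi)^{*}(x)=\sup_{z\in\mathbb{R}}\bigl(xz-(\gamma^{*}\circ\varphi)(z)\bigr)=x\,\varphi^{-1}(a)+b,
\]
which is the affine function announced in the statement.

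For the identity $(\gamma^{*}\circ\varphi)^{**}=\gamma^{*}\circ\varphi$, I would invoke the Fenchel--Moreau theorem: since $\gamma^{*}\circ\varphi$ is proper, convex (its effective domain is the singleton $\{\varphi^{-1}(a)\}$, on which it equals $-b$) and lower semicontinuous (a singleton is closed), it coincides with its biconjugate. Alternatively, one can compute $(\gamma^{*}\circ\varphi)^{**}(y)=\sup_{x}\bigl(xy-\varphi^{-1}(a)x-b\bigr)$, which equals $-b$ at $y=\varphi^{-1}(a)$ and $+\infty$ elsewhere, reproducing $\gamma^{*}\circ\varphi$.

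There is no real obstacle here; the only subtlety is to handle the extended-valued nature of $\gamma^{*}$ correctly, and to make explicit use of $\varphi$ being a bijection (so that the preimage of $\{a\}$ is exactly $\{\varphi^{-1}(a)\}$). No convexity of $\varphi$ is needed for this lemma.
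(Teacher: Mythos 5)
Your proposal is correct and follows essentially the same route as the paper: compute $\gamma^{*}$ as $-b$ plus the indicator of the singleton $\{a\}$, use bijectivity of $\varphi$ to identify the effective domain of $\gamma^{*}\circ\varphi$ with $\{\varphi^{-1}(a)\}$, and then evaluate the conjugate and biconjugate by direct computation (your alternative direct verification of $(\gamma^{*}\circ\varphi)^{**}=\gamma^{*}\circ\varphi$ is exactly the paper's argument, and the Fenchel--Moreau shortcut is equally valid). No gaps.
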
 
\begin{proof}
Recall that

$$\gamma^* \circ \varphi(y)=\sup\{z\varphi(y)-\gamma(z),z\in \R \}=\sup\{(\varphi(y)-a)z-b,z\in \R \}.$$
We deduce that 
$$\gamma^* \circ\varphi(y)=-b 1_{\varphi^{-1}(a)}(y)+ \infty 1_{\R\setminus \varphi^{-1}(a)}(y) $$
Therefore,
$$(\gamma^* \circ\varphi)^*(x):=\sup\{xy-\gamma^* \circ \varphi(y),\,y\in \R \}=x \varphi^{-1}(a)+b.$$
Consequently,
\begin{align*}
	(\gamma^* \circ\varphi)^{**}(y)&:=\sup\{xy-(\gamma^* \circ \varphi)^*(x),\,x\in \R \}\\
	& =\sup\{(y-\varphi^{-1}(a))x-b,x\in \R \}\\
	&= -b 1_{\varphi^{-1}(a)}(y)+ \infty 1_{\R\setminus \varphi^{-1}(a)}(y)\\
	&=\gamma^* \circ\varphi(y)
\end{align*}

\end{proof}

\begin{prop}\label{propconjf} Suppose that  $f$ is a function defined on $\R$ with values in $]-\infty,\infty]$ such that $f=\infty$ on $\R_-$. Then,  for every  bijection  $\Phi$ such that $\Phi^{-1}$ is real-valued and convex,  there exists a unique lower semi-continuous convex function $h$ such $h^*\circ \Phi$ is  l.s.c.,  convex and satisfies $f^{**}=(h^*\circ \Phi)^*$. 
Moreover, $h=(f^*\circ \Phi^{-1})^*$ and we have: 
\bean
h=[(h_1^*\circ \Phi)^{**}\circ \Phi^{-1}]^*,
\text{ where }  h_1(x):=\sup\{(\gamma^*\circ \Phi^{-1})^*(x), \gamma\in\cA\,\, {\rm and\,} \,\gamma\le f\}.
\eean
\end{prop}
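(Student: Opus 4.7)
The plan is to construct $h$ via the explicit formula $h:=(f^*\circ \Phi^{-1})^*$ and derive everything from the Fenchel--Moreau theorem. First I would apply Lemma~\ref{LemmConvexityComp} to the hypotheses $f=+\infty$ on $\R_-$ and $\Phi^{-1}$ real-valued convex, yielding that $f^*\circ \Phi^{-1}$ is convex. Since $f^*$ is lower semicontinuous as a Fenchel conjugate and $\Phi^{-1}$ is continuous (finite convex functions on $\R$ are continuous), the composition $f^*\circ \Phi^{-1}$ is also l.s.c. Setting $h := (f^*\circ \Phi^{-1})^*$ then defines an l.s.c. convex function, and Fenchel--Moreau gives $h^* = (f^*\circ \Phi^{-1})^{**} = f^*\circ \Phi^{-1}$. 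Composing on the right with $\Phi$ yields $h^*\circ \Phi = f^*$, which is l.s.c. convex as required, and a last conjugation produces the announced identity $(h^*\circ\Phi)^* = f^{**}$.

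Uniqueness would then be extracted by reversing this chain: if $h'$ is any l.s.c. convex function such that $h'^*\circ\Phi$ is l.s.c. convex and $(h'^*\circ\Phi)^* = f^{**}$, applying one more conjugation gives $h'^*\circ\Phi = f^{***}=f^*$, hence $h'^* = f^*\circ\Phi^{-1}$, and one final biconjugation forces $h' = (f^*\circ\Phi^{-1})^* = h$.

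To establish the final affine representation, I would identify $h_1$ with $h$ by direct computation. For any $\gamma(x)=ax+b\in\cA$, Lemma~\ref{LemmDualComp} provides $(\gamma^*\circ \Phi^{-1})^*(x) = \Phi(a)x + b$; and the constraint $\gamma\le f$ on all of $\R$ is equivalent to $b\le -f^*(a)$. Hence
\[
h_1(x) \;=\; \sup_{a,b:\,b\le -f^*(a)}\!\bigl(\Phi(a)x+b\bigr)\;=\;\sup_{a\in\R}\bigl(\Phi(a)x-f^*(a)\bigr),
\]
and the substitution $y=\Phi(a)$ identifies this with $(f^*\circ\Phi^{-1})^*(x)=h(x)$. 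Consequently $h_1^*\circ\Phi = h^*\circ\Phi = f^*$ is already l.s.c. convex, so $(h_1^*\circ\Phi)^{**} = h^*\circ\Phi$, and therefore $(h_1^*\circ\Phi)^{**}\circ\Phi^{-1} = h^*$, whose conjugate is $h$ by Fenchel--Moreau.

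The only delicate point, and the place where I expect to spend some care, is the bookkeeping of the l.s.c./convex hypotheses needed to invoke Fenchel--Moreau on each intermediate composition, in particular for $f^*\circ\Phi^{-1}$. Beyond this verification, no argument deeper than Lemmas~\ref{LemmConvexityComp}--\ref{LemmDualComp} and the change of variable $y=\Phi(a)$ is required.
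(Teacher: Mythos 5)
Your proof is correct, and the existence/first part (constructing $h=(f^*\circ\Phi^{-1})^*$ via Lemma~\ref{LemmConvexityComp} and Fenchel--Moreau, then reading off $h^*\circ\Phi=f^*$) is essentially the paper's argument. Where you genuinely diverge is in the treatment of $h_1$ and of uniqueness. The paper never computes $h_1$ explicitly: it proves the two inequalities $h\ge h_1$ (from $\gamma\le f\Rightarrow\gamma\le f^{**}=(h^*\circ\Phi)^*$) and $(h_1^*\circ\Phi)^*\ge f^{**}$ (from $h_1\ge(\gamma^*\circ\Phi^{-1})^*$ and Lemma~\ref{LemmDualComp}), and only concludes the equality $(h_1^*\circ\Phi)^*=(h^*\circ\Phi)^*$ of conjugates, from which the displayed formula for $h$ follows after one more conjugation; it then declares uniqueness a consequence of that formula, since $h_1$ is defined without reference to $h$. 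You instead identify $h_1$ with $h$ pointwise, by parametrizing the affine minorants of $f$ through $b\le -f^*(a)$, invoking Lemma~\ref{LemmDualComp} to get $(\gamma_{a,b}^*\circ\Phi^{-1})^*(x)=\Phi(a)x+b$, and performing the change of variable $y=\Phi(a)$; this yields the strictly stronger statement $h_1=h$, from which the displayed identity is immediate, and you obtain uniqueness directly by inverting the chain $(h'^*\circ\Phi)^*=f^{**}\Rightarrow h'^*\circ\Phi=f^*\Rightarrow h'=h$ rather than routing it through $h_1$. Your version is more transparent and proves more; the paper's version avoids the (mild) bookkeeping about which $a$ admit a feasible $b$, i.e.\ the set where $f^*$ is finite, and the degenerate case where $f$ has no affine minorant, both of which you should state explicitly if you write this up, but neither causes any difficulty.
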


\begin{proof}
Since $f=\infty$ on $\R_{-}$, by Lemma \ref{LemmConvexityComp}, $f^*\circ \Phi^{-1}$ is a convex function. As it is also l.s.c,  $(f^*\circ \Phi^{-1})^{**}=f^*\circ \Phi^{-1}$. Hence, with  $h=(f^*\circ \Phi^{-1})^*$, we  have $(h^*\circ \Phi)^*=f^{**}$. Moreover, $f^*\circ \Phi^{-1}$ is lower semi-continuous and convex by Lemma \ref{LemmConvexityComp}. Therefore, $h^*=f^*\circ \Phi^{-1}$ and $h^*\circ \Phi=f^*$ is  lower semi-continuous and convex.

 Uniqueness follows from the second statement. To see it,  let us consider an arbitrary $\gamma \in\cA$ such that $\gamma \le f$ . Then, we deduce that $\gamma=\gamma^{**}\le f^{**}$ hence  $(h^*\circ \Phi)^*\ge \gamma$. Since $h^*\circ \Phi$ is  lower semi-continuous and convex by assumption, we deduce that $(h^*\circ \Phi)^{**}=h^*\circ \Phi$. This implies that $h\ge (\gamma^*\circ \Phi^{-1})^*$ . Taking the supremum on  every  $\gamma \in\cA$ such that $\gamma \le f$, we deduce that $h\ge h_1$. Considering the biconjugate in both sides, we deduce that $(h^*\circ \Phi)^*\ge (h_1^*\circ \Phi)^*$.

On the other hand, for every $\gamma \in\cA$ such that $\gamma \le f$, note  that by definition $h_1\ge (\gamma^*\circ \Phi^{-1})^*$. Hence $h_1^* \le \gamma^*\circ \Phi^{-1}$ by Lemma \ref{LemmDualComp}. Then $(h_1^*\circ \Phi)^*\ge \gamma$. 
Taking the supremum over all $\gamma$, we have  $(h_1^*\circ \Phi)^*\ge f^{**}$. This is equivalent to $(h_1^*\circ \Phi)^*\ge (h^*\circ \Phi)^*$.
Finally, from the first part we get $(h_1^*\circ \Phi)^*=(h^*\circ \Phi)^*$, which holds when $h=[(h_1^* \circ \Phi )^{**}\circ \Phi^{-1})^*]$.
\end{proof}

\begin{lemm}\label{KeyTool}
Let the function $T(\alpha)=\max_{j=1,\cdots,P}T^j(\alpha)$ be such that  the functions $T^j(\alpha)=a_j\alpha+b_j$ are distinct affine functions with distinct slopes. Suppose that there exist $i,j$ such that $a_i\le 0$ and $a_j>0$. Then, for any $\alpha_0\ge 0$ and $\alpha_1\ge \alpha_0$, we have:
\bean 
\inf_{\alpha \in \R}T(\alpha)&=&\max_{a_i\le 0,a_j>0}T_i(I_{i,j}),\\
\inf_{\alpha \ge \alpha_0}T(\alpha)&=&\max_{a_i\le 0,a_j>0}T_i(I_{i,j})\vee \max_{a_j>0}T^j(\alpha_0),\\
\inf_{\alpha \in[\alpha_0,\alpha_1] }T(\alpha)&=&\max_{a_i\le 0,a_j>0}T_i(I_{i,j})\vee \max_{a_j>0}T^j(\alpha_0)\vee \max_{a_i\le 0}T^i(\alpha_1)
\eean
where $I_{ij}$, $i,j=1,\cdots,P$ are the solutions to the equations $T^i(I_{ij})=T^j(I_{ij})$.\smallskip

Moreover, these formulas are still valid if $a_i>0$ for every $i$ if we adopt the convention that $max(\emptyset)=-\infty$. At last, suppose that  $\max_{j=1,\cdots,P}a_j=0$ and $\max_{j=1,\cdots,P-1}a_j<0$. Then, $\inf_{\alpha \ge \alpha_0}T(\alpha)=T_{P-1}(I_{P-1,P})=T(+\infty)$.

\end{lemm}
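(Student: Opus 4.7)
The plan is to exploit that $T$ is a finite maximum of affine functions, hence convex and piecewise affine on $\mathbb{R}$, whose graph is the upper envelope of the lines $T^j$. Since the slopes are distinct, as $\alpha$ moves from $-\infty$ to $+\infty$ the active piece switches in order of strictly increasing slope. Under the assumption that some $a_i \le 0$ and some $a_j > 0$, $T$ is coercive, so $\inf_{\mathbb{R}} T$ is finite and attained at a vertex where the active slope jumps from some $a_{i^*} \le 0$ to some $a_{j^*} > 0$. At that vertex, $T^{i^*}=T^{j^*}$, so the minimizer equals $I_{i^*,j^*}$ and $\inf_{\mathbb{R}} T = T^{i^*}(I_{i^*,j^*})$.

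For the first formula it remains to show $T^i(I_{i,j}) \le \inf_{\mathbb{R}} T$ for every pair with $a_i \le 0$ and $a_j > 0$. This is immediate from the key observation that $\max(T^i, T^j)$ is itself a coercive convex function whose global minimum is precisely at $I_{i,j}$ (where the two lines cross), with value $T^i(I_{i,j}) = T^j(I_{i,j})$. Since $T \ge \max(T^i, T^j)$ pointwise, we get $\inf_{\mathbb{R}} T \ge T^i(I_{i,j})$. Taking the maximum over all admissible pairs and combining with the active pair $(i^*,j^*)$ yields the identity.

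For the constrained infima, the argument splits on the position of $\alpha^*$. If $\alpha^* \ge \alpha_0$ (resp.\ $\alpha^* \in [\alpha_0,\alpha_1]$) the constrained infimum coincides with the unconstrained one, and the extra candidates are dominated: for $a_j > 0$, $T^j$ is increasing, so $T^j(\alpha_0) \le T^j(\alpha^*) \le T(\alpha^*)$, and symmetrically $T^i(\alpha_1) \le T(\alpha^*)$ whenever $a_i \le 0$. If $\alpha^* < \alpha_0$ (resp.\ $\alpha^* > \alpha_1$), then $T$ is monotone on the constraint set, so the constrained infimum equals $T(\alpha_0)$ (resp.\ $T(\alpha_1)$); at such a boundary point, being strictly past (resp.\ before) the minimum, the dominant piece has slope $>0$ (resp.\ $\le 0$) by the ordered-activation property, so $T(\alpha_0) = \max_{a_j > 0} T^j(\alpha_0)$ and $T(\alpha_1) = \max_{a_i \le 0} T^i(\alpha_1)$. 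Taking the maximum over the three contributions then reproduces the claimed formulas; the dominated terms simply contribute nothing extra.

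The degenerate statements are handled analogously. If every $a_j > 0$, then $T$ is increasing, $\inf_{\mathbb{R}} T = -\infty = \max(\emptyset)$, $\inf_{\alpha \ge \alpha_0} T = T(\alpha_0) = \max_{a_j>0} T^j(\alpha_0)$, and $\inf_{[\alpha_0,\alpha_1]} T = T(\alpha_0)$; the convention $\max(\emptyset)=-\infty$ makes the formulas consistent. When $a_P = 0$ and $a_j < 0$ for $j < P$, $T$ is non-increasing with horizontal asymptote $b_P$, so $T(+\infty) = b_P$; since $T^{P-1}(I_{P-1,P}) = T^P(I_{P-1,P}) = b_P$, we conclude $\inf_{\alpha \ge \alpha_0} T = b_P = T^{P-1}(I_{P-1,P}) = T(+\infty)$. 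The main bookkeeping obstacle is ensuring that the ``wrong-side'' contributions $\max_{a_j > 0} T^j(\alpha_0)$ (when $\alpha_0 \le \alpha^*$) and $\max_{a_i \le 0} T^i(\alpha_1)$ (when $\alpha_1 \ge \alpha^*$) remain below the unconstrained infimum, but this follows cleanly from the monotonicity of each individual $T^j$ combined with $T^j \le T$.
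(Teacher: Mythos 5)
Your proof is correct and follows essentially the same route as the paper's: identify $\alpha^*=\max(\operatorname{argmin}T)$ as a crossing point where the active slope changes sign, then handle the constrained cases by comparing $\alpha^*$ with $\alpha_0$ and $\alpha_1$ and using monotonicity of the individual $T^j$ together with $T^j\le T$. Your one streamlining --- deducing $\inf_{\R}T\ge T^i(I_{i,j})$ from $T\ge\max(T^i,T^j)$ and the fact that this two-line envelope is minimized exactly at $I_{i,j}$ --- is cleaner than the paper's two-case comparison of $I_{i,j}$ with $I_{i^*,j^*}$, but the argument is otherwise the same.
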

\begin{proof} By  assumption,  $\alpha^*=\max({\rm argmin} T)$ exists and   $T^*=T(\alpha^*)=\inf_{\alpha \in \R}T(\alpha)$. Note that $T$ is stricly increasing on $[\alpha^*,\infty)$. Moreover, $\alpha^*=I_{i^*,j^*}$ for some $i^*,j^*$ such that $a_{i^*}\le 0$ and $a_{j^*}>0$. As we also have $T^*=T^i(I_{i^*,j^*})=T^i(I_{i^*,j^*})$,   $T^*\le \max_{a_i\le 0,a_j>0}T_i(I_{i,j})$. Let us now prove the reverse inequality.   To so so, consider any $I_{ij}$ such that $a_i\le 0$ and $a_j>0$. If $I_{ij}\le I_{i^*,j^*}$, we have $T^*=T(\alpha^*)\ge T^j(I_{i^*,j^*})$ by definition of $T$. As $a_j>0$, $T^j$ is non decreasing hence $T^j(I_{i^*,j^*})\ge T^j(I_{ij})$. Otherwise, $I_{ij}\ge I_{i^*,j^*}$ and, similarly,  $T^*\ge T^i(I_{i^*,j^*})\ge T^i(I_{ij})$
since $a_i\le 0$. Therefore, $T^*\ge T^i(I_{ij})$ in any case so that we finally conclude that $T^*\ge \max_{a_i\le 0,a_j>0}T_i(I_{i,j})$. \smallskip

Let us now consider $T_+^*=T(\alpha^*_+)=\inf_{\alpha \ge \alpha_0}T(\alpha)$. In the case where $\alpha^*\ge \alpha_0$, we have $\alpha^*_+=\alpha^*$ and $T_+^*=T^*$. Moreover, $T^*_+=T(\alpha^*_+)\ge T^{j}(\alpha^*_+)$ by definition of $T$ so that $T^*_+\ge T^{j}(\alpha_0)$  if $a_j>0$, i.e. $T^*_+=T^*\vee \max_{a_j>0}T^j(\alpha_0)$. 
Otherwise, if  $\alpha^*< \alpha_0$, then  $\alpha^*_+=\alpha_0$. Then, we have $T^*_+=T(\alpha_0)\ge T^j(\alpha_0)$ for any $j$, by definition of $T$. In particular, $T^*_+\ge \max_{a_j>0}T^j(\alpha_0)$. Let us show the reverse inequality.  To do so, note that by definition of $\alpha^*$, we necessarily have $T^*_+=T(\alpha_0)=T^k(\alpha_0)$ for some $k$ such that $a_k>0$ because $T$ is strictly increasing on $[\alpha^*,\infty)$.  Therefore, we have $T^*_+\le \max_{a_j>0}T^j(\alpha_0)$ and finally the equality holds.   At last, as $T$ is non decreasing on 
$[\alpha^*,\infty)\ni \alpha_0$, we have $T^*_+=T(\alpha_0)\ge T(\alpha^*)=T^*$ hence $T^*_+=T^*_+\vee T^*$. The conclusion follows.\smallskip

At last, consider $T_{++}^*=T(\alpha^*_{++})=\inf_{\alpha \in [\alpha_0,\alpha_1]}T(\alpha)$. In the case where $\alpha^*_+\le \alpha_1$, we have $T_{++}^*=T_{+}^*=T(\alpha^*_+)$ and, by definition, $T_{++}^*\ge T^i(\alpha^*_+)$ for any $i$. Since $a_i\le 0$ implies that $T^i(\alpha^*_+)\ge T^i(\alpha_1)$ we deduce that $T_{++}^*\ge \max_{a_i\le 0} T^i(\alpha_1)$. It follows that $T_{++}^*=T_{+}^*\vee \max_{a_i\le 0} T^i(\alpha_1)$. Consider the last case $\alpha^*_+>\alpha_1$. Recall that $\alpha^*_+=\max(\alpha^*,0)$ hence we necessarily have $\alpha^*_+=\alpha^*>0$. Since $T$ is non increasing on $(-\infty, \alpha^*]$, we deduce that $T_{++}^*=T(\alpha_1)\ge T(\alpha^*)=T^*=T^*_+$. This implies that $T_{++}^*=T^*_+$. Moreover, $T(\alpha_1)\ge \max_{a_i\le 0}T^i(\alpha_1)$ by definition of $T$ hence $T_{++}^*=T^*_+\vee \max_{a_i\le 0}T^i(\alpha_1)$. The last statements are trivial so that  the conclusion follows.
\end{proof}

\begin{lemm} \label{Lemma-dual-affi-v1}
Let $K=[m,M]$ be a compact subset of $\R$,  $t\in [m,M]$ and  $a,b,c,d\in\R$ such that $a\le c$.  Let $f$ be a continuous function defined as
\bean
f(x)=(ax+b)1_{[m,t]}+(cx+d)1_{[t,M]}+ \infty 1_{R\setminus K} 
\eean
We have 
\bean
f^*(x)&:=& \left((x-a)m-b\right)1_{]-\infty,a]}+\left((x-a)t-b\right)1_{[a,c]}+\left((x-c)M-d\right)1_{]c,\infty]}. 
\eean
\end{lemm}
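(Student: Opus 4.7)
\begin{Proof}[Proof plan for Lemma \ref{Lemma-dual-affi-v1}]
The plan is to compute the Fenchel conjugate directly from its definition, exploiting the fact that $f \equiv +\infty$ outside the compact interval $K=[m,M]$ reduces the supremum to a bounded optimization problem. Writing
\[
f^*(x) = \sup_{y\in\R} \bigl(xy - f(y)\bigr) = \sup_{y\in[m,M]} \bigl(xy - f(y)\bigr),
\]
I would split the sup according to the two affine pieces of $f$:
\[
f^*(x) = \max\Bigl(\sup_{y\in[m,t]}\bigl((x-a)y - b\bigr),\ \sup_{y\in[t,M]}\bigl((x-c)y - d\bigr)\Bigr).
\]
Each inner supremum is that of an affine function over a compact interval, so it is attained at one of the two endpoints, with the choice dictated by the sign of the slope.

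The first key observation is that the continuity of $f$ at the junction $y=t$ forces the compatibility condition $at+b=ct+d$. In particular, the two linear pieces $(x-a)y-b$ and $(x-c)y-d$ take the \emph{same} value at $y=t$, namely $xt-at-b = xt-ct-d$. I would then carry out the three cases for $x$ separately:
\begin{itemize}
\item If $x \le a \le c$, both slopes $x-a$ and $x-c$ are $\le 0$, so each piece is maximized at the left endpoint of its interval; the junction value at $t$ is dominated by the value at $m$ for the first piece (since $m\le t$), and the value at $t$ for the second piece equals the junction value of the first piece. Thus $f^*(x) = (x-a)m - b$.
\item If $a \le x \le c$, the first slope is $\ge 0$ and the second is $\le 0$, so both pieces attain their maximum exactly at $y=t$, and by the continuity relation these common values coincide, giving $f^*(x) = (x-a)t - b$.
\item If $c \le x$, both slopes are $\ge 0$, so each piece is maximized at its right endpoint; the second piece, giving $(x-c)M-d$, dominates because $M \ge t$.
\end{itemize}

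The main (and essentially only) technical point is the cross-comparison in the last case, where one must verify $(x-c)M-d \ge (x-a)t - b$ for $x \ge c$. Using the continuity relation $d=(a-c)t+b$ to eliminate $d$, the difference reduces to $(x-c)(M-t)$, which is manifestly non-negative on the relevant range. An analogous one-line check handles the boundary case $x=a$ in the first region. Patching the three cases together yields the stated piecewise expression for $f^*$, which completes the proof.
\end{Proof}
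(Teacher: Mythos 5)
Your proof is correct: the paper states Lemma \ref{Lemma-dual-affi-v1} without proof, and your direct computation of $\sup_{y\in[m,M]}(xy-f(y))$ by splitting over the two affine pieces, locating each maximum at an endpoint according to the sign of the slope, and using the continuity relation $at+b=ct+d$ for the cross-comparison $(x-c)M-d-\bigl((x-a)t-b\bigr)=(x-c)(M-t)\ge 0$ is exactly the routine argument the authors evidently had in mind. No gaps.
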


\begin{lemm}\label{Lemma-dual-affi-v2} Suppose that $\alpha_1<\alpha_2<\alpha_3$ and $T_1<T_2$. Consider a continuous function of the form
$$f(x)=(\alpha_1x+\beta_1)1_{x\le T_1}+(\alpha_2x+\beta_2)1_{T_1\le x\le T_2}+(\alpha_3x+\beta_3)1_{ x\ge T_2}.$$
Then,
\bean f^*(x)&=&\infty,\quad {\rm if\,} x<\alpha_1\, {\rm or}\, x> \alpha_3,\\
&=&(x-\alpha_1)T_1-\beta_1,\quad {\rm if\,} \alpha_1\le x\le \alpha_2,\\
&=&(x-\alpha_2)T_2-\beta_2,\quad {\rm if\,} \alpha_2\le x\le \alpha_3.
\eean
\end{lemm}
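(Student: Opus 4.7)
The plan is to use the standard fact that the Fenchel conjugate of a continuous, convex, piecewise affine function on $\mathbb{R}$ is itself piecewise affine, with finite domain equal to the closed interval spanned by its slopes, and whose affine pieces are determined by the break points of $f$. Since the slopes satisfy $\alpha_1<\alpha_2<\alpha_3$, the function $f$ is convex and the computation reduces to examining the sign of the derivative of $y\mapsto xy-f(y)$ on each of the three regions $(-\infty,T_1]$, $[T_1,T_2]$ and $[T_2,\infty)$.

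First I would observe that if $x<\alpha_1$ then on $(-\infty,T_1]$ the function $y\mapsto xy-f(y)=(x-\alpha_1)y-\beta_1$ has negative slope $x-\alpha_1<0$, so letting $y\to-\infty$ yields $f^*(x)=+\infty$. Symmetrically, if $x>\alpha_3$, then letting $y\to+\infty$ on $[T_2,\infty)$ gives $f^*(x)=+\infty$.

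Next, assume $x\in[\alpha_1,\alpha_2]$. On $(-\infty,T_1]$ the slope of $y\mapsto xy-f(y)$ is $x-\alpha_1\ge 0$, so the map is non-decreasing; on $[T_1,T_2]$ the slope is $x-\alpha_2\le 0$, so the map is non-increasing; on $[T_2,\infty)$ the slope $x-\alpha_3<0$ is strictly negative. Hence the supremum is attained at $y=T_1$, giving
\[
f^*(x)=xT_1-(\alpha_1 T_1+\beta_1)=(x-\alpha_1)T_1-\beta_1.
\]
The analogous analysis for $x\in[\alpha_2,\alpha_3]$ shows that $y\mapsto xy-f(y)$ is non-decreasing on $(-\infty,T_1]\cup[T_1,T_2]$ (since $x-\alpha_1>0$ and $x-\alpha_2\ge 0$) and non-increasing on $[T_2,\infty)$ (since $x-\alpha_3\le 0$), so the supremum is attained at $y=T_2$, giving
\[
f^*(x)=xT_2-(\alpha_2 T_2+\beta_2)=(x-\alpha_2)T_2-\beta_2,
\]
using continuity of $f$ at $T_2$ to write $f(T_2)=\alpha_2T_2+\beta_2$.

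There is no real obstacle: everything is a routine sign analysis on three intervals, and the only mild care needed is to treat the endpoints $x=\alpha_2$ consistently (both formulas agree there, since at $x=\alpha_2$ the slope of $y\mapsto xy-f(y)$ vanishes on $[T_1,T_2]$ so every point of $[T_1,T_2]$ is a maximizer and the two expressions coincide thanks to the continuity relation $\alpha_1 T_1+\beta_1=\alpha_2 T_1+\beta_2$). This yields exactly the three cases stated in the lemma.
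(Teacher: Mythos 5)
Your proof is correct: the sign analysis of $y\mapsto xy-f(y)$ on the three intervals, together with the convexity of $f$ guaranteed by $\alpha_1<\alpha_2<\alpha_3$ and the continuity relations $\alpha_1T_1+\beta_1=\alpha_2T_1+\beta_2$ and $\alpha_2T_2+\beta_2=\alpha_3T_2+\beta_3$, yields exactly the stated formula, and your check that the two expressions agree at $x=\alpha_2$ is the right consistency verification. The paper states this lemma without proof, and your argument is precisely the standard one it implicitly relies on, so there is nothing to compare against and nothing to add.
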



\subsection{Proof of Section \ref{MR}} \quad
\begin{Proof} \label{Pr-Prop-Conv}Proof of Theorem \ref{theo:payoff}.\end{Proof}

Recall that, by Corollary \ref{coro-convex-infprice}, the infimum super hedging price at time $t-1$ is  \bean p_{t-1}(g_t)=\inf_{\alpha \geq 0, \, a \geq a_{t-1}(\alpha)}\varphi_{t-1}(\alpha,a,S_{t-1})\eean where, $\varphi_{t-1}(\alpha,a,x)=\hat\Phi_{\phi_{t-2}}(a)x+\bar y^1_{t-1}-am^1_{t-1}+\alpha.$  \smallskip
 
 Consider the first case where $S_{t-1}\in [m^{1+}_{t-1},m^{1-}_{t-1}]$. This is equivalent to \bea \label{C1}
 \alpha_{t-1}^1\leq 1+\kappa_{t-1},\quad
 \alpha_{t-1}^1\geq 1-\kappa_{t-1}.
 \eea
 In this case, i.e. under condition \eqref{C1}, Theorem \ref{slopes order} claims that the infimum super hedging price at time $t-1$ is  given by \bean p_{t-1}(g_t)=\inf_{\alpha \geq 0}\varphi_{t-1}(\alpha,a_{t-1}(\alpha)\vee\phi_{t-2},S_{t-1})\eean 
 
 Le us introduce:
 \bean\delta_{\alpha}&:=&\hat\Phi_{\phi_{t-2}}(a_{t-1}(\alpha)\vee\phi_{t-2})S_{t-1}+\bar y^1_{t-1}-a_{t-1}(\alpha)\vee\phi_{t-2}m^1_{t-1}+\alpha,\\
 &=&\rho_{t-1}(a_{t-1}(\alpha)\vee\phi_{t-2})S_{t-1}+\hat g_t^1(\alpha_{t-1}S_{t-1})-\kappa_{t-1}\phi_{t-2}S_{t-1},
 \eean where
 \bea\label{def:rhocase1}
  \rho_{t-1}&=&(1+\kappa_{t-1})-\alpha^1_{t-1}.
  \eea Note that $\rho_{t-1}$ is positive under Condition \eqref{C1} and we have:
 \bean
\delta_{\alpha}
&=& \rho_{t-1}(a_{t-1}(\alpha)\vee\phi_{t-2})S_{t-1}+\hat g_t^1(\alpha_{t-1}S_{t-1})-\kappa_{t-1}\phi_{t-2}S_{t-1}\\
&=& \max_{i=1,\dots,2N_t}\psi_{t-1}^i(\alpha),
 \eean
  where
   \bean
\psi_{t-1}^1(\alpha)&=& \rho_{t-1}\phi_{t-2}S_{t-1}+\bar y_{t-1}^1-\kappa_{t-1}\phi_{t-2}S_{t-1}+w^1_{t-1}\alpha\\
   \psi_{t-1}^i(\alpha)&=& (1-w^i_{t-1})\bar y_{t-1}^i+w^i_{t-1}\bar y_{t-1}^1-\kappa_{t-1}\phi_{t-2}S_{t-1}+w^i_{t-1}\alpha,\quad i=2,\dots,N_t\\
       \psi_{t-1}^i(\alpha)&=& (1-w^i_{t-1})\bar Y_{t-1}^{i-N_t}+w^i_{t-1}\bar y_{t-1}^1-\kappa_{t-1}\phi_{t-2}S_{t-1}+w^i_{t-1}\alpha,\quad i=N_t+1,\dots,2N_t
\eean
with 
\bea\label{def:epsicase1}
   w^1_{t-1}&=&1,\\
   w^i_{t-1}&=&1-\frac{\rho_{t-1}}{\alpha_{t-1}^i-\alpha_{t-1}^1},\quad i=2,\dots,N_t,\\
    w^i_{t-1}&=&1-\frac{\rho_{t-1}}{\beta_{t-1}^{i-N_t}-\alpha_{t-1}^1},\quad i=N_t+1,\dots,2N_t. 
   \eea

   Note that under \eqref{C1}, $1-w^i_{t-1}$ is positive and $\delta_{\alpha}=\max_{i=1,\cdots,2N-1}\psi_{t-1}^i(\alpha)$. By virtue of Lemma \ref{KeyTool}, since $w^1_{t-1}=1>0$, we have:
   \bean
    p_{t-1}(g_t)=\max_{w^i_{t-1}\leq0,w^j_{t-1}>0}\psi_{t-1}^i(I_{ij})\vee\max_{w^j_{t-1}>0}\psi_{t-1}^j(0)
   \eean 
Clearly, $\max_{w^j_{t-1} > 0} \psi_{t-1}^j(0)$ is convex in $S_{t-1}$ since, for any $j \in \{1, \dots, 2N_t-1\}$, $\psi_{t-1}^j(0)$ is the sum of positive convex functions in $S_{t-1}$ provided that $w^j_{t-1}\ge 0$ and since $1 - w^j_{t-1}\ge 0$  under \eqref{C1}. At last, let us denote 
$$\tilde y_{t-1}^i=\tilde y_{t-1}^i(S_{t-1}):=\bar  y_{t-1}^i 1_{i\in\{1,\dots,N_t\}}+\bar  Y_{t-1}^{i-N_t}1_{i\in\{N_t+1,\dots,2N_t\}}.$$

Note that each  mapping $S_{t-1} \mapsto \tilde y_{t-1}^i(S_{t-1})$, $i=1,\cdots,2N_t$, is convex by assumption. Let us solve the equation $\psi_{t-1}^i(I_{ij}) = \psi_{t-1}^j(I_{ij})$. We get that  
\bean
I_{ij} &=& -\bar y_{t-1}^1 + \frac{(1 - w^i_{t-1})\tilde y_{t-1}^i - (1 - w^j_{t-1})\tilde y_{t-1}^j}{w^j_{t-1} - w^i_{t-1}},\,j\ne 1,\\
I_{i1} &=& -\bar y_{t-1}^1 +\tilde y_{t-1}^i-\frac{\rho_{t-1}}{w^1_{t-1}-w^i_{t-1}}\phi_{t-2}S_{t-1}.
\eean

Substituting this expression into $\psi_{t-1}^i(I_{ij})$, we obtain that: 
\bean
\psi_{t-1}^i(I_{ij}) =:g_{t-1}^{i,j}(\phi_{t-2},S_{t-1})&:=& \frac{(1 - w^i_{t-1})w^j_{t-1}}{w^j_{t-1}- w^i_{t-1}}\tilde y_{t-1}^i - \frac{(1 - w^j_{t-1})w^i_{t-1}}{w^j_{t-1} - w^i_{t-1}}\tilde y_{t-1}^j \\
&&- \kappa_{t-1}\phi_{t-2}S_{t-1}+\frac{-w^i_{t-1}}{1-w^i_{t-1}}\rho_{t}\phi_{t-2}S_{t-1}1_{\{j=1\}}.
\eean
Since $c^{i,j}_{t-1}:=\frac{(1 - w^i_{t-1})w^j_{t-1}}{w^j_{t-1} - w^i_{t-1}}=:\lambda^{i,j}_{t-1}\ge 0$ and $d^{i,j}_{t-1}=- \frac{(1 - w^j_{t-1})w^i_{t-1}}{w^j_{t-1} - w^i_{t-1}}\ge 0$, with $c^{i,j}_{t-1}+d^{i,j}_{t-1}=1$, we deduce that $g_{t-1}^{i,j}$ is a convex function of $S_{t-1}$. It is of the form $g_{t-1}^{i,j}(\phi_{t-2},x)=\hat g_{t-1}^{i,j}(x)-\hat \mu_{t-1}^{i,j}\phi_{t-2}x$ where $\hat \mu_{t-1}^{i,j}=\kappa_{t-1}$ if $j\ne 1$. Note that $\hat g_{t-1}^{i,j}(x)=c^{i,j}_{t-1}\tilde y_{t-1}^i(x)+d^{i,j}_{t-1}\tilde y_{t-1}^j(x)$, $x=S_{t-1}$. Otherwise, if $j=1$, $\hat \mu_{t-1}^{i,1}=\hat \mu_{t-1}^{i}=\kappa_{t-1}+\frac{w^i_{t-1}}{1-w^i_{t-1}}\rho_{t}$ so that $1+\hat \mu_{t-1}^{i}\in \{\alpha_{t-1}^i,\beta_{t-1}^i\}$. We deduce that $1+\hat \mu_{t-1}^{i,j}>0$ for any $i,j$. This means that each function $g_{t-1}^{i,j}$ is of type (\ref{GeneralPayoffForm1}) at time $t-1$ if $w_i\le 0$ and $w_j>0$.  Similarly, the functions $g_{t-1}^{1,j}(\phi_{t-2},x)=\psi_{t-1}^j(0)$ for $j$ such that $w^j>0$ are also of same type (\ref{GeneralPayoffForm1}) and we have either $\hat\mu_{t-1}^{j,1}=\kappa_{t-1}$ if $j\ne 1$ or $\hat\mu_{t-1}^{1,1}=\kappa_{t-1}-\rho_t=(1+\hat \mu_t^1)\alpha_{t-1}-1$ so that $1+\hat\mu_{t-1}^{j,1}>0$.  \smallskip
Since $c^{i,j}_{t-1}:=\frac{(1 - w^i_{t-1})w^j_{t-1}}{w^j_{t-1} - w^i_{t-1}}\ge 0$ and $d^{i,j}_{t-1}=- \frac{(1 - w^j_{t-1})w^i_{t-1}}{w^j_{t-1} - w^i_{t-1}}\ge 0$, we deduce that $g_{t-1}^{i,j}$ is a convex function of $S_{t-1}$. It is of the form $g_{t-1}^{i,j}(\phi_{t-2},x)=\hat g_{t-1}^{i,j}(x)-\hat \mu_{t-1}^{i,j}\phi_{t-2}x$ where $\hat \mu_{t-1}^{i,j}=\kappa_{t-1}$ if $j\ne 1$. Note that $\hat g_{t-1}^{i,j}(x)=c^{i,j}_{t-1}\tilde y_{t-1}^i(x)+d^{i,j}_{t-1}\tilde y_{t-1}^j(x)$, $x=S_{t-1}$. Otherwise, if $j=1$, $\hat \mu_{t-1}^{i1}=\hat \mu_{t-1}^{i}=\kappa_{t-1}+\frac{w^i_{t-1}}{1-w^i_{t-1}}\rho_{t}$ so that $1+\hat \mu_{t-1}^{i}\in \{\alpha_{t-1}^i,\beta_{t-1}^i\}$. We deduce that $1+\hat \mu_{t-1}^{i,j}>0$ for any $i,j$. This means that each function $g_{t-1}^{i,j}$ is of type (\ref{GeneralPayoffForm1}) at time $t-1$ if $w_i\le 0$ and $w_j>0$.  Similarly, the functions $g_{t-1}^{1,j}(\phi_{t-2},x)=\psi_{t-1}^j(0)$ for $j$ such that $w_{t-1}^j>0$ are also of same type (\ref{GeneralPayoffForm1}) and we  either have $\hat\mu_{t-1}^{j,1}=\kappa_{t-1}$ if $j\ne 1$ or $\hat\mu_{t-1}^{1,1}=\kappa_{t-1}-\rho_t=\alpha_{t-1}^1-1$ so that $1+\hat\mu_{t-1}^{j,1}>0$.  \smallskip

Notice that the set of all $(i,j)$ such that $w_{t-1}^i\le 0$ and $w_{t-1}^j>0$ does not depend on $S_{t-1}$ nor $\phi_{t-2}$. The same holds for the set of all $j$ such that $w_{t-1}^j>0$. Therefore, we conclude that $p_{t-1}(g_t)$ is a convex function of $S_{t-1}$,  is of type (\ref{GeneralPayoffForm1}) at time $t-1$ and satisfies \eqref{formGi-11}–\eqref{formGi1}.\bigskip

Consider the second case where $S_{t-1}\in [m_{t-1}^{1-},M_{t-1}^{N-}]$ which is equivalent to the condition
\bea\label{C2}
\beta_{t-1}^{N_t}\geq 1-\kappa_{t-1},\quad \alpha_{t-1}^1\leq 1-\kappa_{t-1}.
\eea
By Theorem \ref{optimization} and Corollary \ref{coro-convex-infprice}, the infimum super hedging price under \eqref{C2} at time $t-1$ is  \bean p_{t-1}(g_t)&=&\inf_{\alpha \geq 0}\varphi_{t-1}(\alpha,a_{t-1}(\alpha),S_{t-1})
=\inf_{\alpha \geq 0}\delta_{\alpha}
\eean
where 
\bean
\delta_{\alpha}&=&\hat\Phi_{\phi_{t-2}}(a_{t-1}(\alpha))S_{t-1}+\bar y^1_{t-1}-a_{t-1}(\alpha)m^1_{t-1}+\alpha\\
&=& \max_{i=1,2}\left( \rho_t^ia_{t-1}(\alpha)S_{t-1}+\bar y^1_{t-1}+(-1)^i\kappa_{t-1}\phi_{t-2}S_{t-1}+\alpha\right)\\
&=& \max_{i=1,2}\max_{j=2,\dots,2N_t}\left((1-w_{t-1}^{(i,j)})\tilde y_{t-1}^j+w_{t-1}^{(i,j)}\bar y_{t-1}^1+w_{t-1}^{(i,j)}\alpha+(-1)^i\kappa_{t-1}\phi_{t-2}S_{t-1}\right)\\
&=:& \max_{i=1,2}\max_{j=2,\dots,2N_t}\psi_{t-1}^{i,j}(\alpha)
\eean
where, for $r=1,2$,
\bea\label{epsiloncase2}
 w_{t-1}^{(r,j)}&=&1-\frac{\rho_t^r}{\alpha_{t-1}^j-\alpha_{t-1}^1},\quad j=2,\dots,N_t\\
    w_{t-1}^{(r,j)}&=&1-\frac{\rho_t^r}{\beta_{t-1}^{j-N_t}-\alpha_{t-1}^1},\quad j=N_t+1,\dots,2N_t\notag\\
 \rho_t^r&=&(1-(-1)^r \kappa_{t-1})-\alpha_{t-1}^1.\notag\eea
Note that under condition \eqref{C2}, both $\rho_t^1$ and $\rho_t^2$ are positive and so $1 - w_{t-1}^{(1,j)}$ and $1 - w_{t-1}^{(2,j)}$ are.

Under Condition \eqref{C2}, $w_{t-1}^{(2,2N_t)}\geq 0$. Let us first suppose that $w_{t-1}^{(2,2N_t)}>0$
By virtue of Lemma \ref{KeyTool}, we have:
   \bean
    p_{t-1}(g_t)=\max_{w_{t-1}^{(r,i)}\leq0,w_{t-1}^{(m,j)}>0}\psi_{t-1}^{r,i}(I_{ij}^{r,m})\vee\max_{w^{m,j}>0}\psi_{t-1}^{m,j}(0)
   \eean 
 As in the first case, $\max_{w_{t-1}^{(m,j)}>0}\psi_{t-1}^{m,j}(0)$ is convex in $S_{t-1}.$ Moreover, each function $\psi_{t-1}^{m,j}(0)$ is of the form
 $\psi_{t-1}^{m,j}(0)=\hat g_{t-1}^{1,m,j}(S_{t-1})-\hat\mu_{t-1}^{(m,j)}\phi_{t-2}S_{t-1}$ where the coefficent $\hat\mu_{t-1}^{(m,j)}=(-1)^{m}\kappa_{t-1}$ satisfies $1+\hat\mu_{t-1}^{(m,j)}>0$. This means that 
 $\psi_{t-1}^{m,j}(0)$ is a function of $\phi_{t-2}$ and $S_{t-1}$ that satisfies the property (\ref{GeneralPayoffForm1}) at time $t-1$ under the condition (\ref{formGi1}).\bigskip

 Let us solve the equation $\psi_{t-1}^{r,i}(I_{ij}^{k,m}) = \psi_{t-1}^{m,j}(I_{ij}^{r,m})$. We obtain that
 \bean
 I_{ij}^{r,m}=-\bar y_{t-1}^1+\frac{(1-w_{t-1}^{(m,j)})\tilde y^j-(1-w_{t-1}^{(r,i)})\tilde y_{t-1}^i}{w_{t-1}^{(r,i)}-w_{t-1}^{(m,j)}}+\frac{(-1)^m-(-1)^r}{w_{t-1}^{(r,i)}-w_{t-1}^{(m,j)}}.
 \eean 
 We deduce that: 
 \bea\label{psicase2}
\psi_{t-1}^{r,i}(I_{ij}^{r,m})&=&\frac{w_{t-1}^{(m,j)}(1-w_{t-1}^{(r,i)})}{w_{t-1}^{(m,)j}-w_{t-1}^{(r,i)}}\tilde y_{t-1}^i+\frac{-w_{t-1}^{(r,i)}(1-w_{t-1}^{(m,j)})}{w_{t-1}^{(m,j)}-w_{t-1}^{(r,i)}}\tilde y_{t-1}^j\\
&&+\frac{w_{t-1}^{(m,j)}(-1)^k-w_{t-1}^{(r,i)}(-1)^m}{w_{t-1}^{(m,j)}-w_{t-1}^{(r,i)}}\kappa_{t-1}\phi_{t-2}S_{t-1}.
 \eea
 
 When $w_{t-1}^{(r,i)}\leq 0$ and $w_{t-1}^{(m,j)}>0$, $\psi_{t-1}^{k,i}(I_{ij}^{k,m})$ is a convex function of $S_{t-1}$ by assumption of $\hat g_t^i$ and $\hat g_t^j$ defining $\tilde y_{t-1}^i$ and $\tilde y_{t-1}^j$ respectively. Precisely, we may write $\psi_{t-1}^{r,i}(I_{ij}^{r,m})=\hat g_{t-1}^{r,m,i,j}(S_{t-1})-\hat\mu_{t-1}^{(r,m,i,j)}\phi_{t-2}S_{t-1}$ where $\hat g_{t-1}^{r,m,i,j}$ is a convex function and $\hat\mu_{t-1}^{(r,m,i,j)}=\frac{w_{t-1}^{(r,i)}(-1)^m-w_{t-1}^{(m,j)}(-1)^r}{w_{t-1}^{(m,j)}-w^{(r,i)}}\kappa_{t-1}$ satisfies $1+\hat\mu_{t-1}^{(r,m,i,j)}>0$ if $w^{(m,j)}>0$ and $w_{t-1}^{(r,i)}\le 0$. We deduce that $\psi_{t-1}^{r,i}(I_{ij}^{r,m})=\psi_{t-1}^{r,i}(I_{ij}^{r,m})(\phi_{t-2},S_{t-1})$ is of type (\ref{GeneralPayoffForm1}) at time $t-1$ and satisfying \eqref{formGi-11}–\eqref{formGi1}.\smallskip
 
 In the case where $w_{t-1}^{(2,2N_t)}=0$, we have $p_{t-1}(g_t)=\psi_{t-1}^{m,j}(I_{j,2N_t}^{m,2})$ by  Lemma \ref{KeyTool} where $m\in\{1,2\}$ and $j\in\{2,\cdots,2N_t-1\}$ are such that $\epsilon^{m,j}$ is the largest negative slope. Therefore, we conclude similarly by (\ref{psicase2}) that $p_{t-1}(g_t)$ is a convex function of $S_{t-1}$. In particular, the coefficient $\hat \mu_{t-1}^{(m,2,j,2N_t)}=(-1)^m\kappa_{t-1}$ so that $1+\hat \mu_{t-1}^{(m,2,j,2N_t)}>0$.
\fdem

\begin{Proof} \label{Pr-OptStrat11} Proof of Proposition \ref{OptStrat11}.
\end{Proof}

   From Corollary \ref{coro-convex-infprice} and Theorem \ref{optimization} we have that  
\bean
p_{t-1}(g_t)(x) &= \inf_{\alpha\geq 0}\varphi_{t-1}(\alpha,a_{t-1}(\alpha)\vee\phi_{t-2},x), \quad x \in [m_{t-1}^{1+},m_{t-1}^{1-}].
\eean
Recall that the mappings $\varphi_{t-1}$ and $\hat\Phi_{\phi_{t-2}}$ are defined as    
\bean
\varphi_{t-1}(\alpha,a,x) &=& \hat\Phi_{\phi_{t-2}}(a)x + \bar y_{t-1}^1 - a m_{t-1}^1 + \alpha,\\
\hat\Phi_{\phi_{t-2}}(a)&=&a+\kappa_{t-1}|a-\phi_{t-2}|.
\eean
Here $x=S_{t-1}$ and recall that $m_{t-1}^{1\pm}=\alpha^1_{t-1}S_{t-1}(1\pm \kappa_{t-1})^{-1}$. Therefore, the condition  $x \in [m_{t-1}^{1+},m_{t-1}^{1-}]$ means that $ \alpha_{t-1}^1 \geq 1-\kappa_{t-1}$ and, also, the AIP condition is satisfied. By Theorem \ref{slopes order}, the above infimum is attained, i.e.  there exits $\alpha^*\ge 0$ such that
\bea \label{Minimal-price-General-case-1}  
    p_{t-1}(g_t) &=& \hat{g}_{t}^1(\alpha_{t-1}S_{t-1}) + a_{t-1}(\alpha^*) \vee \phi_{t-2}(1+\kappa_{t-1}-\alpha_{t-1}^1)S_{t-1} - \kappa_{t-1} \phi_{t-2} S_{t-1} + \tilde\alpha^*, \quad \quad,\quad\\
    p_{t-1}(g_t) &=&A_{t-1}^{(1)}(\alpha^*)+\hat{g}_{t}^1(\alpha_{t-1}S_{t-1})- \kappa_{t-1} \phi_{t-2} S_{t-1},
    \eea
    As $A_{t-1}^{(1)}$ is a piecewise affine convex function in $\alpha$,
      we may choose  the argmax $\alpha^*\in \arg\min_{\alpha\geq0} A_{t-1}^{(1)}(\alpha)$ such that $\alpha^* \in  \left(\arg\min_{e\in I}|A_{t-1}^{(1)}(e)-A_{t-1}^{(1)}(\alpha^*)|\right)^+$, i.e.
    \bean
    &&\alpha^* \in  \left(\arg\min_{e\in I}|A_{t-1}^{(1)}(e)-p_{t-1}(g_t)+\hat{g}_{t}^1(\alpha_{t-1}S_{t-1})- \kappa_{t-1} \phi_{t-2} S_{t-1}|\right)^+.
\eean

Let us establish that the super-replication property holds with the optimal strategy $\phi_{t-1}^{opt} = a_{t-1}(\alpha^*)\vee \phi_{t-2}$. Notice that $a_{t-1}(\alpha^*)\ge \phi_{t-2}$ if $\phi_{t-2}<a_{t-1}(0)$, see Proof \ref{optimization}. With $V_{t-1}=p_{t-1}(g_t)$ and $\tilde\alpha^*=\alpha^*1_{\{\phi_{t-2}<a_{t-1}(0)\}}$, the self-financing portfolio $(V_u)_{u=t-1,t}$ satisfies  
\bean
V_t &=& V_{t-1} + \phi_{t-1}^{opt} \Delta S_t - \kappa_{t-1} (\phi_{t-1}^{opt} - \phi_{t-2}) S_{t-1}\\
&=& \hat{g}_{t}^1(\alpha_{t-1}S_{t-1}) + \phi_{t-1}^{opt} (S_t- \alpha_{t-1}^1 S_{t-1}) + \tilde\alpha^*,\\
&=& \hat{g}_{t}^1(\alpha_{t-1}S_{t-1}) + \phi_{t-1}^{opt} (S_t - m_{t-1}^1) + \tilde\alpha^*,\\
&=&aS_t+b=:\gamma_{a,b}(S_t),
\eean 
where the coefficients
$a = \phi_{t-1}^{opt}$ and $b = \hat{g}_{t}^1(\alpha_{t-1}S_{t-1}) - \phi_{t-1}^{opt} m_{t-1}^1 + \tilde\alpha^*$ satisfy the inequalities  $ \gamma_{a,b}(x) \geq \bar g_{t}^i(x)$, for all  $x\in K_{t-1}^i$, $i=1,\cdots,N$, see   Proof of Theorem \ref{hbar}. Therefore, replacing $x\in K_{t-1}^i$ by $x(1+\hat\mu_t^i)$ where $x\in C_{t-1}$, we get that 
\bean
\gamma_{a,b}(x) &\ge& \hat g_t^i(x) - \hat\mu_t^i \phi_{t-1}^{opt} x,  \quad i = 1, \dots, N,\\
&\geq& \max_{i=1, \dots, N} \big( \hat g_t^i(x) - \hat\mu_t^i \phi_{t-1}^{opt} x \big), \quad x\in  C_{t-1}.
\eean 
Since $S_t \in C_{t-1}={\rm supp}_{\cF_{t-1}}S_t$ a.s. by definition of the conditional support, we deduce the desired inequality 
$V_t\ge \max_{i=1, \dots, N} \big( \hat g_t^i(S_t) - \hat\mu_t^i \phi_{t-1}^{opt} S_t \big)=g_t(\phi_{t-1}^{opt}, S_t)$.
\fdem

\begin{Proof}\label{Pr-OptStrat21}
    Proof of Proposition \ref{OptStrat21}.
\end{Proof}

     By Theorem \ref{slopes order}, the infimum super-hedging price is attained and given by
     \bean
     p_{t-1}(g_t)&=& \hat{g}_{t}^1(\alpha_{t-1}S_{t-1})+a_{t-1}(\alpha^*)(1-\alpha_{t-1}^1)S_{t-1}+\kappa_{t-1}|a_{t-1}(\alpha^*)-\phi_{t-2}|S_{t-1}+\alpha^*,\\
     &=&\hat{g}_{t}^1(\alpha_{t-1}S_{t-1})+A_{t-1}^{(2)}(\alpha^*),
     \eean
     where, as in  Proof \ref{Pr-OptStrat11}, we have $$\alpha^*\in \left(\arg\min_{e\in I_{t-1}}|A_{t-1}^{(2)}(e)-p_{t-1}(g_t)+\hat{g}_{t}^1(\alpha_{t-1}S_{t-1})|\right)^+\subseteq \arg\min_{\alpha\geq 0}A_{t-1}^{(2)}(\alpha).$$
     Let us establish that the super-replication property holds with the optimal strategy $\phi_{t-1}^{opt} = a_{t-1}(\alpha^*)$. With $V_{t-1}=p_{t-1}(g_t)$, the self-financing portfolio $(V_u)_{u=t-1,t}$ satisfies 
\bean
V_t &=& V_{t-1} + \phi_{t-1}^{opt} \Delta S_t - \kappa_{t-1} |\phi_{t-1}^{opt} - \phi_{t-2}| S_{t-1}\\
&=& \hat{g}_{t}^1(\alpha_{t-1}S_{t-1}) + \phi_{t-1}^{opt} (S_t- \alpha_{t-1}^1 S_{t-1}) + \alpha^*,\\
&=& \hat{g}_{t}^1(\alpha_{t-1}S_{t-1}) + \phi_{t-1}^{opt} (S_t - m_{t-1}^1) + \alpha^*,\\
&=&aS_t+b=:\gamma_{a,b}(S_t),
\eean 
where the coefficients
$a = \phi_{t-1}^{opt}$ and $b = \hat{g}_{t}^1(\alpha_{t-1}S_{t-1}) - \phi_{t-1}^{opt} m_{t-1}^1 + \tilde\alpha^*$ satisfy the inequalities  $ \gamma_{a,b}(x) \geq \bar g_{t-1}^i(x)$, for all  $x\in K_{t-1}^i$, $i=1,\cdots,N$, see   Proof \ref{hbar}. Therefore, replacing $x\in K_{t-1}^i$ by $x(1+\hat\mu_t^i)$ where $x\in C_{t-1}$, we get that 
\bean
\gamma_{a,b}(x) &\ge& \hat g_t^i(x) - \hat\mu_t^i \phi_{t-1}^{opt} x,  \quad i = 1, \dots, N,\\
&\geq& \max_{i=1, \dots, N} \big( \hat g_t^i(x) - \hat\mu_t^i \phi_{t-1}^{opt} x \big), \quad x\in  C_{t-1}.
\eean 
Since $S_t \in C_{t-1}={\rm supp}_{\cF_{t-1}}S_t$ a.s. by definition of the conditional support, we deduce the desired inequality 
$V_t\ge \max_{i=1, \dots, N} \big( \hat g_t^i(S_t) - \hat\mu_t^i \phi_{t-1}^{opt} S_t \big)=g_t(\phi_{t-1}^{opt}, S_t)$.
\fdem

\noindent   {\bf Data Availability Statement: The data that support the findings of this study are available in \url{https://finance.yahoo.com/quote/SPY/history.}

\bibliographystyle{plain} 

\begin{thebibliography}{100}


\bibitem{Albanese06}
 Albanese C. Small transaction cost asymptotics and dynamic hedging. European Journal of Operational Research, 185, 3, 1404, 1414, 2008.



\bibitem{Biagini23}
 Biagini F. Neural network approximation for superhedging prices. Mathematical Finance, 33, 2, 123–145, 2023.

\bibitem{BSV03}
 Bouchard B.,  Schweizer M, and  Vu L. Discrete-time super-replication under proportional transaction costs. Mathematical Finance, 13, 3, 363–386, 2003.

\bibitem{GM03}
 Gobet E. and  Miri M. Discrete-time hedging with transaction costs: convergence and numerical methods. Finance and Stochastics, 7, 2, 153–182, 2003.

\bibitem{CS} Campi L. and Schachermayer W. A super-replication theorem in Kabanov's model for transaction costs". Finance and Stochastics, 10, 4, 2006.

\bibitem{CL} Carassus L. and  Lepinette E. Pricing without
no-arbitrage condition in discrete time. J. Math. Anal. Appl., 505, 1, 21, 2022.

\bibitem{DKL}De Vallière D., Lepinette E. and Kabanov Y. Hedging of American options under transaction costs. Finance and Stochastics 13, 1, 105-119, 2009.

\bibitem{FK97}
 Föllmer F. and  Schweizer M. Hedging of contingent claims under transaction costs. Mathematical Finance, 7, 2, 117–145, 1997.


\bibitem{ELM}El Mansour M. and  Lepinette E. Conditional interior and conditional closure of a random sets. Journal of Optimization Theory and Applications, 187, 356-369, 2020.

\bibitem{ElMansour22}
 El Mansour M. and  Lepinette E. Robust discrete-time super-hedging strategies under AIP condition. MSIA, MathematicS In Action,11, 193-212, 2022.

\bibitem{GLR} Guasoni P., Lepinette E. and   Rasonyi M. The fundamental theorem of asset pricing under transaction costs.  Finance and Stochastics. 16, 4, 741-777, 2012. 

\bibitem{KS} Kabanov Y. and Safarian M. On Leland’s strategy of option pricing with transactions costs. Finance and Stochastics, 1, 239-250, 1997.

\bibitem{KL}Kabanov Y. and  Denis(Lepinette) E. Mean square error for the Leland-Lott hedging strategy: convex pay-off. Finance and Stochastics, 14, 4, 626-667, 2010.

\bibitem{Koehl99}
 Koehl P.F. On super-replication in discrete time under transaction costs. Mathematical Finance, 9, 3, 207–227, 1999.


\bibitem{Koehl01}
 Koehl P.F. On super-replication in discrete time under transaction costs. SIAM Journal on Control and Optimization, 39, 3, 839–860, 2001.

\bibitem{Leland85}
 Leland H. Option pricing and replication with transaction costs. Journal of Finance, 40, 5, 1283–1301, 1985.

\bibitem{Lep1}Lepinette E.
Modified Leland's strategy for constant transaction costs rate. Mathematical Finance, 22, 4, 741-752, 2012. 

\bibitem{Lep0} Denis(Lepinette) E. Approximate hedging of contingent claims under transaction costs.  Applied Mathematical Finance, 17, 491-518, 2010.

\bibitem{Lep2} Lepinette E., Vu D.T. Dynamic programming principle and computable prices in financial market models with transaction costs. Journal of Mathematical Analysis and Applications, 524, 127068, 2023.



\bibitem{Perg} Serguei Pergamenshchikov. Limit theorem for Leland's strategy. Annals of Applied Probability, 13, 3, 1099-1118, 2003. 

\bibitem{rw} Rockafellar R.T.,  Wets R.J.B. and  Wets M. Variational analysis. In Grundlehren der mathematischen Wissenschaften, 1998.

\bibitem{Rouge}
Rouge R. and  El Karoui N. Hedging option portfolios in the presence of transaction costs. Mathematical Finance, 13, 1, 55–79, 2003.

\bibitem{PT1}Pergamenshchikov S. and Thai N. Approximate hedging with constant proportional transaction costs in financial markets with jumps. Theory of Probability and Its Applications, 65, 2, 224-248, 2020.

\bibitem{PT2}Pergamenshchikov S. and Thai S.
Approximate hedging problem with transaction costs in stochastic volatility markets. Mathematical Finance, 27, 3, 832-865, 2017.

\bibitem{VZ} Vinter R.B. and Zheng H. Some finance problems solved with nonsmooth optimization techniques, JOTA, 119, 1, 2003.

\bibitem{WG} Guasoni P. and Weber M.H. Rebalancing multiple assets with mutual price impact. Journal of Optimization Theory and Applications, 179, 2,  618-653, 2018.
\end{thebibliography}

\end{document}